\documentclass[twoside,11pt]{article}

\usepackage{blindtext}

%

%
%
%

\usepackage{jmlr2e}

\usepackage{amssymb}
\usepackage{tikz}
\usepackage{xr}
\usetikzlibrary{arrows, calc, decorations.pathmorphing, backgrounds, positioning, fit, petri, automata}\usepackage{algorithmicx}
\usepackage{algcompatible}
\usepackage{algorithm}
\usepackage[shortlabels]{enumitem}
\usepackage{hyperref}
\hypersetup{ 
  colorlinks,
  citecolor=blue,
  linkcolor=red,
  urlcolor=magenta}

\usepackage{mathtools}
\usepackage{graphicx}
\usepackage{subcaption}
\usepackage{amsmath,bm,bbm}
\usepackage{rotating}
\usepackage{mathrsfs}
\usepackage{chngcntr}
\usepackage{xcolor}         
\usepackage{enumitem}
\usepackage{algorithm}
\usepackage{algpseudocode}
\usepackage{diagbox}


\newcommand{\transpose}{^{\mathrm{T}}}
\newcommand{\inverse}{^{-1}}
\newcommand{\halfpower}{^{1/2}}
\newcommand{\invhalfpower}{^{-1/2}}
\newcommand{\frobenius}{_{\mathrm{F}}}
\newcommand{\expect}{\mathbb{E}}
\newcommand{\prob}{\mathbb{P}}
\newcommand{\indicator}{\mathbbm{1}}
\newcommand{\zero}{{\mathbf{0}}}
\newcommand{\eps}{\varepsilon}
\newcommand{\Optilde}{\widetilde{O}_{\prob}}

\DeclareMathOperator*{\argmin}{arg\,min}
\DeclareMathOperator*{\argmax}{arg\,max}

\newcommand{\balpha}{\bm{\alpha}}
\newcommand{\bbeta}{\bm{\beta}}
\newcommand{\bgamma}{\bm{\gamma}}

\newcommand{\btheta}{\bm{\theta}}


\newcommand{\bDelta}{\bm{\Delta}}
\newcommand{\bSigma}{\bm{\Sigma}}


\newcommand{\bb}{\mathbf{b}}

\newcommand{\be}{\mathbf{e}}
\newcommand{\bg}{\mathbf{g}}

\newcommand{\bq}{\mathbf{q}}
\newcommand{\br}{\mathbf{r}}
\newcommand{\bs}{\mathbf{s}}
\newcommand{\bt}{\mathbf{t}}
\newcommand{\bu}{\mathbf{u}}
\newcommand{\bv}{\mathbf{v}}

\newcommand{\bx}{\mathbf{x}}
\newcommand{\by}{\mathbf{y}}
\newcommand{\bz}{\mathbf{z}}
\newcommand{\bA}{\mathbf{A}}
\newcommand{\bB}{\mathbf{B}}

\newcommand{\bE}{\mathbf{E}}

\newcommand{\bG}{\mathbf{G}}
\newcommand{\bH}{\mathbf{H}}
\newcommand{\bI}{\mathbf{I}}

\newcommand{\bP}{\mathbf{P}}

\newcommand{\bR}{\mathbf{R}}
\newcommand{\bS}{\mathbf{S}}

\newcommand{\bU}{\mathbf{U}}
\newcommand{\bV}{\mathbf{V}}
\newcommand{\bW}{\mathbf{W}}
\newcommand{\bX}{\mathbf{X}}
\newcommand{\bY}{\mathbf{Y}}
\newcommand{\bZ}{\mathbf{Z}}
\newcommand{\calA}{{\mathcal{A}}}

\newcommand{\calE}{{\mathcal{E}}}

\newcommand{\calL}{{\mathcal{L}}}

\newcommand{\calX}{{\mathcal{X}}}


\usepackage{lastpage}
\jmlrheading{23}{2022}{1-\pageref{LastPage}}{1/21; Revised 5/22}{9/22}{21-0000}{Dingbo Wu and Fangzheng Xie}


\ShortHeadings{Random Graphs with Surrogate Likelihood}{Wu and Xie}
\firstpageno{1}

\begin{document}

\title{Statistical Inference of Random Graphs With a Surrogate Likelihood Function}

\author{\name Dingbo Wu \email dw16@iu.edu \\
       \addr Department of Statistics\\
       Indiana University\\
       Bloomington, IN 47405, USA
       \AND
       \name Fangzheng Xie \email fxie@iu.edu \\
       \addr Department of Statistics\\
       Indiana University\\
       Bloomington, IN 47405, USA
       }

\editor{My editor}

\maketitle

\begin{abstract}
Spectral estimators have been broadly applied to statistical network analysis, but they do not incorporate the likelihood information of the network sampling model.
This paper proposes a novel surrogate likelihood function for statistical inference of a class of popular network models referred to as random dot product graphs.
In contrast to the structurally complicated exact likelihood function, the surrogate likelihood function has a separable structure and is log-concave yet approximates the exact likelihood function well. From the frequentist perspective, we study the maximum surrogate likelihood estimator and establish the accompanying theory. We show its existence, uniqueness, large sample properties, and that it improves upon the baseline spectral estimator with a smaller sum of squared errors. Furthermore, we derive the second-order bias of the proposed estimator and gain insight into why it outperforms some of the existing estimators. A computationally convenient stochastic gradient descent algorithm is designed to find the maximum surrogate likelihood estimator in practice. From the Bayesian perspective, we establish the Bernstein--von Mises theorem of the posterior distribution with the surrogate likelihood function and show that the resulting credible sets have the correct frequentist coverage. 
The empirical performance of the proposed surrogate-likelihood-based methods is validated through the analyses of simulation examples and a real-world Wikipedia graph dataset. 
\end{abstract}

\begin{keywords}
  Bernstein--von Mises theorem, Maximum surrogate likelihood estimation, Random dot product graph, Stochastic gradient descent
\end{keywords}
\section{Introduction}
\label{section:introduction}
In the contemporary world of data science, network data are pervasive in a broad range of applications such as sociology \citep{lacetera-macis-mele-2016,young-scheinerman-2007}, econometrics \citep{mele-2017,mele2019spectral}, and neuroscience \citep{tang2019connectome}. 
Statistical network analysis is also an interdisciplinary area of research connected with many other fields, including computer science, machine learning, combinatorics, applied mathematics, and physics.
To model and analyze network data, various random graph models have been proposed in the literature, including the Erd\"os-R\'enyi graph \citep{erdHos1960evolution}, the stochastic block model \citep{holland-laskey-leinhardt-1983}, and the latent space model \citep{hoff-raftery-handcock-2002}. 

In this paper, we focus on random dot product graphs \citep{young-scheinerman-2007}, a class of random graph models
that are popular due to its simple architecture and flexibility.
On one hand, the edge probability matrix of a random dot product graph has a low-rank structure, which motivates, among others, the use of spectral methods in statistical network analysis.
On the other hand, the random dot product graph model is quite flexible because it not only encompasses the popularly used stochastic block models \citep{holland-laskey-leinhardt-1983,abbe-bandeira-hall-2016} and their offspring \citep{airoldi-blei-fienberg-xing-2008, binkiewicz-vogelstein-rohe-2017, lyzinski2017HSBM, sengupta-chen-2018}, but can also approximate general latent position graphs when the rank of the edge probability matrix grows with the number of vertices at a certain rate \citep{gao-lu-zhou-2015,tang-sussman-priebe-2013}.

Because the adjacency matrix has a low expected rank, spectral quantities such as the leading eigenvectors of the adjacency matrix and  those of the normalized Laplacian matrix, have been extensively used for low-rank random graph inference.
In particular, it is well known that the rows of these eigenvectors encode the cluster membership information when the underlying graph is generated from a stochastic block model \citep{abbe-fan-wang-zhong-2020, lyzinski-sussman-tang-athreya-priebe-2014, lei-rinaldo-2015, rohe-chatterjee-yu-2011, sussman-tang-fishkind-priebe-2012}.
There has been substantial recent development on the theory for spectral methods and the corresponding subsequent inference tasks in random dot product graphs. For an incomplete list of reference, see \cite{athreya-priebe-tang-2016, sussman-tang-priebe-2014, sarkar-bickel-2015, tang-priebe-2018, tang-sussman-priebe-2013, tang2017SPtest, tang2017NPtest}.
The readers are also referred to the survey paper \cite{rdpg-survey} for a review of the recent advances in this topic.

It has been pointed out \citep{xie-xu-2020-bayes, xie-xu-2021-onestep, xie-2022-entrywise} that, although the spectral methods for random dot product graphs have gained marvelous success and been broadly applied, the Bernoulli likelihood information contained in the graph distribution has been neglected. This observation has motivated the development of likelihood-based inference for random dot product graphs.
\cite{xie-xu-2020-bayes} proposed a fully Bayesian approach for estimating the latent positions in random dot product graphs, referred to as posterior spectral embedding, and established its global minimax optimality. 
\cite{li2023statistical} studied the maximum likelihood estimation for a general class of latent space networks and established the asymptotic normality of the resulting estimator under a slightly different setup.
\cite{xie-xu-2021-onestep} proposed a novel one-step procedure, which lead to a one-step estimator that took advantage of the Bernoulli likelihood information of the sampling model through the score function and the Fisher information matrix, to estimate random dot product graphs from the frequentist perspective. There, the authors further established the asymptotic efficiency of the one-step estimator and its smaller asymptotic sum of squared errors compared to that of the spectral estimators. 
The sparsity condition imposed in \cite{xie-xu-2021-onestep} was significantly weakened by \cite{xie-2022-entrywise} through a cleverly-designed leave-one-out analysis and delicate concentration analyses. 
Later, \cite{tang-cape-priebe-2022} applied the idea of the one-step refinement of spectral methods to stochastic block models when the block probability matrix is rank deficient.

Despite the success of the one-step estimator, a central question regarding likelihood-based inference for random dot product graphs remains open: What is the behavior of the frequentist maximum likelihood estimator? Also, a related question is: What is the behavior of the Bayes estimator? Efforts attempting to address these two questions aim to gain deeper insight into the likelihood-based inference for random dot product graphs from the frequentist and the Bayesian perspective, respectively. These two questions are also closely related through the Bernstein--von Mises phenomenon (see, for example, Section 10.2 in \citealp{van2000asymptotic}). Here, the major technical barrier is the complicated structure of the parameter space for the latent positions.
In this paper, we partially answer the aforementioned two questions by resorting to a cleverly-designed surrogate likelihood function that simplifies the parameter space enormously. Our work features the following novel contributions:
Firstly, the surrogate likelihood function has a separable structure, is log-concave, and the associated parameter space for the latent positions is a convex relaxation of the original latent space. These features greatly facilitate both the theoretical analyses and the related practical computations.
Secondly, we establish the existence, uniqueness, and asymptotic efficiency of the frequentist maximum surrogate likelihood estimator under the minimal sparsity condition. In particular, similar to the one-step estimator, the maximum surrogate likelihood estimator improves upon the baseline spectral estimators with a smaller sum of squared errors.
Furthermore, we have rigorously derived the second-order bias formulae of the maximum surrogate likelihood estimator and the one-step estimator, thereby providing insight into why the former typically outperforms the latter in certain finite sample problems.
Thirdly, we design a computationally efficient stochastic gradient descent algorithm for the maximum surrogate likelihood estimator with adaptive step sizes.
Fourthly, regarding the Bayes procedure, we establish the Bernstein--von Mises theorem for the posterior distribution with the surrogate likelihood function and show that the resulting credible sets have the correct frequentist coverage probabilities.


The remaining part of the article is structured as follows. In Section \ref{section:background-SL}, we review the background of random dot product graphs and introduce the surrogate likelihood function. Section \ref{sec:MSLE} elaborates on the theoretical properties and the computational algorithm of the maximum surrogate likelihood estimation. In Section \ref{sec:Bayesian_Estimation_Surrogate_Likelihood}, we establish the large sample properties of the Bayes procedure with the surrogate likelihood function. Section \ref{section:numerical} demonstrates the empirical performance of the proposed methods through simulation examples and the analysis of a real-world Wikipedia network dataset. We conclude the paper with a discussion in Section \ref{section:last}.

\noindent
\textit{Notations}: Let $[n]$ denote the set of consecutive integers from $1$ to $n$: $[n]=\{1,\ldots,n\}$. The symbol $\lesssim_\delta$ means an inequality up to a constant depending on $\delta$, that is, $a\lesssim_\delta b$ if $a\leq C_\delta b$ for some constant $C_\delta > 0$ depending on $\delta$; a similar definition also applies to the symbol $\gtrsim_\delta$. The notation $\|\bx\|$ denotes the Euclidean norm of a vector $\bx=[x_1,\ldots,x_d]\transpose\in\mathbb{R}^d$, that is, $\|\bx\|=(\sum_{k=1}^d x_k^2)\halfpower$. The $d\times d$ identity matrix is denoted by $\bI_d$. The notation $\mathbb{O}(n,d)=\{\bU\in\mathbb{R}^{n\times d}:\bU\transpose\bU=\bI_d\}$ denotes the set of all orthonormal $d$-frames in $\mathbb{R}^n$, where $d\leq n$, and we write $\mathbb{O}(d) = \mathbb{O}(d,d)$. For a matrix $\bX=[x_{ik}]_{n\times d}$, $\sigma_k(\bX)$ denotes its $k$th largest singular value, and when $\bX$ is square and symmetric, $\lambda_k(\bX)$ denotes its $k$th largest eigenvalue in magnitude. Matrix norms with following definitions are used: the spectral norm $\|\bX\|_2 = \sigma_1(\bX)$, the Frobenius norm $\|\bX\|\frobenius = (\sum_{i=1}^n\sum_{k=1}^d x_{ik}^2)\halfpower$, the matrix infinity norm $\|\bX\|_\infty=\max_{i\in[n]}\sum_{k=1}^d|x_{ik}|$, and the two-to-infinity norm $\|\bX\|_{2\to\infty}=\max_{i\in[n]}(\sum_{k=1}^d x_{ik}^2)\halfpower$. In particular, these norm notations apply to any Euclidean vector $\bx\in\mathbb{R}^d$ viewed as a $d\times 1$ matrix. Given two symmetric positive semidefinite matrices $\bA,\bB$ of the same dimension, we write $\bA\preceq\bB$ ($\bA\succeq \bB$, respectively) if $\bB - \bA$ ($\bA - \bB$, respectively) is positive semidefinite. 

\section{Background and the Surrogate Likelihood}
\label{section:background-SL}
\subsection{Background on random dot product graphs}
\label{subsection:background-on-rdpg}
We begin by briefly reviewing the background on random dot product graphs and adjacency spectral embedding. Consider a graph with $n$ vertices labeled as $[n] = \{1,\ldots,n\}$. Let $\calX$ be a subset of $\mathbb{R}^d$ such that $\bx_1\transpose{}\bx_2\in (0, 1)$ for all $\bx_1,\bx_2\in\calX$, where $d$ is fixed and $d\leq n$, and let $\rho_n\in (0, 1]$ be a sparsity factor. Each vertex $i\in [n]$ is associated with a vector $\bx_i\in\calX$, referred to as the latent position for vertex $i$. We say that a symmetric random matrix $\bA = [A_{ij}]_{n\times n}\in\{0, 1\}^{n\times n}$ is an adjacency matrix generated by a random dot product graph with latent position matrix $\bX = [\bx_1,\ldots,\bx_n]\transpose{}$ and sparsity factor $\rho_n$, denoted by $\bA\sim\mathrm{RDPG}(\rho_n^{1/2}\bX)$, if the random variables $A_{ij}\sim\mathrm{Bernoulli}(\rho_n\bx_{i}\transpose{}\bx_{j})$ independently for all $i,j\in [n]$, $i\leq j$, and $A_{ij} = A_{ji}$ for all $i > j$. The distribution of $\bA$ can thus be written as $p_{\bX}(\bA) = \prod_{i\leq j}(\rho_n\bx_i\transpose\bx_j)^{A_{ij}}(1-\rho_n\bx_i\transpose\bx_j)^{1-A_{ij}}$. 
The sparsity factor $\rho_n$ fundamentally controls the overall average graph expected degree through $n\rho_n$ when the entries of $\bX\bX\transpose$ are bounded away from $0$ and $\infty$.

\begin{remark}[Deterministic versus stochastic latent positions]
In this work, we consider the setup where the latent positions $\bx_1,\ldots,\bx_n$ are deterministic parameters to be estimated. Another slightly different modeling approach is to consider $\bx_1,\ldots,\bx_n$ as independent and identically distributed latent random variables
(see, for example, \citealp{athreya-priebe-tang-2016, tang2017NPtest, tang-priebe-2018}). This random formulation of the latent positions introduces implicit homogeneity and is connected to the infinite exchangeable random graphs \citep{janson2008graph}. The same homogeneity condition was retained in \cite{xie-xu-2021-onestep} using a Glivenko--Cantelli type condition when $\bx_1,\ldots,\bx_n$ are deterministic. The latter Glivenko--Cantelli type condition is also relaxed in the current work as we only require that $\sigma_d(\bX) > 0$ (see Remark \ref{remark:identifiability} below).
\end{remark}

\begin{remark}[Nonidentifiability]
\label{remark:identifiability}
The latent position matrix $\bX$ is not uniquely identified in the following two senses. Firstly, any low-rank positive semidefinite connection probability matrix $\bP = \bX\bX\transpose$ can have different factorizations because for any orthogonal matrix $\bW\in\mathbb{O}(d)$, $\bX\bX\transpose = (\bX\bW)(\bX\bW)\transpose$. Secondly, for any $d'>d$ and any latent position matrix $\bX\in\mathbb{R}^d$, there exists another matrix $\bX'\in\mathbb{R}^{d'}$ such that $\bX\bX\transpose=\bX'(\bX')\transpose$. The latter source of non-identifiability can be removed by requiring that $\sigma_d(\bX)>0$, while the former source is inevitable without further constraints. Thus, any estimator of the latent position matrix $\bX$ can only recover it up to an orthogonal transformation. 
\end{remark}

\begin{example}[Stochastic block model]
Random dot product graphs have connections with the popular stochastic block model \citep{holland-laskey-leinhardt-1983}. Consider a graph with $n$ vertices that are partitioned into $K$ communities, where $K$ is assumed to be much smaller than $n$. Let $\tau:[n]\to[K]$ be a cluster assignment function that assigns each vertex to a unique community. Let $\bB=[B_{kl}]_{K\times K}\in(0,1)^{K\times K}$ be a symmetric probability matrix and $A_{ij}$ be the binary indicator of the existence of an edge between vertices $i$ and $j$. Then the stochastic block model specifies that $A_{ij}\sim \mathrm{Bernoulli}(B_{\tau(i)\tau(j)})$ independently for all $i,j\in[n]$, $i \leq j$, and $A_{ij}=A_{ji}$ for all $i > j$. By converting the community assignment to a matrix $\bZ=[1\{\tau(i)=k\}]_{n\times K}$, we see that the expected adjacency matrix $\bZ\bB\bZ\transpose$ is symmetric and of low rank. Furthermore, if $\bB$ is positive semidefinite with rank $d\leq K$ and can be factorized as $\bB=\bV\bV\transpose$ for a $K\times d$ matrix $\bV$, then $\bA$ can be seen as an adjacency matrix generated by the random dot product graph with latent position matrix $\bX=\bZ\bV$, that is, $\bA\sim\mathrm{RDPG}(\bZ\bV)$.
\end{example}

Motivated by the low-rank structure of random dot product graphs, \cite{sussman-tang-fishkind-priebe-2012} proposed to estimate the latent position matrix $\bX$ by solving the least squares problem
    $
    \widetilde{\bX} = \argmin_{\bX\in\mathbb{R}^{n\times d}} \|\bA-\bX\bX\transpose\|_\mathrm{F}^2.
    $
The interpretation is that $\widetilde{\bX}\widetilde{\bX}\transpose$ can be viewed as the projection of the data matrix $\bA$ onto the space of all $n\times n$ rank-$d$ positive semidefinite matrices with regard to the Frobenius norm distance.
The solution $\widetilde{\bX}$ is referred to as the adjacency spectral embedding of $\bA$ into $\mathbb{R}^d$, and can be computed as the matrix of eigenvectors associated with the top $d$ eigenvalues of $\bA$, scaled by the square roots of the corresponding eigenvalues \citep{eckart-young-1936}. The asymptotic properties of $\widetilde{\bX}$ have been established in the literature \citep{sussman-tang-priebe-2014, athreya-priebe-tang-2016, tang-priebe-2018}. Notably, \citet{athreya-priebe-tang-2016}, \cite{tang-priebe-2018}, and \cite{xie-xu-2021-onestep} have shown that each row of the adjacency spectral embedding converges to a mean-zero multivariate normal distribution after appropriate standardization. 


\subsection{The surrogate likelihood function}
\label{subsection:intro-surr-lik}
In this subsection, we derive the surrogate likelihood function for the random dot product graph model. The motivation is that the exact likelihood function has a complicated structure, bringing challenges for developing the theory of maximum likelihood estimation. The difficulty partially comes from the fact that the random dot product graph model belongs to a curved exponential family, and the theory of the maximum likelihood estimation is much more difficult in curved exponential families than in canonical ones (see, for example, Section 2.3 in \citealp{bickel2007mathematical}). 
Also, the boundary of the parameter space renders the maximum likelihood estimation intractable, both computationally and analytically. To be more specific, consider the log-likelihood function 
\begin{align*}
\ell_\bA(\bX) = \sum_{1\leq i\leq j\leq n}\{A_{ij}\log(\rho_n\bx_i\transpose\bx_j) + (1 - A_{ij})\log(\rho_n\bx_i\transpose\bx_j)\}.
\end{align*}
The parameter space is defined by $\{\bX = [\bx_1,\ldots,\bx_n]\transpose\in\mathbb{R}^{n\times d}:0 < \bx_i\transpose\bx_j < 1\text{ for all }i,j\}$, and over the boundary, the log-likelihood function has an unbounded gradient. This is in sharp contrast with the requirement in \cite{li2023statistical}, where the log-likelihood functions are required to have bounded derivatives up to the second order over the entire parameter space. These challenges motivate the development of a more computationally and analytically tractable surrogate likelihood approach. 

To distinguish a generic latent position $\bx_i\in\mathbb{R}^d$ and its true value associated with the data generating distribution, let $\bx_{0i}$ denote the ground truth of $\bx_i$, $i\in [n]$, and $\bX_0 = [\bx_{01},\ldots,\bx_{0n}]\transpose$. Let us begin by considering the log-likelihood function of a single $\bx_i$ when the remaining latent positions $(\bx_{0j})_{j\neq i}$ are accessible:
\begin{equation}
\label{eqn:loglikelihood_ell_0in}
\begin{aligned}
\ell_{0in}(\bx_i)& = \sum_{j\neq i}^n\{A_{ij}\log(\rho_n\bx_i\transpose\bx_{0j}) + (1 - A_{ij})\log(1 - \rho_n\bx_i\transpose\bx_{0j})\}\\
&\quad + \{A_{ii}\log(\rho_n\bx_i\transpose\bx_i) + (1 - A_{ij})\log(1 - \rho_n\bx_i\transpose\bx_i)\}.
\end{aligned}
\end{equation}
We refer to $\ell_{0in}(\bx_i)$ in \eqref{eqn:loglikelihood_ell_0in} as the oracle log-likelihood function because it requires the true values of the remaining $\bx_j$'s with $j\neq i$. 
Theorem 2 in \cite{xie-xu-2021-onestep} established the consistency and asymptotic normality of the maximizer of the oracle log-likelihood function $\ell_{0in}(\bx_i)$ in \eqref{eqn:loglikelihood_ell_0in}. Nevertheless, the oracle log-likelihood is not computable because $(\bx_{0j})$ are not accessible in practice. Following the idea in \cite{xie-xu-2021-onestep}, we consider replacing the unknown latent positions by the corresponding rows of the adjacency spectral embedding. Formally, let $\widetilde{\bx}_j$ be the $j$th row of the adjacency spectral embedding $\widetilde{\bX}$, $j\in [n]$. Then we obtain the following approximation to the oracle log-likelihood:
\begin{align}
\label{eqn:loglikelihood_approximation}
\ell_{0in}(\bx_i) \approx \sum_{j = 1}^n\{A_{ij}\log(\rho_n^{1/2}\bx_i\transpose\widetilde{\bx}_j) + (1 - A_{ij})\log(1 - \rho_n^{1/2}\bx_i\transpose\widetilde{\bx}_j)\}.
\end{align}
Note that the last term in $\ell_{0in}$ is replaced by $A_{ii}\log(\rho_n^{1/2}\bx_i\transpose\widetilde{\bx}_i) + (1 - A_{ii})\log(1 - \rho_n^{1/2}\bx_i\transpose\widetilde{\bx}_i)$ for convenience, which is immaterial. This approximation step is motivated by the uniform consistency of the adjacency spectral embedding: There exists a $d\times d$ orthogonal $\bW$ such that $\|\widetilde{\bX}\bW - \rho_n^{1/2}\bX_0\|_2 = O\{\sqrt{(\log n)/n}\}$ with high probability \citep{lyzinski-sussman-tang-athreya-priebe-2014,xie-2022-entrywise}. 

With the approximation in \eqref{eqn:loglikelihood_approximation}, the constraints for the latent position $\bx_i$ become a system of linear inequalities: $0 < \rho_n^{1/2}\bx_i\transpose\widetilde{\bx}_j < 1$ for all $j\in [n]$. Geometrically, these constraints correspond to a convex polyhedron. Namely, given any vector $\bx_i\in\mathbb{R}^d$, checking whether $\bx_i$ is in such a convex polyhedron requires $O(n)$ operations, so that the relevant computation could be cumbersome. We now resolve this issue by applying a quadratic Taylor approximation to the terms $\log(\rho_n^{1/2}\bx_i\transpose\widetilde{\bx}_j)$ and relax the parameter space for $\bx_i$. Here we can drop the sparsity factor $\rho_n$ without loss of generality. Formally, write $g_{ij}(\bx_i) = A_{ij}\log(\bx_i\transpose\widetilde{\bx}_j)$. Then a quadratic Taylor approximation to $g_{ij}$ at $\bx_i = \widetilde{\bx}_i$ leads to 
\begin{align}\label{eqn:log_Taylor_approximation}
g_{ij}(\bx_i)= g_{ij}(\widetilde{\bx}_i) + \frac{A_{ij}\widetilde{\bx}_j\transpose(\bx_i - \widetilde{\bx}_i)}{\widetilde{\bx}_i\transpose\widetilde{\bx}_j} - \frac{A_{ij}(\bx_i - \widetilde{\bx}_i)\transpose\widetilde{\bx}_j\widetilde{\bx}_j\transpose(\bx_i - \widetilde{\bx}_i)}{2(\widetilde{\bx}_i\transpose\widetilde{\bx}_j)^2} + \mbox{remainder}.
\end{align}
Meanwhile, it is also conceivable that
\begin{align}\label{eqn:log_Hessian_approximation}
\sum_{j = 1}^n\frac{A_{ij}}{2(\widetilde{\bx}_i\transpose\widetilde{\bx}_j)^2}(\bx_i - \widetilde{\bx}_i)\transpose\widetilde{\bx}_j\widetilde{\bx}_j\transpose(\bx_i - \widetilde{\bx}_i) \approx \sum_{j = 1}^n\frac{1}{2\widetilde{\bx}_i\transpose\widetilde{\bx}_j}(\bx_i - \widetilde{\bx}_i)\transpose\widetilde{\bx}_j\widetilde{\bx}_j\transpose(\bx_i - \widetilde{\bx}_i)
\end{align}
because $\expect_0(A_{ij}) = \rho_n\bx_{0i}\transpose\bx_{0j}\approx \widetilde{\bx}_i\transpose\widetilde{\bx}_j$. Hence, ignoring the constant terms that are free of $\bx_i$, combining the approximations in \eqref{eqn:loglikelihood_approximation}, \eqref{eqn:log_Taylor_approximation}, and \eqref{eqn:log_Hessian_approximation} leads to the following surrogate log-likelihood function
\begin{equation}\label{def:surr-lik-fun}
\widetilde{\ell}_{in}(\bx_i) = \sum_{j=1}^n
\left\{
\frac{A_{ij} \widetilde{\bx}_j\transpose \bx_i}{\widetilde{\bx}_i\transpose\widetilde{\bx}_j} + \widetilde{\bx}_j\transpose \bx_i - \frac{1}{2\widetilde{\bx}_i\transpose\widetilde{\bx}_j}\bx_i\transpose\widetilde{\bx}_j\widetilde{\bx}_j\transpose \bx_i + (1-A_{ij})\log(1-\bx_i\transpose \widetilde{\bx}_j)
\right\}.
\end{equation}
Therefore, 
by Cauchy--Schwarz inequality, the the parameter space for $\bx_i$ associated with the surrogate likelihood of vertex $i$ can be taken as the unit ball $\{\bx_i\in\mathbb{R}^d:\|\bx_i\|\leq 1\}$ for all $i\in[n]$ when $\max_{j\in [n]}\|\widetilde{\bx}_j\|_2 < 1$ (which holds with high probability). Consequently, we relax the original complicated parameter space $\{\bx_i\in\mathbb{R}^d:0 < \bx_i\transpose\widetilde{\bx}_j < 1,j\in [n]\}$ to a simple unit ball, which is much more tractable to work with.
Moreover, the surrogate likelihood function has a separable structure because $\widetilde{\ell}_{in}(\bx_i)$ does not involve $\bx_j$ for $j\neq i$. This convenience enables parallelization when related computation is requested.
In addition, a simple algebra shows that the surrogate likelihood function is log-concave, a highly desired feature when optimization and Monte Carlo sampling are needed.


\subsection{Comparison with the one-step estimator}
\label{sub:comparison_with_OSE}
Recently, \cite{xie-xu-2021-onestep} proposed a one-step estimator $\widehat{\bX}^{\mathrm{OS}}=[\widehat\bx_1^{\mathrm{OS}}, \ldots, \widehat\bx_n^{\mathrm{OS}}]\transpose$ for random dot product graphs that improves upon the adjacency spectral embedding:
\begin{equation}\label{def:ose}
    \widehat\bx_i^{\mathrm{OS}} = \widetilde\bx_i + \left\{\frac{1}{n}\sum_{j=1}^n\frac{\widetilde\bx_j\widetilde\bx_j\transpose}{\widetilde\bx_i\transpose\widetilde\bx_j(1-\widetilde\bx_i\transpose\widetilde\bx_j)}\right\}^{-1} \left\{\frac{1}{n}\sum_{j=1}^n\frac{(A_{ij}-\widetilde\bx_i\transpose\widetilde\bx_j)\widetilde\bx_j}{\widetilde\bx_i\transpose\widetilde\bx_j(1-\widetilde\bx_i\transpose\widetilde\bx_j)}\right\}, \quad i\in[n].
\end{equation}
The one-step estimator originates from a one-step updating scheme of the Newton-Raphson method for maximizing the log-likelihood function with the initial guess being the adjacency spectral embedding (see, e.g., Section 5.7 in \citealp{van2000asymptotic}). It is clear from the construction that the one-step estimator takes advantage of the likelihood information of the sampling distribution through the Fisher information matrix and the score function. 

In Section \ref{subsection:intro-surr-lik}, we have shown the derivation of the surrogate log-likelihood function by applying a quadratic Taylor approximation to the logarithm function $\log(\bx_i\transpose\widetilde{\bx}_j)$. We now show that the same approximation treatment applied to the entire function in \eqref{eqn:loglikelihood_approximation} results in an approximate log-likelihood function whose maximizer is exactly the one-step estimator. Formally, applying a second-order Taylor expansion to the term $\log(1 - \bx_i\transpose\widetilde{\bx}_j)$ at $\bx_i = \widetilde{\bx}_i$ yields
\begin{align}\label{eqn:OSE_SL_approximation_I}
\log\frac{(1 - \bx_i\transpose\widetilde{\bx}_j)}{(1 - \widetilde{\bx}_i\transpose\widetilde{\bx}_j)}
& = - \frac{\widetilde{\bx}_j\transpose(\bx_i - \widetilde{\bx}_i)}{1 - \widetilde{\bx}_i\transpose\widetilde{\bx}_j} - 
\frac{(\bx_i - \widetilde{\bx}_i)\transpose\widetilde{\bx}_j\widetilde{\bx}_j\transpose(\bx_i - \widetilde{\bx}_i)}{2(1 - \widetilde{\bx}_i\transpose\widetilde{\bx}_j)^2} + \mbox{remainder}. 
\end{align}
Following the idea in \eqref{eqn:log_Hessian_approximation}, we can also conceive the following approximation:
\begin{align}\label{eqn:OSE_SL_approximation_II}
\sum_{j = 1}^n
\frac{(1 - A_{ij})(\bx_i - \widetilde{\bx}_i)\transpose\widetilde{\bx}_j\widetilde{\bx}_j\transpose(\bx_i - \widetilde{\bx}_i)}{2(1 - \widetilde{\bx}_i\transpose\widetilde{\bx}_j)^2} \approx \sum_{j = 1}^n
\frac{(\bx_i - \widetilde{\bx}_i)\transpose\widetilde{\bx}_j\widetilde{\bx}_j\transpose(\bx_i - \widetilde{\bx}_i)}{2(1 - \widetilde{\bx}_i\transpose\widetilde{\bx}_j)}.
\end{align}
We thus obtain the following quadratic approximation to \eqref{eqn:loglikelihood_approximation} modulus a constant term from \eqref{def:surr-lik-fun}, \eqref{eqn:OSE_SL_approximation_I}, and \eqref{eqn:OSE_SL_approximation_II}:
\begin{align}\label{eqn:OSE_surrogate_loglikeliood}
\widetilde{\ell}_{in}^{(\mathrm{OS})}(\bx_i)
& = \sum_{j = 1}^n\frac{(A_{ij} - \widetilde{\bx}_i\transpose\widetilde{\bx}_j)\widetilde{\bx}_j\transpose(\bx_i - \widetilde{\bx}_i)}{\widetilde{\bx}_i\transpose\widetilde{\bx}_j(1 - \widetilde{\bx}_i\transpose\widetilde{\bx}_j)} - \sum_{j = 1}^n\frac{(\bx_i - \widetilde{\bx}_i)\transpose\widetilde{\bx}_j\widetilde{\bx}_j\transpose(\bx_i - \widetilde{\bx}_i)}{2\widetilde{\bx}_i\transpose\widetilde{\bx}_j(1 - \widetilde{\bx}_i\transpose\widetilde{\bx}_j)} .
\end{align}
Then a simple algebra shows that the one-step estimator $\widehat{\bx}_i^{(\mathrm{OS})}$ maximizes $\widetilde{\ell}_{in}^{(\mathrm{OS})}$ defined in \eqref{eqn:OSE_surrogate_loglikeliood}. 

Clearly, the surrogate log-likelihood function in \eqref{def:surr-lik-fun} is constructed by applying a Taylor expansion to the term $\log(\bx_i\transpose\widetilde{\bx}_j)$, whereas the one-step estimator is obtained by applying the Taylor expansion to the entire function.
Thus, intuitively, the surrogate likelihood retains more likelihood information than the one-step procedure does.
Below, we visualize this heuristic using a toy numerical example.
\begin{example}\label{example:SL_function}
Consider the following random dot product graph model.
Let $n = 300$, $(t_i)_{i = 1}^n$ be equidistant points over $[0, 1]$, $x_{0i} = 0.2 + 0.6\sin(\pi t_i)$, $i\in [n]$, and $\bX_0 = [x_{01},\ldots,x_{0n}]\transpose\in\mathbb{R}^{n\times 1}$. Suppose $\bA\sim\mathrm{RDPG}(\bX_0)$ and we focus on the likelihood function for $\bx_i$ with $i = 100$.
Figure \ref{fig:SL_comparison} visualizes the comparison among the oracle log-likelihood $\ell_{0in}(\bx_i)$, the surrogate log-likelihood $\widetilde{\ell}_{in}(\bx_i)$, and the approximate log-likelihood $\widetilde{\ell}_{in}^{(\mathrm{OS})}(\bx_i)$ associated with the one-step estimator. 
\begin{figure}[htbp]
\centering
\includegraphics[width = 0.65\textwidth]{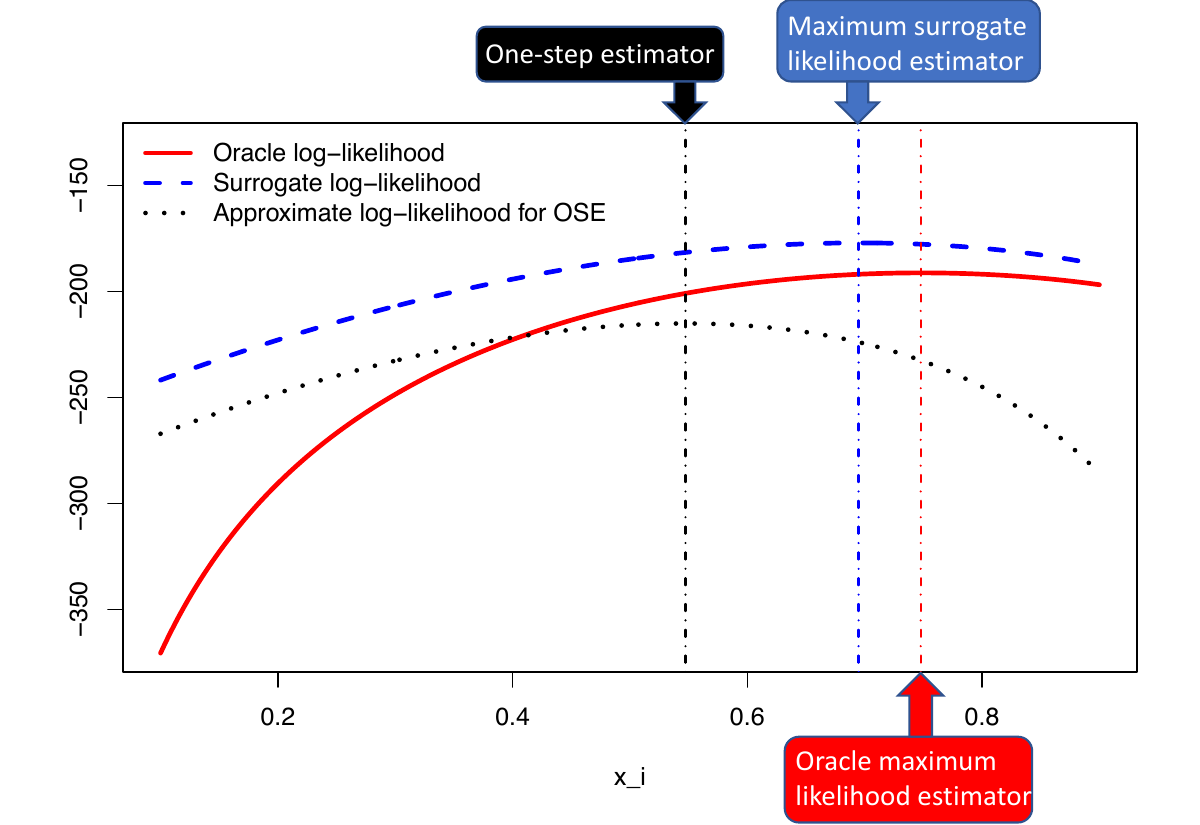}
\caption{Comparison among the oracle log-likelihood function $\ell_{0in}(\bx_i)$, the surrogate log-likelihood function $\widetilde{\ell}_{in}(\bx_i)$, and the approximate log-likelihood function $\widetilde{\ell}_{in}^{(\mathrm{OS})}(\bx_i)$ associated with the one-step estimator. The three vertical lines mark the one-step estimate, the maximum surrogate likelihood estimate, and the oracle maximum likelihood estimate, respectively.}
\label{fig:SL_comparison}
\end{figure}
The constant terms of these functions have been added to make them comparable. The vertical lines mark the maximizers of the three functions, respectively. 
It is visually clear that the maximizer of the surrogate log-likelihood estimate is closer to that of the oracle log-likelihood than the one-step estimate is, suggesting that the maximum surrogate likelihood estimator may outperform the one-step estimator in some practical finite sample problems.
\end{example}

\section{Maximum Surrogate Likelihood Estimation}
\label{sec:MSLE}


\subsection{Theoretical properties}
\label{sub:MSLE_theory}
This subsection elaborates on the theoretical properties of the frequentist inference with the surrogate likelihood. Below, Theorem \ref{theorem:existence_and_uniqueness} establishes the existence and uniqueness of the maximum  surrogate likelihood estimator. 
\begin{theorem}
\label{theorem:existence_and_uniqueness}
Suppose $\bA\sim\mathrm{RDPG}(\rho_n\halfpower\bX_0)$ and $(\log n)/(n\rho_n)\to0$ as $n\to\infty$. Assume $\lambda_d(\bX_0\transpose\bX_0/n)\geq\lambda$ for some constant $\lambda>0$ for all $n>d$, and $\min_{i,j\in[n]}(\bx_{0i}\transpose\bx_{0j}, 1-\bx_{0i}\transpose\bx_{0j}) \geq\delta$ for some constant $\delta>0$. Let $i\in [n]$ be a fixed vertex and consider the maximum surrogate likelihood estimator
$
\widehat\bx_i = \argmax_{\bx_i:\|\bx_i\|_2\leq1} \widetilde{\ell}_{in}(\bx_i).
$
Then for any $c>0$, there exists some constant $N_{c,\delta,\lambda}\in\mathbb{N}_+$ depending on $c,\delta,\lambda$ such that $\prob_0(\widehat\bx_i\mathrm{\ exists\ and\ is\ unique})\geq1-n^{-c}$ for all $n\geq N_{c,\delta,\lambda}$.
\end{theorem}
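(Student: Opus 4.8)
The plan is to exploit two structural features of the surrogate log-likelihood recorded in Section~\ref{subsection:intro-surr-lik}: the feasible set $\{\bx_i:\|\bx_i\|_2\le1\}$ is compact and convex, and $\widetilde\ell_{in}$ is concave on it. Existence then follows from continuity on a compact set, and uniqueness from \emph{strict} concavity, i.e.\ a negative-definite Hessian. The only probabilistic input is the uniform consistency of the adjacency spectral embedding: under $(\log n)/(n\rho_n)\to0$ there is a $\bW\in\mathbb{O}(d)$ with $\|\widetilde\bX\bW-\rho_n\halfpower\bX_0\|_2\lesssim\sqrt{(\log n)/n}$ on an event $\calE_n$ with $\prob_0(\calE_n)\ge1-n^{-c}$ for $n$ large \citep{lyzinski-sussman-tang-athreya-priebe-2014,xie-2022-entrywise}; everything else is deterministic on $\calE_n$.

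First I would translate $\calE_n$ into pointwise control of the ingredients of $\widetilde\ell_{in}$. Writing $\be_j=\widetilde\bx_j-\rho_n\halfpower\bW\transpose\bx_{0j}$, the bound $\max_j\|\be_j\|_2\le\|\widetilde\bX\bW-\rho_n\halfpower\bX_0\|_2\lesssim\sqrt{(\log n)/n}=\rho_n\halfpower\,o(1)$ (here the sparsity condition is used), combined with $\delta\le\bx_{0i}\transpose\bx_{0j}\le1-\delta$ and $\|\bx_{0j}\|_2\le(1-\delta)\halfpower$, yields for all $n$ large and all $i,j\in[n]$
\[
\widetilde\bx_i\transpose\widetilde\bx_j=\rho_n\bx_{0i}\transpose\bx_{0j}+o(\rho_n)\ge\tfrac12\delta\rho_n>0,
\qquad
\|\widetilde\bx_j\|_2\le\rho_n\halfpower(1-\delta)\halfpower+o(1)<1 .
\]
By Cauchy--Schwarz, the second bound gives $\widetilde\bx_j\transpose\bx_i\le\|\widetilde\bx_j\|_2<1$ for every $\bx_i$ in the unit ball and every $j$, so each term $\log(1-\widetilde\bx_j\transpose\bx_i)$ is finite and continuous on the ball; the first bound makes the coefficients $A_{ij}/(\widetilde\bx_i\transpose\widetilde\bx_j)$ and $1/(2\widetilde\bx_i\transpose\widetilde\bx_j)$ finite; and the remaining terms are affine in $\bx_i$. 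Hence $\bx_i\mapsto\widetilde\ell_{in}(\bx_i)$ is continuous on the compact set $\{\|\bx_i\|_2\le1\}$ and attains its maximum, which establishes existence on $\calE_n$.

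For uniqueness I would differentiate $\widetilde\ell_{in}$ twice, obtaining
\[
\nabla^2\widetilde\ell_{in}(\bx_i)=-\sum_{j=1}^n\Big\{\frac{1}{\widetilde\bx_i\transpose\widetilde\bx_j}+\frac{1-A_{ij}}{(1-\widetilde\bx_j\transpose\bx_i)^2}\Big\}\widetilde\bx_j\widetilde\bx_j\transpose .
\]
On $\calE_n$ every bracketed weight is at least $1/(\widetilde\bx_i\transpose\widetilde\bx_j)\ge2/(\delta\rho_n)$, uniformly in $j$ and over the unit ball, so for any $\bv\ne\zero$
\[
\bv\transpose\nabla^2\widetilde\ell_{in}(\bx_i)\bv\le-\frac{2}{\delta\rho_n}\sum_{j=1}^n(\widetilde\bx_j\transpose\bv)^2=-\frac{2}{\delta\rho_n}\|\widetilde\bX\bv\|_2^2\le-\frac{2}{\delta\rho_n}\,\sigma_d(\widetilde\bX)^2\|\bv\|_2^2 .
\]
Weyl's inequality together with $\lambda_d(\bX_0\transpose\bX_0/n)\ge\lambda$ gives, on $\calE_n$ for $n$ large, $\sigma_d(\widetilde\bX)\ge\sigma_d(\rho_n\halfpower\bX_0)-\|\widetilde\bX\bW-\rho_n\halfpower\bX_0\|_2\ge\sqrt{n\rho_n\lambda}-o(1)>0$, so $\nabla^2\widetilde\ell_{in}\prec\zero$ everywhere on the ball; thus $\widetilde\ell_{in}$ is strictly concave on a convex domain and its maximizer is unique. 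Choosing $N_{c,\delta,\lambda}$ so large that all the ``$o$'' statements above hold (they depend only on $\delta,\lambda,d$ and the absolute constants in the embedding bounds) and intersecting with $\calE_n$ gives $\prob_0(\widehat\bx_i\text{ exists and is unique})\ge1-n^{-c}$ for $n\ge N_{c,\delta,\lambda}$.

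The crux is the \emph{uniform} (in $j$, and over the entire unit ball) positivity of $\widetilde\bx_i\transpose\widetilde\bx_j$ and of $\sigma_d(\widetilde\bX)$, which is precisely where the minimal sparsity condition $(\log n)/(n\rho_n)\to0$ enters: it forces the $\sqrt{(\log n)/n}$ embedding error to be negligible relative both to the signal scale $\rho_n\halfpower$ and to $\sigma_d(\rho_n\halfpower\bX_0)\asymp\sqrt{n\rho_n}$. Once these two facts are secured, existence and uniqueness reduce to standard convex analysis.
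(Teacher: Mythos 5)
Your proposal follows essentially the same route as the paper's proof: existence via continuity of $\widetilde{\ell}_{in}$ on the compact unit ball once $\max_{j}\|\widetilde{\bx}_j\|_2<1$ and $\widetilde{\bx}_i\transpose\widetilde{\bx}_j>0$ hold with high probability, and uniqueness via strict concavity, bounding the negative Hessian below by a positive multiple of $\widetilde{\bX}\transpose\widetilde{\bX}$ and showing $\sigma_d(\widetilde{\bX})>0$ (you do the latter by Weyl's inequality applied to the embedding error, while the paper uses $\sigma_d(\widetilde{\bX})^2=\lambda_d(\bA)$ together with the Lei--Rinaldo bound on $\|\bA-\bP\|_2$; both work). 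One small slip: the claimed weight bound $1/(\widetilde{\bx}_i\transpose\widetilde{\bx}_j)\geq 2/(\delta\rho_n)$ points the wrong way, since your established inequality $\widetilde{\bx}_i\transpose\widetilde{\bx}_j\geq\delta\rho_n/2$ yields the reverse; what is actually needed (and available from your own first display, e.g.\ $\widetilde{\bx}_i\transpose\widetilde{\bx}_j\leq\rho_n(1-\delta/2)$, or simply $\widetilde{\bx}_i\transpose\widetilde{\bx}_j<1$ by Cauchy--Schwarz) is any strictly positive lower bound such as $1/(\widetilde{\bx}_i\transpose\widetilde{\bx}_j)\geq1/\rho_n$, after which the strict-concavity conclusion is unchanged, so the slip is harmless.
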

Let
$
\bG_{0in}=(1/n)\sum_{j=1}^n{\bx_{0j}\bx_{0j}\transpose}\{\bx_{0i}\transpose\bx_{0j}(1-\rho_n\bx_{0i}\transpose\bx_{0j})\}^{-1}
$
be the Fisher information matrix with regard to the latent position $\bx_i$.
Theorem \ref{theorem:asymptotic_properties_of_MSLE} below, which is one of the main results in this article, establishes the large sample properties of the maximum surrogate likelihood estimator.
\begin{theorem}
\label{theorem:asymptotic_properties_of_MSLE}
Suppose 
the conditions of Theorem \ref{theorem:existence_and_uniqueness} hold and the embedding dimension $d$ is fixed. For each $i\in [n]$, let $\widehat{\bx}_i = \argmax_{\bx_i:\|\bx_i\|_2\leq 1}\widetilde{\ell}_{in}(\bx_i)$ be the maximum surrogate likelihood estimator. Then there exists an orthogonal matrix $\bW\in\mathbb{O}(d)$ that depends on $n$, such that for each $i\in[n]$,
\[
\sqrt{n}\bG_{0in}\halfpower(\bW\transpose\widehat\bx_i-\rho_n\halfpower\bx_{0i}) \overset{\calL}{\to} \mathrm{N}_d(\zero_d, \bI_d).
\]
Furthermore, if $(\log n)^4/(n\rho_n)\to 0$, then
\[
\|\widehat\bX\bW-\rho_n\halfpower\bX_0\|\frobenius^2 - \frac{1}{n}\sum_{i = 1}^n\mathrm{tr}(\bG_{0in}\inverse) \to 0
\]
in probability, where $\widehat{\bX} = [\widehat{\bx}_1,\ldots,\widehat{\bx}_n]\transpose$.
\end{theorem}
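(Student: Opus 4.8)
\textbf{Proof plan for Theorem \ref{theorem:asymptotic_properties_of_MSLE}.}

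The plan is to analyze the first-order optimality condition of $\widetilde{\ell}_{in}$ and linearize the estimator around $\widetilde{\bx}_i$ (equivalently around $\rho_n\halfpower\bx_{0i}$), using the known behavior of the adjacency spectral embedding as the starting point. First I would fix the orthogonal alignment $\bW$ to be the one from the uniform consistency result quoted after \eqref{eqn:loglikelihood_approximation}, so that $\|\widetilde{\bX}\bW - \rho_n\halfpower\bX_0\|_2 = O\{\sqrt{(\log n)/(n\rho_n)}\}$ and, crucially, $\|\widetilde{\bX}\bW - \rho_n\halfpower\bX_0\|_{2\to\infty}$ is of the order $\sqrt{(\log n)/n}$ up to the sparsity scaling (from the entrywise/row-wise bounds in \citealp{xie-2022-entrywise}); I would also import the row-wise central limit theorem for $\widetilde{\bx}_i$. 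Working with $\bW$ absorbed (or equivalently rotating all quantities), the gradient $\nabla\widetilde{\ell}_{in}(\bx_i)$ is, by construction, an affine function of $\bx_i$ plus the $\log(1-\bx_i\transpose\widetilde{\bx}_j)$ pieces, so the stationarity equation $\nabla\widetilde{\ell}_{in}(\widehat{\bx}_i)=\zero$ (valid on the event from Theorem \ref{theorem:existence_and_uniqueness} that $\widehat{\bx}_i$ is an interior unique maximizer, which I would first verify using the consistency of $\widehat{\bx}_i$ to $\rho_n\halfpower\bx_{0i}$) can be expanded to yield
\[
\widehat{\bx}_i - \widetilde{\bx}_i = \left(\frac{1}{n}\sum_{j=1}^n \frac{\widetilde{\bx}_j\widetilde{\bx}_j\transpose}{\widetilde{\bx}_i\transpose\widetilde{\bx}_j(1-\widetilde{\bx}_i\transpose\widetilde{\bx}_j)}\right)^{-1}\left(\frac{1}{n}\sum_{j=1}^n \frac{(A_{ij}-\widetilde{\bx}_i\transpose\widetilde{\bx}_j)\widetilde{\bx}_j}{\widetilde{\bx}_i\transpose\widetilde{\bx}_j(1-\widetilde{\bx}_i\transpose\widetilde{\bx}_j)}\right) + \mathrm{(higher\ order)},
\]
i.e. the leading term coincides with the one-step correction in \eqref{def:ose}. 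This reduces the whole problem to controlling the remainder and then invoking the asymptotic analysis already carried out for the one-step estimator.

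The next step is to replace the plug-in quantities $\widetilde{\bx}_i,\widetilde{\bx}_j$ by their population counterparts $\rho_n\halfpower\bx_{0i},\rho_n\halfpower\bx_{0j}$ in both the ``Fisher information'' factor and the ``score'' factor. For the information factor, uniform consistency of $\widetilde{\bX}\bW$ together with the lower bounds $\lambda_d(\bX_0\transpose\bX_0/n)\geq\lambda$ and $\min(\bx_{0i}\transpose\bx_{0j},1-\bx_{0i}\transpose\bx_{0j})\geq\delta$ gives $(1/n)\sum_j \widetilde{\bx}_j\widetilde{\bx}_j\transpose\{\widetilde{\bx}_i\transpose\widetilde{\bx}_j(1-\widetilde{\bx}_i\transpose\widetilde{\bx}_j)\}^{-1} = \rho_n\bG_{0in} + o_P(\rho_n)$ uniformly in $i$, after rescaling. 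For the score factor, I would decompose $A_{ij}-\widetilde{\bx}_i\transpose\widetilde{\bx}_j = (A_{ij}-\rho_n\bx_{0i}\transpose\bx_{0j}) + (\rho_n\bx_{0i}\transpose\bx_{0j} - \widetilde{\bx}_i\transpose\widetilde{\bx}_j)$; the first part is a sum of independent centered Bernoulli fluctuations whose normalized version is exactly what enters the row-wise CLT, and the second part is a bias term driven by the embedding error. Collecting everything, $\sqrt{n}\bG_{0in}\halfpower(\bW\transpose\widehat{\bx}_i - \rho_n\halfpower\bx_{0i})$ equals $\sqrt{n}\bG_{0in}\halfpower(\bW\transpose\widetilde{\bx}_i - \rho_n\halfpower\bx_{0i})$ plus the one-step correction plus $o_P(1)$, and by the same cancellation that underlies Theorem 2 / Theorem 3 of \cite{xie-xu-2021-onestep} this telescopes to a normalized sum of independent terms satisfying the Lindeberg condition, giving the claimed $\mathrm{N}_d(\zero_d,\bI_d)$ limit.

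For the second assertion, the sum-of-squared-errors statement, I would write $\|\widehat{\bX}\bW - \rho_n\halfpower\bX_0\|\frobenius^2 = \sum_{i=1}^n \|\bW\transpose\widehat{\bx}_i - \rho_n\halfpower\bx_{0i}\|^2$ and, using the linearization above uniformly over $i$, show each summand equals $\bnu_i\transpose\bG_{0in}\inverse\bnu_i/n + o_P(1/n)$ where $\bnu_i = (1/\sqrt{n})\sum_j (A_{ij}-\rho_n\bx_{0i}\transpose\bx_{0j})\bx_{0j}\{\bx_{0i}\transpose\bx_{0j}(1-\bx_{0i}\transpose\bx_{0j})\}^{-1}$ up to $\rho_n$ scaling; taking expectations of the quadratic form gives $\mathrm{tr}(\bG_{0in}\inverse)/n$, and a variance/concentration bound on $\sum_i (\bnu_i\transpose\bG_{0in}\inverse\bnu_i - \mathrm{tr}\,\bG_{0in}\inverse)/n$ — this is where the stronger sparsity condition $(\log n)^4/(n\rho_n)\to0$ is needed, to make the accumulated remainder terms across all $n$ vertices (each $o_P(1/n)$ individually but summed) vanish and to control the fourth-moment fluctuations of the quadratic forms — yields convergence to zero in probability. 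The main obstacle I anticipate is precisely this uniform remainder control in the second part: individual-vertex bounds of size $o_P(1/n)$ are not automatically summable to $o_P(1)$, so I would need sharp (high-probability, uniform-in-$i$) bounds on the embedding error $\|\widetilde{\bx}_i\bW - \rho_n\halfpower\bx_{0i}\|$ and on the Taylor remainders in \eqref{eqn:log_Taylor_approximation}–\eqref{eqn:log_Hessian_approximation}, which is exactly why the second statement asks for the extra polylog factor in the sparsity assumption. The first (per-vertex CLT) part is comparatively routine once the reduction to the one-step estimator is in hand.
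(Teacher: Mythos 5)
Your plan follows essentially the same route as the paper's proof: linearize the stationarity condition of $\widetilde{\ell}_{in}$ at $\widetilde{\bx}_i$, identify the leading term with the one-step correction in \eqref{def:ose} so that $\widehat{\bx}_i$ inherits the one-step estimator's expansion and central limit theorem, and then derive the Frobenius-norm statement by computing the expectation of the resulting quadratic forms and invoking a concentration bound, with the stronger condition $(\log n)^4/(n\rho_n)\to 0$ used to make the accumulated per-vertex remainders summable to $o_{\prob_0}(1)$. The only step you leave implicit is the preliminary uniform consistency $\max_{i\in[n]}\|\bW\transpose\widehat{\bx}_i-\rho_n\halfpower\bx_{0i}\|_2=O\{\sqrt{(\log n)/(n\rho_n)}\}$, which the paper establishes first via a separate M-estimation argument (strict concavity/identifiability of a population criterion plus a sub-Gaussian maximal-inequality bound for the gradient process, uniform over the unit ball and over $i$), and which is needed both to justify the interior stationarity equation and to control the size of your Taylor remainder before the reduction to the one-step estimator can be made rigorous.
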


\begin{remark}[Sparsity condition]
The sparsity condition that $(\log n)/(n\rho_n)\to 0$ required in Theorem \ref{theorem:existence_and_uniqueness} and in the asymptotic normality of Theorem \ref{theorem:asymptotic_properties_of_MSLE} is minimal in the following sense. It is well known that the random adjacency matrix $\bA$ no longer concentrates around its expected value $\expect_0(\bA)$ when $(\log n)/(n\rho_n)\to \infty$ \citep{tang-priebe-2018}. Furthermore, \cite{abbe-fan-wang-zhong-2020} and \cite{xie-2022-entrywise} showed that in order to have $\|\rho_n^{-1/2}\widetilde{\bX}\bW - \bX_0\|_{2\to\infty} = o(1)$ with high probability, which is an indispensable ingredient in our employed proof, it is necessary that $(\log n)/(n\rho_n)\to 0$. 
\end{remark}

\begin{remark}[Comparison with other estimators]
\cite{athreya-priebe-tang-2016}, \cite{tang-priebe-2018}, and \cite{xie-xu-2021-onestep} have establish the large sample properties of the adjacency spectral embedding and the one-step estimator as the following. 
Let
$\widetilde{\bX} = [\widetilde{\bx}_1,\ldots,\widetilde{\bx}_n]\transpose$ and 
$\widehat{\bX}^{(\mathrm{OS})} = [\widehat{\bx}_1^{(\mathrm{OS})},\ldots,\widehat{\bx}_n^{(\mathrm{OS})}]\transpose$.
Under appropriate conditions, for each vertex $i\in [n]$, 
\begin{align*}
&\sqrt{n}\bSigma_{in}^{-1/2}(\bW\transpose\widetilde{\bx}_i - \rho_n^{1/2}\bx_{0i})\overset{\calL}{\to}\mathrm{N}_d(\zero_d, \bI_d), \quad \sqrt{n}\bG_{0in}^{1/2}(\bW\transpose\widehat{\bx}_i^{(\mathrm{OS})} - \rho_n^{1/2}\bx_{0i})\overset{\calL}{\to}\mathrm{N}_d(\zero_d, \bI_d),\\
&\|\widetilde{\bX}\bW - \rho_n^{1/2}\bX_0\|_{\mathrm{F}}^2 - \frac{1}{n}\sum_{i = 1}^n\mathrm{tr}(\bSigma_{in})\overset{\prob_0}{\to} 0, \quad  \|\widehat{\bX}^{(\mathrm{OS})}\bW - \rho_n^{1/2}\bX_0\|_{\mathrm{F}}^2 -  \frac{1}{n}\sum_{i = 1}^n\mathrm{tr}(\bG_{0in}^{-1})\overset{\prob_0}{\to}0.
\end{align*}
where the covariance matrix $\bSigma_{in}$ satisfies $\bSigma_{in}\succeq \bG_{0in}^{-1}$. Theorem \ref{theorem:asymptotic_properties_of_MSLE} thus suggests that the maximum surrogate likelihood estimator improves upon the adjacency spectral embedding and is (first-order) asymptotically equivalent to the one-step estimator. This phenomenon is also known as the local efficiency \citep{xie-xu-2021-onestep} because the asymptotic covariance matrix for a single latent position $\bx_i$ is the same as that of the oracle maximum likelihood estimator.
\end{remark}




Given that both the one-step estimator and the maximum surrogate likelihood estimator achieve the local efficiency, the comparison at the first-order ($O(n^{-1/2})$) is unable to distinguish their performance. To further discern the difference between these two estimators, it is desirable to explore their second-order ($O(n^{-1})$) behavior. Such an idea can be formalized by the second-order ($O(n^{-1})$) bias of an estimator. Generically, given an asymptotic unbiased estimator $\widehat{\btheta}_n$ for an unknown parameter $\btheta$ (\emph{i.e.}, $\lim_{n\to\infty}\expect\widehat{\btheta}_n = \btheta$), if $\expect\widehat{\btheta}_n = \btheta + \bb_n + o(n^{-1})$ and $\bb_n = O(n^{-1})$, then the $O(n^{-1})$ bias of $\widehat{\btheta}_n$ is given by $\mathrm{Bias}(\widehat{\btheta}_n) = \bb_n$. Also see \cite{PFANZAGL19781,RILSTONE1996369,https://doi.org/10.1111/j.1468-0262.2004.00482.x,10.1214/009053606000001208} for the analyses of the $O(n^{-1})$ biases in the econometric literature. Below, Theorem \ref{thm:second_order_bias} establishes the formulae of the $O(n^{-1})$ biases of the one-step estimator and the maximum surrogate likelihood estimator. 
\begin{theorem}\label{thm:second_order_bias}
Suppose 
the conditions of Theorem \ref{theorem:existence_and_uniqueness} hold and the embedding dimension $d$ is fixed. Further assume that $\rho_n = 1$ and $\bX_0\transpose\bX_0$ is a diagonal matrix with different eigenvalues, and the differences of these are lower bounded by a constant multiple of $n$.
For each $i\in [n]$, let $\widehat{\bx}_i = \argmax_{\bx_i:\|\bx_i\|_2\leq 1}\widetilde{\ell}_{in}(\bx_i)$ be the maximum surrogate likelihood estimator. Then the $O(n^{-1})$ biases of $\widehat{\bx}_i$ and $\widehat{\bx}_i^{(\mathrm{OS})}$ are given by
\begin{align*}
\mathrm{Bias}(\widehat{\bx}_i) & = \bb_i^{(\mathrm{MSLE})} + \bb_i^{(\mathrm{ASE})} + \bb_i^{(\mathrm{base})},\\
\mathrm{Bias}(\widehat{\bx}_i^{(\mathrm{OS})}) &= \bb_i^{(\mathrm{OS})} + \bb_i^{(\mathrm{ASE})} + \bb_i^{(\mathrm{base})},
\end{align*}
where
\begin{align*}
\bb_i^{(\mathrm{OSE})}
& = \bG_{0in}^{-1}\frac{1}{n}\sum_{j = 1}^n\frac{(2\bx_{0i}\transpose\bx_{0j} - 1)\bx_{0j}\bx_{0j}\transpose(\bX\transpose\bX)^{-1}\bx_{0j}}{(\bx_{0i}\transpose\bx_{0j})(1 - \bx_{0i}\transpose\bx_{0j})},\\
\bb_i^{(\mathrm{MSLE})}
& = -\bG_{0in}^{-1}\frac{1}{n}\sum_{j = 1}^n\frac{(1 - \bx_{0i}\transpose\bx_{0j})\bx_{0j}\bx_{0j}\transpose(\bX\transpose\bX)^{-1}\bx_{0j}}{\bx_{0i}\transpose\bx_{0j}} + \bG_{0in}^{-1}\frac{1}{n^2}\sum_{j = 1}^n\frac{\bx_{0j}\bx_{0j}\transpose\bG_{0in}^{-1}\bx_{0j}}{(1 - \bx_{0i}\transpose\bx_{0j})^2},\\
\bb_i^{(\mathrm{base})}
& =  - \bG_{0in}^{-1}\frac{1}{n^2}\sum_{j = 1}^n\bigg\{\frac{-1}{(\bx_{0i}\transpose\bx_{0j})^2} + \frac{1}{(1 - \bx_{0i}\transpose\bx_{0j})^2}\bigg\}\bx_{0j}\bx_{0j}\transpose\bG_{0in}^{-1}\bx_{0j},\\
\bb_i^{(\mathrm{ASE})}
& = - \bG_{0in}^{-1}\frac{1}{n}\sum_{j = 1}^n\sum_{k = 1}^n\frac{\bx_{0j}x_{0ik}\beta_{jk}}{\bx_{0i}\transpose\bx_{0j}(1 - \bx_{0i}\transpose\bx_{0j})}\\
&\quad + \bG_{0in}^{-1}\frac{1}{n}\sum_{j = 1}^n\frac{2\bx_{0i}\transpose\bx_{0j} - 1}{\bx_{0i}\transpose\bx_{0j}(1 - \bx_{0i}\transpose\bx_{0j})}\bx_{0i}\transpose(\bX\transpose\bX)^{-1}\bx_{0i} + \bG_{in}^{-1}(\bX\transpose\bX)^{-1}\bx_{0i}\\
&\quad - \bG_{0in}^{-1}\frac{1}{n^2}\sum_{j = 1}^n\frac{(2\bx_{0i}\transpose\bx_{0j} - 1)\bx_{0j}\bx_{0i}\transpose\bSigma_{jn}\bx_{0i}}{(\bx_{0i}\transpose\bx_{0j})^2(1 - \bx_{0i}\transpose\bx_{0j})^2} - \bG_{0in}^{-1}\frac{1}{n^2}\sum_{j = 1}^n\frac{\bSigma_{jn}\bx_{0i}}{(\bx_{0i}\transpose\bx_{0j})^2(1 - \bx_{0i}\transpose\bx_{0j})^2},\\
\beta_{jk}& = \be_j\transpose\bigg(
\bI_n - \bu_k\bu_k\transpose - 
\sum_{m\in[d]\backslash\{k\}}\frac{\lambda_m\bu_m\bu_m\transpose}{\lambda_k - \lambda_m}\bigg)\mathrm{diag}\bigg\{\bigg(\sum_{b = 1}^n\frac{\bx_{0a}\transpose\bx_{0b}(1 - \bx_{0a}\transpose\bx_{0b})x_{0bk}}{\lambda_k^2}\bigg)_{a = 1}^n\bigg\},
\end{align*}
and $\bu_k$ is the eigenvector of $\bP_0 = \bX_0\bX_0\transpose$ associated with the $k$th largest eigenvalue $\lambda_k$.
\end{theorem}

\begin{remark}[Interpretation of the bias terms]
Each term in the $O(n^{-1})$ bias formula has an interesting interpretation. The term $[\beta_{j1},\ldots,\beta_{jd}]\transpose$ is precisely the $O(n^{-1})$ bias of the ASE and has been obtained in \cite{xie2024higher}. The term $\bb_i^{(\mathrm{ASE})}$ stems from the substitution of the unknown $\bx_{0j}$'s with the ASE $\widetilde{\bx}_j$'s and naturally connects to the $O(n^{-1})$ bias of the ASE. This term does not depend on whether the one-step estimator or the maximum surrogate likelihood estimator is used. It also vanishes in the hypothetical scenario where the oracle knowledge of $\bx_{0j}$'s is accessible. The term $\bb_i^{(\mathrm{base})}$ is intrinsic to the use of maximum likelihood principle and does not vanish even if the oracle maximum likelihood estimator (\emph{i.e.}, $\argmax_{\bx_i}\ell_{0in}(\bx_i)$) is used. The key difference of the $O(n^{-1})$ biases of these estimators lies in $\bb_i^{(\mathrm{MSLE})}$ and $\bb_i^{(\mathrm{OS})}$ , which explains how the former estimator retains more likelihood information than the latter estimator. 
\end{remark}

\begin{example}\label{example:second_order_bias_SBM}
Consider the following rank-one two-block stochastic block model with block probability matrix 
\begin{align*}
\bB = \begin{bmatrix}
p^2 & pq\\
pq & q^2
\end{bmatrix}
\end{align*} 
and cluster assignment function $\tau(\cdot)$ defined by $\tau(i) = 1$ if $i\in\{1,\ldots,n/2\}$ and $\tau(i) = 2$ if $i\in\{n/2 + 1,\ldots,n\}$ with $n = 300$. For this specific model, we compute the ratio of sum-of-squared biases $\sum_{i = 1}^n\mathrm{Bias}^2(\widehat{\bx}_i^{(\mathrm{OS})}) / \sum_{i = 1}^n\mathrm{Bias}^2(\widehat{\bx}_i)$ as a function of $p$ and $q$, where $p, q$ vary over $[0.05, 0.95]$ and visualize the ratio in \ref{fig:Second_order_bias_ratio_SBM}. The plot shows that the maximum surrogate likelihood estimator results in less bias compared to the one-step estimator for a broad range of $(p, q)$ values. 
\begin{figure}[htbp]
\centering
\includegraphics[width = 0.65\textwidth]{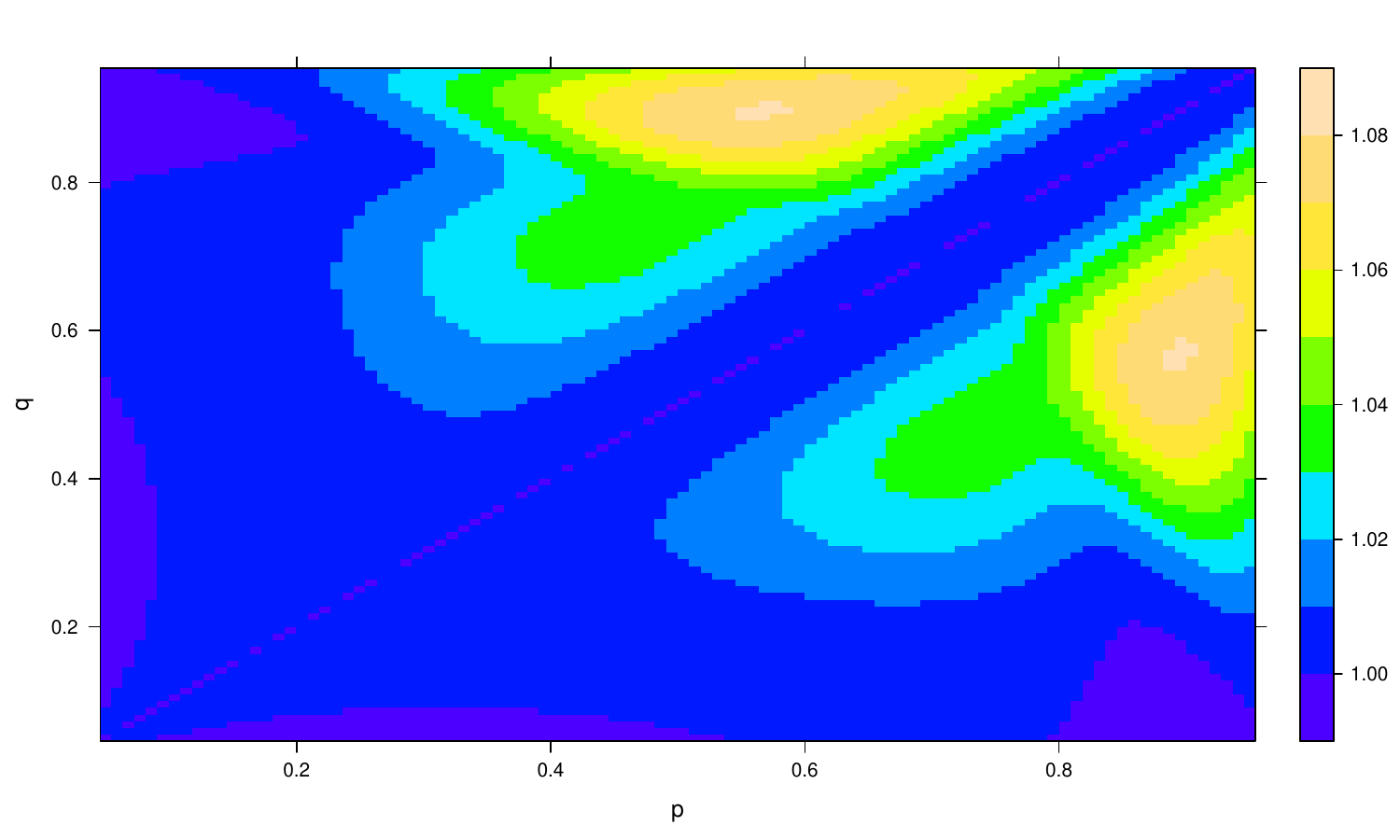}
\caption{Level plot of the ratio of  sum-of-squared biases $\sum_{i = 1}^n\mathrm{Bias}^2(\widehat{\bx}_i^{(\mathrm{OS})}) / \sum_{i = 1}^n\mathrm{Bias}^2(\widehat{\bx}_i)$ as a function of $p$ and $q$, where $p, q$ vary over $[0.05, 0.95]$ for Example \ref{example:second_order_bias_SBM}. }
\label{fig:Second_order_bias_ratio_SBM}
\end{figure}
\end{example}

\subsection{Computation details}
\label{sub:MSLE_computation}
This subsection discusses the detailed algorithm for computing the maximum surrogate likelihood estimator. For a given vertex $i\in [n]$, the estimator $\widehat{\bx}_i = \argmax_{\bx_i}\widetilde{\ell}_{in}(\bx_i)$ can be computed separately for each vertex $i\in [n]$. Thus, 
it is sufficient to design an algorithm for solving the optimization problem 
\begin{align}\label{eqn:MSLE_optimization}
\max_{\|\bx_i\|_2\leq 1}\frac{1}{n}\widetilde{\ell}_{in}(\bx_i).
\end{align}
Then the entire estimator $\widehat{\bX} = [\widehat{\bx}_1,\ldots,\widehat{\bx}_n]\transpose$ for all vertices can be obtained through a parallelization over $i\in [n]$.
Let us consider the optimization problem \eqref{eqn:MSLE_optimization}. 
Observe that the objective function $(1/n)\widetilde{\ell}_{in}(\bx_i)$ is concave and can be written in a sample average fashion, which motivates us to adopt the stochastic gradient descent algorithm \citep{robbins-monro-1951}.
Let $j_1,\ldots,j_s$ be independent $\mathrm{Unif}(1,\ldots,n)$ random variables, where $s\in\{1,\ldots,n\}$ is the so-called batch size, and for any $j\in [n]$, let
\[
m_i(\bx_i, j) = \frac{A_{ij} \widetilde\bx_j\transpose \bx_i}{\widetilde{p}_{ij}}
+ \widetilde{\bx}_j\transpose \bx_i
- \frac{1}{2\widetilde{p}_{ij}}\bx_i\transpose\widetilde{\bx}_j\widetilde{\bx}_j\transpose \bx_i
+ (1-A_{ij})\log(1-\bx_i\transpose \widetilde{\bx}_j).
\]
It is clear that for each $j_k$, $k\in [s]$, $m_i(\bx_i, j_k)$ can be viewed as a noisy measurement of the objective function $(1/n)\widetilde{\ell}_{in}(\bx_i)$ because $(1/n)\widetilde{\ell}_{in}(\bx_i) = \expect_{j_k}\{m_i(\bx_i, j_k)\}$. Then given a sequence of step sizes $\{\alpha_t\}_{t\geq 1}$ and a initial guess $\widehat{\bx}_i^{(0)}$, the stochastic gradient descent algorithm generates a sequence of iterates $\{\widehat{\bx}_i^{(t)}\}_{t\geq 1}$ using the updating scheme
\begin{align}\label{eqn:SGD_updating_scheme}
\widehat{\bx}_i^{(t + 1)} = \widehat{\bx}_i^{(t)} + \frac{\alpha_t}{s}\sum_{k = 1}^s\frac{\partial m_i}{\partial\bx}(\widehat{\bx}_i^{(t)}, j_k^{(t)}),
\end{align}
where $\{(j_1^{(t)},\ldots,j_s^{(t)})\}_{t\geq 1}$ are independent copies of $(j_1,\ldots,j_s)$. The advantage of the stochastic gradient descent method over the classical gradient descent algorithm is that, with a comparatively small batch size $s$, one only needs to compute $s$ gradient measurements of $m_i(\bx_i, j)$ rather than all the gradient measurements of $\{m_i(\bx_i, j)\}_{j = 1}^n$. This computational convenience is especially desired when the network contains large number of vertices. To implement the algorithm with adaptive step sizes, we follow the suggestion given by \cite{duchi2011adaptive} and \cite{li-orabona-2019} and take
\begin{align}\label{eqn:SGD_adaptive_stepsize}
\alpha_t = a_0\left\{b_0 + \sum_{l = 1}^{t - 1}\left\|\frac{1}{s}\sum_{k = 1}^s\frac{\partial m_i}{\partial\bx}(\widehat{\bx}_i^{(l)}, j_k^{(l)})\right\|_2^2\right\}^{-(\epsilon + 1/2)},
\end{align}
where $a_0,b_0>0$ and $0<\epsilon\leq1/2$ are constants. 

The key difference between our algorithm and the standard stochastic gradient descent algorithm is that the feasible region $\{\bx_i\in\mathbb{R}^d:\|\bx_i\|\leq 1\}$ is compact. Therefore, whenever an updated value $\widehat{\bx}_i^{(t + 1)}$ stays outside the feasible region, one repeats step-halving procedures until $\|\widehat{\bx}_i^{(t + 1)}\|\leq 1$. 
We present the detailed stochastic gradient descent algorithm for computing the maximum surrogate likelihood estimator in Algorithm \ref{alg:sgd-msle}, 
the convergence of which is guaranteed by Theorem \ref{theorem:SGD_MSLE_convergence} below. 
\begin{algorithm}[h]
\caption{Stochastic gradient descent for maximum surrogate likelihood estimation}
\label{alg:sgd-msle}
\begin{algorithmic}[1]
\State \textbf{Input:} The adjacency matrix $\bA = [A_{ij}]_{n\times n}$ and the embedding dimension $d$.

\State \textbf{Set:} Tuning parameters $a_0, b_0 > 0$, $\epsilon\in (0, 1/2]$, and batch size $1\leq s\leq n$.

\State Compute the spectral decomposition of the adjacency matrix
  $\bA = \sum_{i=1}^n \widehat\lambda_i \widehat\bu_i \widehat\bu_j\transpose$,
where $|\widehat{\lambda}_1| \geq |\widehat{\lambda}_2| \geq \ldots \geq |\widehat{\lambda}_n|$, and $\widehat{\textbf{u}}_i\transpose \widehat{\textbf{u}}_j = \mathbbm{1}(i=j)$ for all $i, j \in[n]$.

\State Compute the adjacency spectral embedding:
\[
  \widetilde{\mathbf{X}} = \widehat{\mathbf{X}}^{\mathrm{ASE}} = [\widehat{\mathbf{u}}_1, \ldots, \widehat{\mathbf{u}}_d] \cdot \mathrm{diag}(|\widehat{\lambda}_1|^{1/2}, \ldots, |\widehat{\lambda}_d|^{1/2}),
\]
and write $\widetilde{\mathbf{X}} = [\widetilde{\mathbf{x}}_1, \ldots, \widetilde{\mathbf{x}}_n]\transpose \in \mathbb{R}^{n\times d}$. Let = $\widetilde{p}_{ij} = \widetilde{\mathbf{x}}_i\transpose\widetilde{\mathbf{x}}_j$ for all $i,j\in[n]$.

\State For $i = 1,2,\ldots,n$

\State \quad Initialize $\widehat{\mathbf{x}}_i^{(1)} = \widetilde{\mathbf{x}}_i$.

\State \quad  Set the iteration counter $t = 1$.

\State \quad  While not converge

\State \quad  \quad Sample without replacement $j_1, j_2, \ldots, j_{s}\sim\mathrm{Unif}(1,2,\ldots,n)$.

\State \quad  \quad Compute the average gradient at $\widehat{\mathbf{x}}_i^{(t)}$
\[
\bar{\mathbf{g}}^{(t)}(\widehat{\mathbf{x}}_i^{(t)}) = \frac{1}{s}\sum_{k=1}^s \frac{\partial m_i}{\partial \mathbf{x}_i}(\mathbf{x}_i, j_k)\bigg\rvert_{\mathbf{x}_i = \widehat{\mathbf{x}}_i^{(t)}}.
\]


\State \quad\quad Compute the step size $\alpha_t$ using formula \eqref{eqn:SGD_adaptive_stepsize}.

\State \quad\quad Compute
$\widehat{\mathbf{x}}_i^{(t+1)} = \widehat{\mathbf{x}}_i^{(t)} + \alpha_t\bar{\mathbf{g}}^{(t)}(\widehat{\mathbf{x}}_i^{(t)})$.

\State \quad\quad If $\|\widehat{\bx}_i^{(t + 1)}\|_2 > 1$, then set $\alpha_t \longleftarrow \alpha_t/2$ and go to line 12

\State \quad\quad Set $t \longleftarrow t+1$.

\State \quad End While

\State End For

\State \textbf{Output: } The MSLE $\widehat{\mathbf{X}} = [\widehat{\mathbf{x}}_1, \ldots, \widehat{\mathbf{x}}_n]\transpose$.

\end{algorithmic}
\end{algorithm}

\begin{theorem}
\label{theorem:SGD_MSLE_convergence}
Let the vertex $i\in [n]$ be fixed and suppose $(1/n)\widetilde{\ell}_{in}(\bx_i)$ is well-defined. Assume that $\widehat{\bx}_i = \argmax_{\bx_i:\|\bx_i\|\leq 1}(1/n)\widetilde{\ell}_{in}(\bx_i)$ lies in the interior of $\{\bx_i\in\mathbb{R}^d:\|\bx_i\|\leq 1\}$. 
Then the sequence of iterates $\{\widehat{\bx}_i^{(t)}\}_{t\geq 1}$ generated by 
\eqref{eqn:SGD_updating_scheme} with step sizes $\{\alpha_t\}_{t\geq 1}$ given by \eqref{eqn:SGD_adaptive_stepsize} and step-halving converges to $\widehat{\bx}_i$ almost surely with regard to the distribution of $\{(j_1^{(t)},\ldots,j_s^{(t)})\}_{t\geq 1}$. 
\end{theorem}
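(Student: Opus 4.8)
\emph{Proof proposal.} The plan is to recognize the step-halving stochastic gradient descent of \eqref{eqn:SGD_updating_scheme}--\eqref{eqn:SGD_adaptive_stepsize} as a constrained (projected/truncated) stochastic approximation recursion with an adaptive, data-dependent step size, and to deduce almost-sure convergence from the standard convergence theory for such recursions after verifying its hypotheses in our setting. Throughout, I would condition on the realized graph $\bA$ and embedding $\widetilde\bX$, so that $f(\bx):=(1/n)\widetilde\ell_{in}(\bx)$ is well-defined on the closed unit ball $\calB=\{\bx\in\mathbb{R}^d:\|\bx\|\le1\}$ and the only remaining randomness is the sequence of minibatch indices; let $\calF_t$ be the $\sigma$-algebra generated by $\{(j_1^{(l)},\ldots,j_s^{(l)}):l<t\}$, so that $\widehat\bx_i^{(t)}$ and $\alpha_t$ are $\calF_t$-measurable, let $\bg_t:=\frac1s\sum_{k=1}^s\frac{\partial m_i}{\partial\bx}(\widehat\bx_i^{(t)},j_k^{(t)})$, and let $\bm{\xi}_t:=\bg_t-\nabla f(\widehat\bx_i^{(t)})$.

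First I would record the structural facts used below. Since $\widetilde\bx_i\transpose\widetilde\bx_j>0$ for all $j$ and $\max_j\|\widetilde\bx_j\|_2<1$ --- both implied by $f$ being well-defined on $\calB$, the latter holding with high probability --- each quantity $1-\bx\transpose\widetilde\bx_j$ is bounded away from $0$ on $\calB$; hence $f$ is twice continuously differentiable in a neighborhood of $\calB$ with uniformly bounded gradient and Hessian, so $f$ is $L$-smooth on $\calB$ and $\|\bg_t\|\le G$ for a finite constant $G$ (the iterates remain in $\calB$ by step-halving). A direct computation gives $\nabla^2 f(\bx)\preceq-(1/n)\sum_{j}\widetilde\bx_j\widetilde\bx_j\transpose/(\widetilde\bx_i\transpose\widetilde\bx_j)\preceq-\mu\bI_d$ for some $\mu>0$ (using that $\widetilde\bX$ has full column rank), so $f$ is $\mu$-strongly concave on $\calB$; thus $\widehat\bx_i=\argmax_{\bx\in\calB}f(\bx)$ is unique, and being interior it satisfies $\nabla f(\widehat\bx_i)=\zero_d$. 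Finally $\expect[\bm{\xi}_t\mid\calF_t]=\zero_d$ and $\|\bm{\xi}_t\|\le 2G$, so $(\bm{\xi}_t)$ is a bounded martingale-difference sequence.

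Next I would verify the step-size conditions. Because $\|\bg_l\|^2\le G^2$, the accumulator in \eqref{eqn:SGD_adaptive_stepsize} obeys $b_0+\sum_{l<t}\|\bg_l\|^2\le b_0+(t-1)G^2$, so $\alpha_t\gtrsim t^{-(\epsilon+1/2)}$ and, since $\epsilon+1/2\le1$, $\sum_t\alpha_t=\infty$ deterministically. For the complementary bound, note that $\frac{\partial m_i}{\partial\bx}(\bx,j)$ is a scalar multiple of $\widetilde\bx_j$, and for well-defined data the conditional covariance $\mathrm{Cov}[\bg_t\mid\calF_t]$ is bounded below by a positive multiple of $\bI_d$ uniformly over $\widehat\bx_i^{(t)}\in\calB$ (a mild non-degeneracy property); hence $\expect[\|\bg_t\|^2\mid\calF_t]\ge\sigma^2>0$, a martingale strong law gives $b_0+\sum_{l<t}\|\bg_l\|^2\gtrsim t$ almost surely, and therefore $\alpha_t\to0$ and, since $2\epsilon+1>1$, $\sum_t\alpha_t^2<\infty$ almost surely (so $\sum_t\alpha_t\bm{\xi}_t$ converges almost surely by the martingale convergence theorem). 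With these in hand I would treat the step-halving by writing $\widehat\bx_i^{(t+1)}=\widehat\bx_i^{(t)}+\alpha_t\bg_t+\alpha_t\bz_t$, where $\bz_t=-(1-2^{-k_t})\bg_t$ is the minimal correction ($k_t\ge0$ halvings) returning the iterate to $\calB$: this is precisely the constrained stochastic-approximation scheme of Kushner and Yin --- mean field $\nabla f$, bounded martingale-difference noise, step sizes with $\alpha_t\to0$, $\sum_t\alpha_t=\infty$, $\sum_t\alpha_t^2<\infty$, and iterates confined to the compact convex $\calB$ --- whose mean ODE is the projected gradient flow $\dot\bx\in\nabla f(\bx)-N_\calB(\bx)$. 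I would then check that $\widehat\bx_i$ is the unique equilibrium of this ODE in $\calB$ and is globally asymptotically stable: there is no boundary equilibrium because for $\bx_\ast\in\partial\calB$ concavity gives $\langle\nabla f(\bx_\ast),\widehat\bx_i-\bx_\ast\rangle\ge f(\widehat\bx_i)-f(\bx_\ast)>0$, which is incompatible with $\nabla f(\bx_\ast)\in N_\calB(\bx_\ast)=\{c\bx_\ast:c\ge0\}$ since $\langle\bx_\ast,\widehat\bx_i\rangle\le\|\widehat\bx_i\|<1$; and $f$ (or $\|\bx-\widehat\bx_i\|^2$, together with strong concavity) serves as a Lyapunov function. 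The convergence theorem for constrained stochastic approximation then yields $\widehat\bx_i^{(t)}\to\widehat\bx_i$ almost surely. As an alternative to this last step, once $\alpha_t\to0$ is in hand one can argue that halving occurs only finitely often almost surely --- a late triggering at $t$ forces $\|\widehat\bx_i^{(t)}-\widehat\bx_i\|>(1-\|\widehat\bx_i\|)/2$, incompatible with convergence along a subsequence --- after which the recursion is ordinary adaptive SGD and a Robbins--Siegmund argument on $V_t=\|\widehat\bx_i^{(t)}-\widehat\bx_i\|^2$ using $\expect[V_{t+1}\mid\calF_t]\le(1-2\mu\alpha_t)V_t+\alpha_t^2G^2$ delivers almost-sure convergence of $V_t$ together with $\sum_t\alpha_tV_t<\infty$, forcing $V_t\to0$ since $\sum_t\alpha_t=\infty$.

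The hard part will be the step-halving: the \emph{effective} step length $2^{-k_t}\alpha_t$ depends on the current stochastic gradient $\bg_t$ and so is not $\calF_t$-measurable, which breaks the clean martingale/Robbins--Siegmund bookkeeping; this is why I would route the argument through the constrained stochastic-approximation (ODE) formalism, which is built precisely to absorb such projection/truncation terms, or else prove separately that halving ceases after finitely many steps. A secondary technical point is the almost-sure lower bound $b_0+\sum_{l<t}\|\bg_l\|^2\gtrsim t$ underlying $\alpha_t\to0$ and $\sum_t\alpha_t^2<\infty$, which rests on the mild non-degeneracy of the minibatch gradient on $\calB$ that is subsumed by ``$f$ is well-defined''.
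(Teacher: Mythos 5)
Your overall framing (condition on $\bA,\widetilde\bX$, smoothness and strong concavity of $(1/n)\widetilde\ell_{in}$ on the unit ball, bounded martingale-difference minibatch noise) is sound, but the route you take has two genuine gaps. First, your argument hinges on $\alpha_t\to0$ and $\sum_t\alpha_t^2<\infty$ almost surely, which you obtain only by postulating a uniform lower bound on the conditional second moment of the minibatch gradient ($\expect[\|\bg_t\|^2\mid\calF_t]\ge\sigma^2>0$ on all of $\calB$) so that the accumulator in \eqref{eqn:SGD_adaptive_stepsize} grows linearly. This non-degeneracy is \emph{not} ``subsumed by $f$ being well-defined'': well-definedness only gives $\max_j\|\widetilde\bx_j\|_2<1$ and $\widetilde p_{ij}>0$, and says nothing about the dispersion of the per-sample gradients $c_j(\bx)\widetilde\bx_j$; as stated it is an extra hypothesis absent from the theorem. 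The paper's proof never needs $\alpha_t\to0$ or square-summability: it exploits the adaptive structure directly, bounding $\sum_t\|\alpha_t\bg_t\|_2^2$ \emph{deterministically} via Lemma \ref{lemma:Lemma_2_LiOrabona2019} (the step-size denominator already contains $\sum_{l<t}\|\bg_l\|_2^2$), deducing that the iterates are Cauchy, using the descent inequality of Lemma \ref{lemma:Lemma_3_LiOrabona2019} to get $\sum_t\alpha_t\|\nabla\widetilde M_{in}(\bx^{(t)})\|_2^2<\infty$ a.s., showing $\sum_t\alpha_t=\infty$ from the upper bound $\|\bg_t\|_2^2\lesssim C_1^2+C_3^2$, and concluding $\|\nabla\widetilde M_{in}(\bx^{(t)})\|_2\to0$ by Lemma \ref{lemma:Lemma_A5_NeurIPS2013}, whence convergence to $\widehat\bx_i$ by uniqueness of the stationary point.

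Second, the step-halving treatment is the weakest link in both of your variants. The Kushner--Yin constrained-SA theorems you invoke cover projections onto $\calB$, whose correction terms lie (asymptotically) in the normal cone; your correction $\bz_t=-(1-2^{-k_t})\bg_t$ points along $-\bg_t$, not into $N_\calB$, so the projected-ODE machinery does not apply off the shelf, as you yourself flag. Your fallback --- ``halving occurs only finitely often because a late triggering forces $\|\widehat\bx_i^{(t)}-\widehat\bx_i\|>(1-\|\widehat\bx_i\|)/2$, incompatible with convergence along a subsequence'' --- is circular: convergence of the iterates (or even that they eventually stay away from $\partial\calB$) is exactly what is to be proved, and in the scenario with infinitely many halvings no such convergence has been established to contradict. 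The paper sidesteps this entirely: using the interiority of $\widehat\bx_i$ it only argues that the \emph{number} of halvings per iteration is uniformly bounded by some $m^*$, so the effective step size remains within a constant factor of $\alpha_t$ and $\sum_t\alpha_t=\infty$ is preserved; it never needs halving to cease. To repair your proposal you would either have to prove the non-degeneracy lower bound and a non-circular ``halving terminates'' argument, or switch to the adaptive-step-size analysis the paper uses.
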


\begin{remark}
\label{remark:ase-out-of-ball}
The surrogate log-likelihood function $\widetilde{\ell}_{in}(\bx_i)$ is well-defined only when $\bx\transpose\widetilde{\bx}_j < 1$ for all $j\in [n]$ because of the logarithm terms $\{\log(1 - \bx_i\transpose\widetilde{\bx}_j)\}_{j = 1}^n$. For sufficiently large $n$, the constraint is satisfied by requiring that $\|\bx_i\|_2\leq 1$ since the adjacency spectral embedding $\widetilde{\bX} = [\widetilde{\bx}_1,\ldots, \widetilde{\bx}_n]\transpose$ satisfies $\max_{j\in [n]}\|\widetilde{\bx}_j\|_2 < 1$ with high probability. However, this requirement may not hold in certain finite sample problems, in which case the surrogate log-likelihood function $\widetilde{\ell}_{in}(\bx_i)$ is no longer well-defined. 
This numerical issue can be practically addressed by the following smooth concatenation technique. Roughly speaking, for a fixed $j\in [n]$, when $1-\bx_i\transpose\widetilde\bx_j$ drops below a small threshold, we replace the objective function $(1/n)\widetilde{\ell}_{in}(\bx_i)$ by a quadratic function such that the two pieces of functions are concatenated smoothly. 
Formally, let $\tau>0$ be a small threshold and define
\begin{align}\label{eqn:smooth_concatenation}
    h_i(\bx_i, j) & = \left\{
    \begin{aligned}
    &m_i(\bx_i, j),&\quad&\text{if }1 - \bx_i\transpose\widetilde{\bx}_j \geq \tau,\\
    &\alpha_{ij}(\bx_i\transpose\widetilde{\bx}_j)^2 + \beta_{ij}(\bx_i\transpose\widetilde{\bx}_j) + \gamma_{ij},&\quad&\text{if }1 - \bx_i\transpose\widetilde{\bx}_j < \tau,
    \end{aligned}
    \right.
\end{align}
for each $j\in [n]$, where $\alpha_{ij},\beta_{ij},\gamma_{ij}$ are coefficients such that $h_i(\cdot, j)$ is twice continuously differentiable. Then the objective function $(1/n)\widetilde{\ell}_{in}(\bx_i)$ can be replaced by $(1/n)\sum_{j = 1}^nh_i(\bx_i, j)$ and the aforementioned stochastic gradient descent algorithm applies with $\partial m_i(\bx_i, j)/\partial\bx_i$ replaced by $\partial h_i(\bx_i, j)/\partial\bx_i$. 

\end{remark}

\section{Bayesian Estimation With Surrogate Likelihood}
\label{sec:Bayesian_Estimation_Surrogate_Likelihood}
This section explores Bayesian estimation of random dot product graphs with the proposed surrogate likelihood. Although \cite{xie-xu-2020-bayes} has established the minimax optimality of the Bayesian random dot product graph model with the exact likelihood,
the asymptotic shape of the posterior distribution is yet to be characterized because of the complicated structure of the exact likelihood function. In contrast, thanks to the separable and log-concave properties of the surrogate likelihood, we are able to completely characterize the asymptotic posterior distribution of the latent positions with the exact likelihood replaced by the surrogate. Formally, for any fixed vertex $i\in [n]$ and a prior distribution $\pi(\cdot)$ supported on $\{\bx\in\mathbb{R}^d:\|\bx\|_2\leq 1\}$, the posterior distribution of $\bx_i$ given $\bA$ with the surrogate log-likelihood function $\widetilde{\ell}_{in}(\bx_i)$ can be written as
\begin{align}\label{eqn:posterior_distribution_xi}
\widetilde{\pi}_{in}(\bx_i\mid\bA) = \frac{\exp\{\widetilde{\ell}_{in}(\bx_i)\}\pi(\bx_i)}{\int \exp\{\widetilde{\ell}_{in}(\bx_i)\}\pi(\bx_i)\mathrm{d}\bx_i}.
\end{align}
Then the joint posterior density of the entire latent position matrix $\bX = [\bx_1,\ldots,\bx_n]\transpose$ is taken as the product $\widetilde{\pi}_n(\bX\mid\bA) = \prod_{i = 1}^n\widetilde{\pi}_{in}(\bx_i\mid\bA)$ because the surrogate log-likelihood function is separable across different vertices.

When the exact likelihood function is not available or intractable for analysis or computation, 
the idea of using a general statistical criterion function to replace the likelihood in the Bayes formula is not entirely new, among which an influential work is \cite{chernozhukov-hong-2003}. There have also been several recent works addressing the large sample properties of the so-called quasi-posterior or Gibbs posterior distributions \citep{kleijn-vaart-2012, miller2021asymptotic, syring-martin-2018, syring-martin-2022}. One key difference is that unlike the well-specified exact posterior distributions, the frequentist coverage of the credible sets of the quasi-posterior distributions may not agree with their credibility level \citep{kleijn-vaart-2012}. Below, we show that, with the surrogate likelihood, the posterior distribution produces credible sets that have the correct frequentist coverage. This is achieved through the following Bernstein--von Mises theorem. 

\begin{theorem}\label{theorem:posterior_convergence_tvm}
Suppose
the conditions of Theorem \ref{theorem:existence_and_uniqueness} hold and the embedding dimension $d$ is fixed. Let $\pi(\cdot)$ be a prior density satisfying $c\leq\pi(\bx_i)\leq C$ and $|\pi(\bx)-\pi(\by)|\leq C'\|\bx-\by\|_2$ for any $\bx, \by$ with $\|\bx\|_2,\|\by\|_2\leq 1$ for some constants $0<c,C,C'<\infty$. Let 
$\bW$ be the $d\times d$ orthogonal matrix in Theorem \ref{theorem:asymptotic_properties_of_MSLE}. For any fixed vertex $i\in [n]$, let
$\widehat{\bx}_i = \argmax_{\bx_i:\|\bx_i\|_2\leq 1}\widetilde{\ell}_{in}(\bx_i)$,
$\bt = \sqrt{n}\bW\transpose(\bx_i - \widehat{\bx}_i)$, and $\widetilde{\pi}_{in}^*(\bt\mid\bA)$ be the density of $\bt$ induced from 
\eqref{eqn:posterior_distribution_xi}. Then for any $\alpha>0$,
\begin{align}\label{equation:posterior_convergence_tvm}
\max_{i\in[n]}\int (1 + \|\bt\|_2^\alpha)\left|\widetilde{\pi}_{in}^*(\bt\mid\bA) - \det(2\pi\bG_{0in}^{-1})^{-1/2}e^{-\bt\transpose\bG_{0in}\bt/2}\right|\mathrm{d}\bt\overset{\prob_0}{\to}0.
\end{align}
\end{theorem}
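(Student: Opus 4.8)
The plan is to carry out a Laplace-type (Bernstein--von Mises) expansion of the rescaled posterior density around the maximum surrogate likelihood estimator $\widehat{\bx}_i$, using the uniform strong log-concavity of $\widetilde{\ell}_{in}$ to control the posterior globally and a third-order Taylor expansion to pin down its Gaussian shape locally. Writing $\bx_i=\widehat{\bx}_i+n^{-1/2}\bW\bt$ and cancelling the Jacobian $n^{-d/2}$ between the numerator and the denominator of \eqref{eqn:posterior_distribution_xi}, the induced density of $\bt$ becomes $\widetilde{\pi}_{in}^*(\bt\mid\bA)=r_{in}(\bt)/Z_{in}$, where $r_{in}(\bt)=\exp\{\widetilde{\ell}_{in}(\widehat{\bx}_i+n^{-1/2}\bW\bt)-\widetilde{\ell}_{in}(\widehat{\bx}_i)\}\pi(\widehat{\bx}_i+n^{-1/2}\bW\bt)$ (interpreted as $0$ whenever $\widehat{\bx}_i+n^{-1/2}\bW\bt$ leaves the unit ball, since $\pi$ is supported there) and $Z_{in}=\int r_{in}(\bs)\,\mathrm{d}\bs$. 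The first step is to record, on a single event $\calE_n$ with $\prob_0(\calE_n)\to1$ built by a union bound over $i\in[n]$ from the two-to-infinity consistency $\|\rho_n^{-1/2}\widetilde{\bX}\bW-\bX_0\|_{2\to\infty}=o(1)$ and Theorem \ref{theorem:existence_and_uniqueness}, the following deterministic facts: (a) $\widehat{\bx}_i$ exists, is unique, and satisfies $\|\widehat{\bx}_i\|_2\le 1-c'$, so $\nabla\widetilde{\ell}_{in}(\widehat{\bx}_i)=\zero_d$; (b) $-n^{-1}\nabla^2\widetilde{\ell}_{in}(\bx_i)\succeq(\lambda/2)\bI_d$ for every $\bx_i$ in the unit ball, because the first term of the Hessian, $\sum_j\widetilde{\bx}_j\widetilde{\bx}_j\transpose/(\widetilde{\bx}_i\transpose\widetilde{\bx}_j)$, is independent of $\bx_i$ and satisfies $\sum_j\widetilde{\bx}_j\widetilde{\bx}_j\transpose/(\widetilde{\bx}_i\transpose\widetilde{\bx}_j)\succeq(\max_j\widetilde{\bx}_i\transpose\widetilde{\bx}_j)^{-1}\widetilde{\bX}\transpose\widetilde{\bX}\succeq(n\lambda/2)\bI_d$, while the remaining term is positive semidefinite; and (c) $\|\bW\transpose\{-n^{-1}\nabla^2\widetilde{\ell}_{in}(\widehat{\bx}_i)\}\bW-\bG_{0in}\|_2=o(1)$ uniformly over $i$, the Hessian approximation that already underlies the proof of Theorem \ref{theorem:asymptotic_properties_of_MSLE}. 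From (b), strong concavity gives the global bound $\widetilde{\ell}_{in}(\bx_i)-\widetilde{\ell}_{in}(\widehat{\bx}_i)\le-(n\lambda/4)\|\bx_i-\widehat{\bx}_i\|_2^2$, hence the key Gaussian domination $r_{in}(\bt)\le C e^{-\lambda\|\bt\|_2^2/4}$ for all $\bt$; combining this with a crude lower bound from (a) and a second-order Taylor estimate on $\|\bt\|_2\le1$ also yields constants $0<c_0\le Z_{in}\le C_0<\infty$ not depending on $i$ or $n$.

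With these bounds in hand, split $\mathbb{R}^d$ into $\{\|\bt\|_2\le K\}$ and its complement. On $\{\|\bt\|_2\le K\}$ one has $\|\bx_i-\widehat{\bx}_i\|_2\le K/\sqrt{n}\to0$, so $\bx_i$ stays in the region where $1-\bx_i\transpose\widetilde{\bx}_j$ is bounded below and the third derivatives of $\widetilde{\ell}_{in}$ are $O(n)$; a third-order Taylor expansion at $\widehat{\bx}_i$ (whose first-order term vanishes by (a)) together with (c) and the Lipschitz bound on $\pi$ gives $r_{in}(\bt)=\pi(\widehat{\bx}_i)e^{-\bt\transpose\bG_{0in}\bt/2}+o(1)$, uniformly over $\|\bt\|_2\le K$ and $i\in[n]$. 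Integrating this expansion and discarding the tail via the domination shows $Z_{in}=\pi(\widehat{\bx}_i)\det(2\pi\bG_{0in}\inverse)^{1/2}+o(1)$; since $c\bI_d\preceq\bG_{0in}\preceq C\bI_d$ uniformly (a consequence of $\lambda_d(\bX_0\transpose\bX_0/n)\ge\lambda$ and $\min_{i,j}(\bx_{0i}\transpose\bx_{0j},1-\bx_{0i}\transpose\bx_{0j})\ge\delta$) and $c\le\pi\le C$, dividing yields $\widetilde{\pi}_{in}^*(\bt\mid\bA)\to\det(2\pi\bG_{0in}\inverse)^{-1/2}e^{-\bt\transpose\bG_{0in}\bt/2}$ uniformly on $\|\bt\|_2\le K$. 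On the complementary region, the Gaussian domination of $r_{in}$ and the lower bound $Z_{in}\ge c_0$ give $\int_{\|\bt\|_2>K}(1+\|\bt\|_2^\alpha)\widetilde{\pi}_{in}^*(\bt\mid\bA)\,\mathrm{d}\bt\le(C/c_0)\int_{\|\bt\|_2>K}(1+\|\bt\|_2^\alpha)e^{-\lambda\|\bt\|_2^2/4}\,\mathrm{d}\bt$, while the matching tail integral of the limiting Gaussian is likewise $o(1)$ as $K\to\infty$, uniformly in $i$ and $n$ by the uniform eigenvalue bounds on $\bG_{0in}$. Adding the two regions and then sending $K\to\infty$ bounds $\max_{i\in[n]}\int(1+\|\bt\|_2^\alpha)|\widetilde{\pi}_{in}^*(\bt\mid\bA)-\det(2\pi\bG_{0in}\inverse)^{-1/2}e^{-\bt\transpose\bG_{0in}\bt/2}|\,\mathrm{d}\bt$ by a quantity that vanishes on $\calE_n$, which gives \eqref{equation:posterior_convergence_tvm}.

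The main obstacle is the uniformity over all $n$ vertices simultaneously. Each ingredient above---existence, uniqueness and interiority of $\widehat{\bx}_i$, the strong-concavity lower bound on the Hessian, and especially the operator-norm rate in $\bW\transpose\{-n^{-1}\nabla^2\widetilde{\ell}_{in}(\widehat{\bx}_i)\}\bW\approx\bG_{0in}$ (which itself must contend with the dependence between the spectral embedding and $\bA$, but is exactly what the proof of Theorem \ref{theorem:asymptotic_properties_of_MSLE} provides)---must hold on one event of probability $1-o(1)$ after a union bound over $i\in[n]$, which forces the per-vertex exceptional probabilities to decay polynomially in $n$; this is precisely why Theorem \ref{theorem:existence_and_uniqueness} is stated with an $n^{-c}$ bound for every $c>0$ and why the sharp two-to-infinity consistency of the adjacency spectral embedding under $(\log n)/(n\rho_n)\to0$ (and the stronger $(\log n)^4/(n\rho_n)\to0$-type control behind Theorem \ref{theorem:asymptotic_properties_of_MSLE}) is indispensable. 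A secondary point, already flagged above, is ensuring the third-order Taylor remainder is genuinely negligible on $\{\|\bt\|_2\le K\}$, which requires staying in the region where the logarithmic terms $\log(1-\bx_i\transpose\widetilde{\bx}_j)$ of $\widetilde{\ell}_{in}$ have controlled higher derivatives; this is automatic because $\|\bt\|_2\le K$ means $\bx_i$ is within $K/\sqrt{n}$ of $\widehat{\bx}_i$, which is itself uniformly bounded away from the unit sphere.
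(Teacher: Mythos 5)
Your proposal is correct and follows essentially the same route as the paper's proof: a change of variables to $\bt$, global Gaussian domination of the unnormalized posterior from the uniform strong concavity of $\widetilde{\ell}_{in}$ on the unit ball, a local Taylor expansion at $\widehat{\bx}_i$ with the Hessian concentrating on $-\bG_{0in}$ (the paper's Lemma \ref{lemma:concentration_of_hessian_matrices}), and union bounds over $i\in[n]$ using the polynomial $n^{-c}$ exceptional probabilities, with the integral split into a core region, a tail region, and the infeasible set $\widehat{\Theta}_{in}^c$. The only differences are cosmetic: the paper uses a slowly growing radius $\eta_n$ and an exact mean-value second-order expansion (comparing against $\pi(\rho_n^{1/2}\bW\bx_{0i})$ rather than $\pi(\widehat{\bx}_i)$), whereas you use a fixed radius $K\to\infty$ afterwards and a third-order expansion with bounded third derivatives; both are valid.
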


Below, Corollary \ref{corollary:posterior_inference} discusses the effect of Theorem \ref{theorem:posterior_convergence_tvm} on subsequent inference. In particular, it shows that for each vertex $i\in [n]$, the posterior mean has the same asymptotic distribution as the maximum surrogate likelihood estimator, and the asymptotic level-$\alpha$ credible set has the correct frequentist coverage probability.

\begin{corollary}
\label{corollary:posterior_inference}
Suppose the conditions of Theorem \ref{theorem:posterior_convergence_tvm} hold. For any $i\in [n]$, let $\bx_i^*=\int\bx_i\widetilde\pi_{in}(\bx_i\mid\bA)\mathrm{d}\bx_i$ and $\bSigma_{in}^*=\int(\bx_i-\bx_i^*)(\bx_i-\bx_i^*)\transpose\widetilde\pi_{in}(\bx_i\mid\bA)\mathrm{d}\bx_i$ be the posterior mean and covariance matrix of $\bx_i$, respectively, and $\bX^* = [\bx_1^*,\ldots,\bx_n^*]\transpose$. Let $q_{1-\alpha}$ be the $(1-\alpha)$ quantile of the $\chi^2_d$ distribution
and $\mathcal{C}_{in}(\alpha) = \{\bx_i:(\bx_{i}-\bx_i^*)\transpose(\bSigma_{in}^*)\inverse(\bx_{i}-\bx_i^*)\leq q_{1-\alpha}\}$ be the asymptotic $(1 - \alpha)$-credible set for $\bx_i$, where $\bW\in\mathbb{O}(d)$ is given in Theorem \ref{theorem:asymptotic_properties_of_MSLE}. Then
\[
\sqrt{n}\bG_{0in}\halfpower(\bW\transpose\bx_i^*-\rho_n\halfpower\bx_{0i}) \overset{\calL}{\to} \mathrm{N}_d(\zero_d,\bI_d)
\]
and
$
\prob_0\{\rho_n^{1/2}\bW\bx_{0i}\in\mathcal{C}_{in}(\alpha)\} \to 1-\alpha.
$
Furthermore, if $(\log n)^4/(n\rho_n)\to 0$, then 
\[
\|\bX^*\bW - \rho_n^{1/2}\bX_0\|_{\mathrm{F}}^2 - \frac{1}{n}\sum_{i = 1}^n\mathrm{tr}(\bG_{0in}^{-1}) \overset{\prob_0}{\to} 0.
\]
%
%
\end{corollary}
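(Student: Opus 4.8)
The plan is to deduce all three conclusions from the Bernstein--von Mises result \eqref{equation:posterior_convergence_tvm} in Theorem \ref{theorem:posterior_convergence_tvm}, using the uniform (in $i$) convergence in total variation with the polynomial weight $(1+\|\bt\|_2^\alpha)$. First I would record the consequence of \eqref{equation:posterior_convergence_tvm} for moments. Since the weight includes $\|\bt\|_2^\alpha$ for every fixed $\alpha>0$, convergence of the weighted total variation distance to the Gaussian $\mathrm{N}_d(\zero_d,\bG_{0in}^{-1})$ implies convergence of the first and second moments of $\widetilde{\pi}_{in}^*(\cdot\mid\bA)$ to those of the limiting Gaussian. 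Concretely, in the $\bt$-parametrization $\bt=\sqrt{n}\bW\transpose(\bx_i-\widehat{\bx}_i)$, the posterior mean of $\bt$ converges to $\zero_d$ in probability and the posterior second-moment matrix of $\bt$ converges to $\bG_{0in}^{-1}$ in probability, uniformly over $i\in[n]$. Translating back, $\sqrt{n}\bW\transpose(\bx_i^*-\widehat{\bx}_i)\to\zero_d$ and $n\bSigma_{in}^*\to\bG_{0in}^{-1}$ (equivalently $n\bW\transpose\bSigma_{in}^*\bW\to\bW\transpose\bG_{0in}^{-1}\bW$, but since $\bG_{0in}$ is defined from $\bX_0$ the cleanest route is to carry the $\bW$'s through consistently) in probability.

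For the first conclusion, I would write
$
\sqrt{n}\bG_{0in}\halfpower(\bW\transpose\bx_i^*-\rho_n\halfpower\bx_{0i})
= \sqrt{n}\bG_{0in}\halfpower(\bW\transpose\widehat{\bx}_i-\rho_n\halfpower\bx_{0i}) + \bG_{0in}\halfpower\sqrt{n}\bW\transpose(\bx_i^*-\widehat{\bx}_i).
$
The first term converges to $\mathrm{N}_d(\zero_d,\bI_d)$ by Theorem \ref{theorem:asymptotic_properties_of_MSLE}, the second term is $o_{\prob}(1)$ by the moment convergence above (together with boundedness of $\bG_{0in}\halfpower$, which follows from the eigenvalue lower bound $\lambda_d(\bX_0\transpose\bX_0/n)\geq\lambda$ and the bound $\delta$, as is used already in Theorem \ref{theorem:existence_and_uniqueness}); Slutsky's theorem then gives the claim. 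For the coverage statement, I would argue that $\mathcal{C}_{in}(\alpha)$ is asymptotically the same ellipsoid as the one centered at $\widehat{\bx}_i$ with shape $(1/n)\bG_{0in}^{-1}$: since $\bSigma_{in}^*=(1/n)\bG_{0in}^{-1}(1+o_{\prob}(1))$ and $\bx_i^*=\widehat{\bx}_i+o_{\prob}(n^{-1/2})$, the quadratic form $(\rho_n^{1/2}\bW\bx_{0i}-\bx_i^*)\transpose(\bSigma_{in}^*)\inverse(\rho_n^{1/2}\bW\bx_{0i}-\bx_i^*)$ equals $n(\rho_n^{1/2}\bx_{0i}-\bW\transpose\widehat{\bx}_i)\transpose\bG_{0in}(\rho_n^{1/2}\bx_{0i}-\bW\transpose\widehat{\bx}_i)+o_{\prob}(1)$, which by the asymptotic normality in Theorem \ref{theorem:asymptotic_properties_of_MSLE} and the continuous mapping theorem converges in distribution to $\chi^2_d$; hence the probability that it is at most $q_{1-\alpha}$ tends to $1-\alpha$.

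For the third conclusion, the Frobenius-norm expansion, I would mimic the proof of the analogous statement in Theorem \ref{theorem:asymptotic_properties_of_MSLE}. Write $\|\bX^*\bW-\rho_n\halfpower\bX_0\|\frobenius^2=\sum_{i=1}^n\|\bW\transpose\bx_i^*-\rho_n\halfpower\bx_{0i}\|_2^2$ and decompose each summand using $\bx_i^*-\widehat{\bx}_i$; the cross term and the $\sum_i\|\bW\transpose(\bx_i^*-\widehat{\bx}_i)\|_2^2$ term must be shown to be $o_{\prob}(1)$ after the normalization implicit in the statement. Here I would need the uniform-in-$i$ version of $n\|\bW\transpose(\bx_i^*-\widehat{\bx}_i)\|_2^2=o_{\prob}(1)$, which is exactly what the $\max_{i\in[n]}$ in \eqref{equation:posterior_convergence_tvm} delivers, combined with $\|\widehat{\bX}\bW-\rho_n\halfpower\bX_0\|\frobenius^2-(1/n)\sum_i\mathrm{tr}(\bG_{0in}\inverse)\to0$ from Theorem \ref{theorem:asymptotic_properties_of_MSLE} under the stronger sparsity $(\log n)^4/(n\rho_n)\to0$.

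\textbf{Main obstacle.} The delicate point is making the ``moment convergence from weighted total variation'' step rigorous and \emph{uniform in $i$}: one has to ensure that the posterior first and second moments in the $\bt$-scale do not pick up contributions from the tails of $\widetilde{\pi}_{in}^*(\cdot\mid\bA)$ that the total-variation bound alone does not control. The weight $(1+\|\bt\|_2^\alpha)$ in \eqref{equation:posterior_convergence_tvm} is precisely engineered for this — choosing $\alpha=2$ (or slightly larger, then using uniform integrability) bounds $\int\|\bt\|_2^2\,|\widetilde{\pi}_{in}^*-\phi_{\bG_{0in}^{-1}}|\,\mathrm{d}\bt$ directly, so the second moments converge; and the boundedness of the normalizing quantities $\bG_{0in}$, $\bG_{0in}\inverse$ uniformly in $i$ and $n$ (from the $\lambda$ and $\delta$ conditions) is what upgrades everything to hold uniformly over $i\in[n]$, which is needed for the Frobenius-norm conclusion. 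Everything else is Slutsky, continuous mapping, and the triangle inequality.
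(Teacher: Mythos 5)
Your proposal is correct and follows essentially the same route as the paper's proof: extract first- and second-moment convergence (uniformly in $i$) from the weighted total-variation bound \eqref{equation:posterior_convergence_tvm}, conclude $\sqrt{n}\bW\transpose(\bx_i^*-\widehat{\bx}_i)=o_{\prob_0}(1)$ and $n\bW\transpose\bSigma_{in}^*\bW\to\bG_{0in}^{-1}$, then apply Slutsky and the continuous mapping theorem for the normality and coverage claims, and combine the uniform bound $\max_{i}\|\bx_i^*-\widehat{\bx}_i\|_2^2=o_{\prob_0}(1/n)$ with the Frobenius-norm result of Theorem \ref{theorem:asymptotic_properties_of_MSLE} and Cauchy--Schwarz for the final assertion. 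The ``main obstacle'' you flag is handled exactly as you suggest, by the polynomial weight in the theorem together with the uniform eigenvalue bounds on $\bG_{0in}$.
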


In practice, the posterior distribution based on the surrogate likelihood can be computed using a standard Metropolis--Hastings algorithm with parallelization over the vertices $i\in [n]$. The detailed algorithm is provided in the Supplementary Material. Note that in practice, we can also apply the smooth concatenation technique discussed in Remark \ref{remark:ase-out-of-ball} to the posterior computation by simply replacing the surrogate log-likelihood function $\widetilde{\ell}_{in}(\bx_i)$ in the Bayes formula \eqref{eqn:posterior_distribution_xi} by $\sum_{j = 1}^nh_i(\bx_i, j)$ defined in \eqref{eqn:smooth_concatenation}.

\section{Numerical Examples}
\label{section:numerical}

\subsection{A latent curve example}
\label{sub:latent_curve_example}
In this subsection, we study the empirical performance of the proposed estimation procedures through a simulated random dot product graph example, where the latent positions are generated from a one-dimensional curve.
Consider a random dot product graph with $n$ vertices and latent dimension $d = 1$. For each vertex $i\in [n]$, the latent position $x_{0i}$ for the $i$th vertex is set to $x_{0i} = 0.8\sin\{\pi(i-1)/(n-1)\} + 0.1$. Let $\bX_0 = [x_{01}, \ldots, x_{0n}]\transpose$, $n = 1000$. Given $\bA\sim\mathrm{RDPG}(\bX_0)$, we consider the following four estimation procedures for $\bX_0$: the adjacency spectral embedding (ASE), the one-step estimate (OSE), the maximum surrogate likelihood estimate (MSLE) obtained using the step-halving stochastic gradient descent algorithm,
and the Bayes estimate with the surrogate likelihood (BE). For the Bayes estimate, we use the uniform prior on the unit disk for all $\bx_i$. The Metropolis--Hastings sampler is implemented with parallelization over vertices $i\in [n]$, and each Markov chain contains $1000$ burn-in iterations and $2000$ post-burn-in samples with a thinning of $5$. The posterior mean is taken as the point estimate. The convergence diagnostics of the Markov chains are provided in the Supplementary Material, showing no signs of non-convergence.

The performance of the above estimates is investigated via the following two objectives: The recovery of the latent position matrix $\bX_0$; The empirical coverage probabilities of the vertex-wise confidence intervals based on the MSLE and credible intervals based on the BE. Specifically, for the first objective, given a generic estimate $\bar{\bX}$ for $\bX_0$, we use the sum of squared errors (SSEs) $\inf_{\bW\in\{\pm1\}}\|\bar{\bX}\bW - \bX_0\|\frobenius^2$ as the evaluation metric. For the second objective, we compute the vertex-wise asymptotic $95\%$ frequentist confidence intervals and Bayesian credible intervals. The vertex-wise $95\%$ confidence intervals based on the MSLE are computed as follows: 
Denote the $1-\alpha/2$ quantile of the standard normal distribution by $z_{1-\alpha/2}$. Then by Theorem \ref{theorem:asymptotic_properties_of_MSLE}, for each $i\in [n]$, the $(1-\alpha)$ confidence interval for $x_{0i}$ is
$(|\widehat{x}_i|-\{n\widehat\bG(\widehat{x}_i)\}^{-1/2}{z_{1-\alpha/2}}, |\widehat{x}_i|+ \{n\widehat\bG(\widehat{x}_i)\}^{-1/2}{z_{1-\alpha/2}})$, where $\widehat\bG_{in}(\widehat{x}_i)=(1/n)\sum_{j=1}^n\widehat{x}_j\{\widehat{x}_i(1-\widehat{x}_i\widehat{x}_j)\}\inverse$ is the plug-in estimate of the asymptotic variance. The vertex-wise $95\%$ credible intervals based on the posterior distribution with the surrogate likelihood function can be obtained directly from the Metropolis--Hastings samples. The same numerical experiment is repeated for $1000$ Monte Carlo replicates. 

\begin{table}[htbp]
  \centering
  \begin{tabular}{c | c c c c}
    \hline\hline
    Estimate & ASE & OSE & MSLE & BE\\
    \hline
    SSE & 0.4707 & 0.4592 & 0.4596  & 0.4608 \\
    Standard error for SSE & 0.0216 & 0.0209 & 0.0209 & 0.0210 \\
    \hline\hline
    Two-sample $t$-test & & ASE vs OSE & ASE vs MSLE & ASE vs BE \\
    \hline
    $p$-value & & $1.0\times 10^{-32}$ & $7\times 10^{-31}$ & $8\times 10^{-25}$ \\
    \hline\hline
  \end{tabular}
  \caption{The average SSEs, their standard errors, and the $p$-values of the two-sample $t$-tests between the SSEs of the ASE against the remaining three estimates for Section \ref{sub:latent_curve_example} with $n = 1000$. }
  \label{table:latent_curve_simulation}
\end{table}


For the first objective, the SSEs of the estimates are shown in Table \ref{table:latent_curve_simulation}. We can see that the SSEs of the adjacency spectral embedding is comparatively larger than those of the remaining competitors, while the likelihood-based estimates have smaller SSEs. The $p$-values of the pairwise two-sample $t$-tests among the SSEs of these estimates are tabulated in Table \ref{table:latent_curve_simulation} as well and they show that the differences between the ASE and the remaining likelihood-based estimates are statistically significant. 
This phenomenon empirically validate the conclusion that the likelihood-based estimates, namely, the OSE, the MSLE, and the BE, improve upon the the spectral-based adjacency spectral embedding. 

\begin{figure}[htbp]
\begin{subfigure}{0.5\linewidth}
\includegraphics[width=\linewidth]{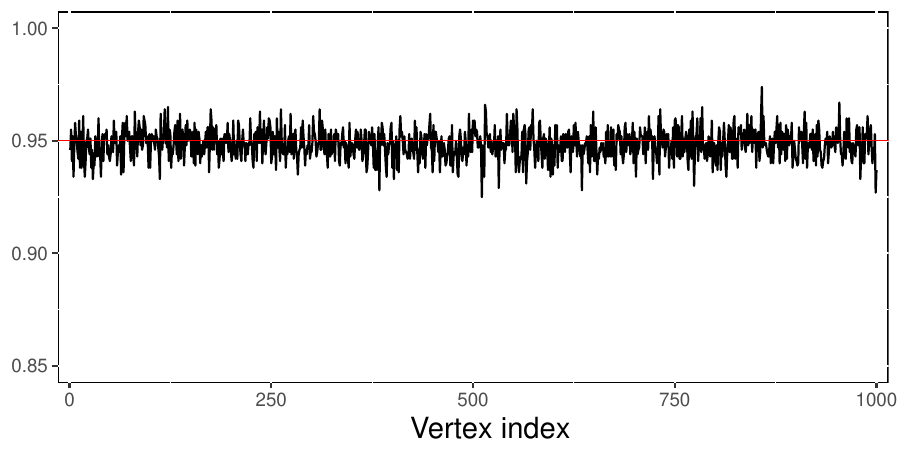}
\caption{Coverage probabilities of confidence intervals}
\end{subfigure}
\begin{subfigure}{0.5\linewidth}
\includegraphics[width=\linewidth]{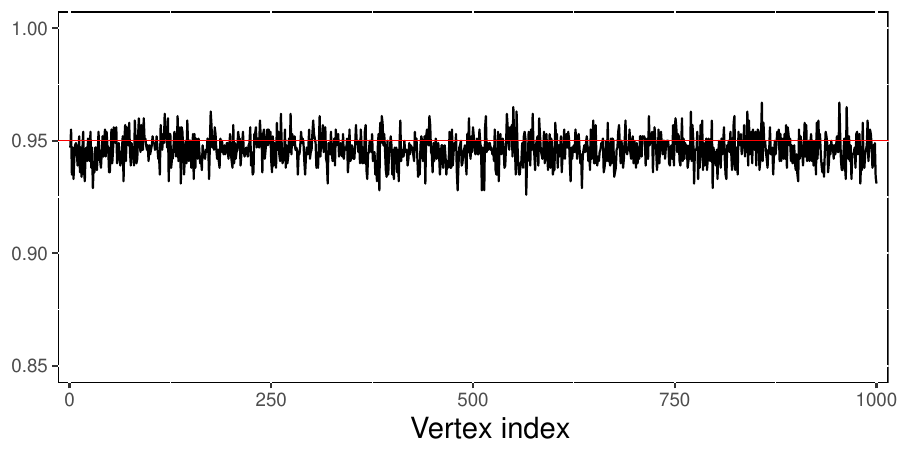}
\caption{Coverage probabilities of credible intervals}
\end{subfigure}
\caption{Numerical results for Section \ref{sub:latent_curve_example}: Panels (a) and (b) present the empirical coverage probabilities of the 95\% confidence intervals constructed based on the MSLE and the 95\% credible intervals constructed from the Metropolis--Hastings samples, respectively, where the red horizontal lines mark the 95\% nominal coverage probability.}
\label{figure:example-1}
\end{figure}

For the second objective, Figure \ref{figure:example-1} (a) and (b) visualize the empirical coverage probabilities of the vertex-wise $95\%$ confidence intervals based on the MSLE and the vertex-wise $95\%$ Bayesian credible intervals across the $1000$ Monte Carlo replicates, respectively. 
It is clear that the empirical coverage probabilities of these confidence intervals and credible intervals are close to the nominal $95\%$ level, validating the theory developed in Section \ref{sec:MSLE} and Section \ref{sec:Bayesian_Estimation_Surrogate_Likelihood}.

In addition to the above investigation in a large sample regime with $n = 1000$, we also explore the performance of the proposed estimation methods in a comparatively small sample regime with $n = 30$. Here, we focus on the performance of different estimates using the SSE as the evaluation metric. Besides the aforementioned four estimates, we also consider the maximum likelihood estimate (MLE). Note that although the theory of the MLE is still open, it is always possible to find a local maximizer of the likelihood function using any optimization toolkit. Here we use the \texttt{R} built-in \texttt{optim} function in practice. We repeat the same numerical experiment for $1000$ independent Monte Carlo replicates, visualize the boxplots of the SSEs in Figure \ref{fig:latent_curve_small_sample}, and tabulate the numeric values of the summary statistics of these SSEs in Table \ref{table:latent_curve_simulation_n30}.
We can see that in this small sample scenario, the MSLE and the OSE do not outperform the baseline ASE and the MLE as they have larger SSEs, while the BE has the least SSEs. The $p$-values of the pairwise $t$-test of the SSEs of the BE against those of the remaining competitors are reported in Table \ref{table:latent_curve_simulation_n30} as well, showing that the differences between BE and the other competitors are statistical significant. This observation shows the potential advantage of the Bayesian estimation procedure based on the Markov chain Monte Carlo sampling algorithm over the classical optimization-based estimation methods for finite-sample problems in practice.

\begin{table}[htbp!]
  \centering
  \begin{tabular}{c | c c c c c}
    \hline\hline
    Estimate & ASE & OSE & MSLE & MLE & BE\\
    \hline
    SSE & 0.4594 & 0.4608 & 0.5739 & 0.4451 & 0.3886\\
    Standard error for SSE & 0.1204 & 0.1439 & 0.2409 & 0.1139 & 0.1079\\
    {Computation time (seconds)} & $9\times10^{-3}$ & $4.8\times10^{-2}$ & 139 & $5.1$ & 148\\
    \hline\hline
    Two-sample $t$-test & BE vs ASE & BE vs OSE & BE vs MSLE & BE vs MLE &\\
    \hline
    $p$-value & $1.1\times 10^{-41}$ & $1.9\times 10^{-35}$ & $1.2\times 10^{-93}$ & $3.7\times 10^{-29}$ & \\
    \hline\hline
  \end{tabular}
  \caption{The average SSEs, their standard errors, and the $p$-values of the two-sample $t$-tests between the SSEs of the BE against the remaining estimates for Section \ref{sub:latent_curve_example} with $n = 30$, }
  \label{table:latent_curve_simulation_n30}
\end{table}

\subsection{A rank-two random dot product graph example}
\label{sub:a_rank_two_random_dot_product_graph_example}

We now consider a rank-two random dot product graph with $n = 300$ vertices and latent dimension $d = 2$, where the latent positions $\bX_0 = [\bx_{01},\ldots,\bx_{0n}]\transpose$ are given by $\bx_{0i} = [0.15\sin\{\pi(i - 1)/(n - 1)\} + 0.6, 0.15\cos\{\pi(i - 1)/(n - 1)\} + 0.6]\transpose$. Similar to Section \ref{sub:latent_curve_example}, for a given $\bA\sim\mathrm{RDPG}(\bX_0)$, we also implement the adjacency spectral embedding (ASE), the one-step estimate (OSE), the maximum surrogate likelihood estimate (MSLE), and the Bayes estimate with the surrogate likelihood (BE). The implementation details are the same as those of \ref{sub:latent_curve_example}, and we compute the sum-of-squared errors (SSEs) $\inf_{\bW\in\mathbb{O}(d)}\|\bar{\bX}\bW - \bX_0\|_{\mathrm{F}}^2$ as the evaluation metric. The same numerical experiment is repeated for $1000$ Monte Carlo replicates. The average SSEs, their standard errors, and the computation times across repeated experiments are summarized in Table \ref{table:Rank2_RDPG_simulation}. 
\begin{table}[htbp]
  \centering
  \begin{tabular}{c | c c c c}
    \hline\hline
    Estimate & ASE & OSE & MSLE & BE\\
    \hline
    SSE & 14.94 & 21.70 & 13.34 & 10.91\\
    Standard error for SSE & 0.4326 & 8.057 & 0.5929 & 0.6229 \\
    Computation time (seconds) &  0.005 & 0.01 & 4.80 & 41.43 \\
    \hline\hline
  \end{tabular}
  \caption{The average SSEs, their standard errors, and the computation times for Section \ref{sub:a_rank_two_random_dot_product_graph_example} with $n = 300$. }
  \label{table:Rank2_RDPG_simulation}
\end{table}
The differences in SSEs are statistically significant. It is clear that the Bayes estimate with the surrogate likelihood results in the best performance in terms of SSE, while both the maximum surrogate likelihood estimate and the Bayes estimate outperform the baseline adjacency spectral embedding and the one-step estimate. Note, nonetheless, that the performance improvement of the proposed methods is at the cost of additional computation times.



\begin{figure}[t]
\centering
\includegraphics[width=0.5\linewidth]{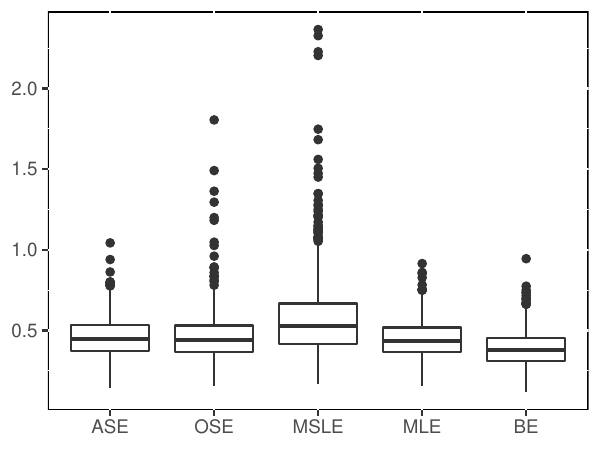}
\caption{Boxplots of the sum of squared errors (SSEs) $\inf_{\bW\in\{\pm1\}}\|\bar{\bX}\bW - \bX_0\|\frobenius^2$ for Section \ref{sub:latent_curve_example} with $n = 30$.}
\label{fig:latent_curve_small_sample}
\end{figure}

\subsection{A stochastic block model example}
\label{sub:SBM_example}
We now consider a stochastic block model in the context of a random dot product graph.
The latent dimension is $d = 2$, the number of communities is $K = 5$, and the unique latent positions are $\bv_1 = [0.3, 0.3]\transpose, \bv_2 = [0.5, 0.5]\transpose, \bv_3 = [0.7, 0.7]\transpose, \bv_4 = [0.3, 0.7]\transpose$, and $\bv_5 = [0.7, 0.3]\transpose$. 
The cluster assignments of the vertices $(z_i)_{i = 1}^n$ are drawn from a categorical distribution with probability vector $[1/K, \ldots, 1/K]\transpose$ and we set $\bx_{0i} = \bv_{z_i}$, $i\in [n]$. Note that $\bv_3$ is very close to the boundary of the parameter space. Let $\bX_0 = [\bx_{01}, \ldots, \bx_{0n}]\transpose$ and suppose an adjacency matrix $\bA$ is generated from $\mathrm{RDPG}(\bX_0)$. We consider two sample sizes, $n=2000$ and $n=3000$.

We consider the performance of the same estimates as in Section \ref{sub:latent_curve_example} given a realization $\bA\sim\mathrm{RDPG}(\bX_0)$: the ASE, the OSE, the MSLE 
and the BE with the surrogate likelihood computed using the Metropolis--Hastings sampler.
For the MSLE, we implement the step-halving stochastic gradient descent algorithm
with the batch size set to $s = 500$ and $s = n$ (giving rise to the classical gradient descent algorithm) to compare the computational costs. 
The setup of the Metropolis--Hastings sampler for the Bayesian estimation is the same as in Section \ref{sub:latent_curve_example}, and the convergence diagnostics are provided in the Supplementary Material. We take the posterior mean as the point estimate as before. The same experiment is repeated for $1000$ independent Monte Carlo replicates. 

Similar to Section \ref{sub:latent_curve_example}, given a generic estimate $\bar{\bX}$, we compute the SSEs of the estimates $\inf_{\bW\in\mathbb{O}(2)}\|\bar{\bX}\bW - \bX_0\|\frobenius^2$ to measure the estimation accuracy. 
%
The summary statistics of these results are visualized in Table \ref{table:SBM_simulation_2000} (for $n = 2000$) and Table \ref{table:SBM_simulation_3000} (for $n = 3000$), respectively. We see that the OSE is numerically unstable because $v_3$ is close to the boundary of the parameter space. Overall, the BE outperforms the other competitors with the least errors, while the ASE and the MSLE have similar performance in terms of the estimation error. The $p$-values of the pairwise two-sample $t$-tests among the SSEs of these estimates are reported in Table \ref{table:SBM_simulation_pvalue_2000} (for $n = 2000$) and Table \ref{table:SBM_simulation_pvalue_3000} (for $n = 3000$), showing that the differences between the BE and the remaining competitors are statistically significant. This phenomenon suggests that, when some latent positions are close to the boundary of the parameter space, the Bayesian estimation method based on the Markov chain Monte Carlo sampler is numerically more stable than the optimization-based frequentist ASE and the MSLE.
\begin{table}[t!]
  \centering
  \begin{tabular}{c | c c c c c}
    \hline\hline
    Estimate & ASE & OSE & MSLE-SGD & MSLE-GD & BE\\
    \hline
    SSEs & 8.570 & 31.646 &  8.510 &  8.513 &  7.970 \\
    Standard errors of SSEs & 0.250 & 18.242 &  0.250 &  0.250 &  0.378 \\
    Computation time (seconds) & 0.240 &  0.425 &  9.920 & 18.028 & 98.524 \\
    \hline\hline
  \end{tabular}
  \caption{Numerical results for Section \ref{sub:SBM_example}: The average sum of squared errors, their standard errors, and the computation time of a single experiment (in seconds). MSLE-SGD and MSLE-GD refer to the MSLE computed by the stochastic gradient descent with batch size being $500$ and the classical gradient descent. Sample size $n=2000$.}
  \label{table:SBM_simulation_2000}
\end{table}
\begin{table}[t!]
  \centering
  \begin{tabular}{c | c c c c c}
    \hline\hline
    Estimate & ASE & OSE & MSLE-SGD & MSLE-GD & BE\\
    \hline
    SSEs & 8.365 & 44.079 & 8.323 & 8.325 & 7.892 \\
    Standard errors of SSEs & 0.227 & 34.972 &  0.225 & 0.226 & 0.410 \\
    Computation time (seconds) & 0.599 & 1.127 & 14.666 & 28.760 & 225.676 \\
    \hline\hline
  \end{tabular}
  \caption{Numerical results for Section \ref{sub:SBM_example}: The average sum of squared errors, their standard errors, and the computation time of a single experiment (in seconds). MSLE-SGD and MSLE-GD refer to the MSLE computed by the stochastic gradient descent with batch size being $500$ and the classical gradient descent. Sample size $n=3000$.}
  \label{table:SBM_simulation_3000}
\end{table}

The computation times of a single experiment for different estimation procedures are reported in Table \ref{table:SBM_simulation_2000} and \ref{table:SBM_simulation_3000}. We see that the ASE and the OSE are faster to compute, whereas the MSLE obtained through the classical gradient descent algorithm and the BE are more computationally expensive. We also observe that the stochastic gradient descent algorithm is significantly faster than the classical gradient descent algorithm for finding the MSLE and gains computational efficiency at the cost of estimation accuracy compared to the BE. 
\begin{table}[t]
  \centering
  \begin{tabular}{c | c c c c }
    \hline\hline
    Two-sample $t$-test & BE vs ASE & BE vs OSE & BE vs MSLE-SGD & BE vs MSLE-GD\\
    \hline
    $p$-value & $6\times 10^{-28}$ & $9\times 10^{-17}$ & $3\times 10^{-24}$  & $2\times 10^{-24}$\\
    \hline\hline
  \end{tabular}
  \caption{$p$-values of the two-sample $t$-tests between the SSEs of the BE against the remaining estimates for Section \ref{sub:SBM_example}. Sample size $n=2000$.}
  \label{table:SBM_simulation_pvalue_2000}
\end{table}

The computation times of a single experiment for different estimation procedures are reported in Table \ref{table:SBM_simulation_2000} and \ref{table:SBM_simulation_3000}. We see that the ASE and the OSE are faster to compute, whereas the MSLE obtained through the classical gradient descent algorithm and the BE are more computationally expensive. We also observe that the stochastic gradient descent algorithm is significantly faster than the classical gradient descent algorithm for finding the MSLE and gains computational efficiency at the cost of estimation accuracy compared to the BE. 
\begin{table}[t]
  \centering
  \begin{tabular}{c | c c c c }
    \hline\hline
    Two-sample $t$-test & BE vs ASE & BE vs OSE & BE vs MSLE-SGD & BE vs MSLE-GD\\
    \hline
    $p$-value & $1.1\times 10^{-18}$ & $1.9\times 10^{-17}$ & $2.3\times 10^{-16}$  & $1.7\times 10^{-16}$\\
    \hline\hline
  \end{tabular}
  \caption{$p$-values of the two-sample $t$-tests between the SSEs of the BE against the remaining estimates for Section \ref{sub:SBM_example}. Sample size $n=3000$.}
  \label{table:SBM_simulation_pvalue_3000}
\end{table}

\subsection{Analysis of Wikipedia Graph Dataset}
\label{subsection:wikipedia-graph}
In this section, we apply the proposed surrogate likelihood estimation methods to a real-world Wikipedia graph dataset. The network data is structured as follows: The vertices represent $1382$ Wikipedia articles that are connected to the article named Algebraic Geometry within two hyperlinks, and an edge is assigned to link two articles if they are connected by a hyperlink. Besides the network itself, each Wikipedia article is also assigned with one of the following six class labels: people, places, dates, things, math and category. The dataset is  publicly available at at \url{http://www.cis.jhu.edu/~parky/Data/data.html}. 

The goal is to study the clustering accuracy using different estimates when the embedding dimension varies. Given a selected embedding dimension $d\geq 1$, we consider the following four estimates: the ASE, the OSE, the MSLE computed using the step-halving stochastic gradient descent algorithm, and the BE based on the surrogate likelihood (we consider the posterior mean as the point estimate) with the uniform prior distribution on the unit disk for all $\bx_i$. Unlike the scenarios in the simulated examples in Sections \ref{sub:latent_curve_example} and \ref{sub:SBM_example}, for this real dataset, the underlying ground truth of the latent positions is unknown. Rather, only the class labels of the vertices are available to us. To this end, we follow the suggestion in \cite{tang-priebe-2018} and apply the Gaussian-mixture-model-based clustering to the aforementioned four estimates. Namely, these estimates are regarded as the input for learning the clustering structure of the Wikipedia article network. We report the clustering accuracy using the Rand index \citep{rand-1971} as the evaluation metric. 

The Rand indices of the clustering results using different estimates across different embedding dimensions $d\in \{1,2,\ldots,10\}$ are shown in Table \ref{table:Wikipedia_RI}. On one hand, we can see that when $d\leq 2$, the adjacency spectral embedding yields better clustering accuracy with a higher Rand index value than the remaining competitors. On the other hand, as the embedding dimension $d$ increases from $2$ to $10$, the MSLE and the BE with the surrogate likelihood outperform the other two competitors. A plausible explanation of this phenomenon could be that the eigenvectors of the adjacency matrix with smaller eigenvalues are noisier than the top two eigenvectors, but this source of noise is reduced through the additional information introduced by the surrogate likelihood function. 
  
\begin{table}
  \centering
  \begin{tabular}{c c c c c c c c c c c}
    \hline\hline
    $d$ & 1 & 2 & 3 & 4 & 5 & 6 & 7 & 8 & 9 & 10\\
    \hline
    ASE  & 0.745 & 0.720 & 0.721 & 0.723 & 0.731 & 0.735 & 0.736 & 0.721 & 0.723 & 0.715 \\
    OSE  & 0.697 & 0.706 & 0.724 & 0.728 & 0.735 & 0.739 & 0.739 & 0.741 & 0.744 & 0.740 \\
    MSLE & 0.723 & 0.711 & 0.726 & 0.736 & 0.739 & 0.744 & 0.742 & 0.744 & 0.742 & 0.747 \\
    BE   & 0.718 & 0.715 & 0.724 & 0.735 & 0.735 & 0.742 & 0.743 & 0.744 & 0.744 & 0.745 \\
    \hline\hline
  \end{tabular}
  \caption{Numerical results of Wikipedia graph data: Rand indices between the class labels and the clustering results based on the four estimates, across embedding dimensions $d$ from $1$ to $10$, respectively. 
  }
  \label{table:Wikipedia_RI}
\end{table}



\subsection{Analysis of Political Blogs Network}
\label{sub:additional_real_world_data_examples}
We now consider the political blogs network \citep{10.1145/1134271.1134277}, a benchmark network data that has also been analyzed by \cite{PhysRevE.83.016107,10.1214/12-AOS1036,10.1214/13-AOS1138,10.1214/14-AOS1265,10.1111/rssb.12117,10.1214/15-AOS1360}. The network corresponds to the hyperlinks of blogs regarding U.S. politics after the 2004 presidential election. These blogs are manually classified as either liberal or conservative, which we use as the ground true communities. After following the rule of thumb by extracting the largest connected component and converting the resulting network with undirected edges, we obtain an $1224\times 1224$ adjacency matrix with $33430$ edges. We implement the proposed maximum surrogate likelihood estimate and the associated Bayes estimate, together with the adjacency spectral embedding and the one-step estimate as the competitors. We choose the embedding dimension to be $d = 2$ (the same as the number of clusters). Similar to the treatment in Section \ref{subsection:wikipedia-graph}, these latent position estimates are then applied to the Gaussian-mixture-model-based clustering, which we compare against the true community labels via the adjusted Rand index (ARI). 
Since the political blogs network is known to be closer to a degree-corrected stochastic block model (DCSBM) as opposed to the stochastic block model, we also consider the clustering algorithms designed for DCSBM after obtaining the latent position estimates. Specifically, we apply the spherical $k$-means \citep{lei-rinaldo-2015,lyzinski-sussman-tang-athreya-priebe-2014} and the spectral clustering on ratios-of-eigenvectors (SCORE) \citep{10.1214/14-AOS1265}. Note that in order to apply SCORE to latent positions estimates, we first compute the left singular vector matrix $\widehat{\bU}$ of an estimated latent position matrix $\widehat{\bX}$ and then apply SCORE to the orthonormal matrix $\widehat{\bU}$.
See Table \ref{table:polblogs_network} below for the detailed comparison, together with the computation time. When the clustering method is based on the Gaussian mixture model, the proposed Bayes estimate (BE) associated with the surrogate likelihood is more accurate in terms of recovering the liberal-versus-conservative community structure of these political blogs, although its computational cost is much more expensive compared to the maximum surrogate likelihood estimate (MSLE). In this case, both the BE and the MSLE result in significantly better ARI compared to the ASE and the OSE. When the clustering method is either the spherical $k$-means or SCORE, the ARI differences are marginal. Note that among all different clustering methods applied to different latent position estimates, SCORE applied to the ASE and the MSLE gives the best clustering results compared to other competitors, and they outperform the Gaussian-mixture-model-based clustering by a large margin. 
\begin{table}[t!]
  \centering
  \begin{tabular}{c | c c c c}
    \hline\hline
    \diagbox{Clustering method}{Estimate} & ASE & OSE & MSLE & BE\\
    \hline
    Gaussian mixture model & 0.1321 & 0.0416 & 0.4439 & 0.4660 \\
    Spherical $k$-means & 0.8104 & 0.8046 & 0.7726 & 0.8075\\
    SCORE & 0.8193 & 0.7900 & 0.8193 & 0.8134\\
    \hline
    Computation time (seconds) & 0.14 & 0.16 & 106.41 & 54373 \\
    \hline\hline
  \end{tabular}
  \caption{ARI and the computation times for Section \ref{sub:additional_real_world_data_examples}.}
  \label{table:polblogs_network}
\end{table}

\section{Discussion}
\label{section:last}
In this paper, we propose a novel surrogate likelihood estimation framework for random dot product graphs. 
The surrogate likelihood has several fascinating properties, including the separability and the log-concavity, that facilitate theoretical analyses and practical computation. 
We study the maximum surrogate likelihood estimation from the frequentist perspective and the Bayesian estimation using the surrogate likelihood. In particular, we establish the existence, uniqueness, and asymptotic normality of the maximum likelihood estimator, and propose a convenient stochastic gradient descent algorithm for the computation. 
Furthermore, we derive the $O(n^{-1})$ biases of the maximum surrogate likelihood estimator and the one-step estimator. These formulae illustrate how the former outperforms the latter in finite sample problems. 
We also establish the Bernstein--von Mises theorem of the posterior distribution with the surrogate likelihood function and show that the resulting credible sets have the correct frequentist coverage probabilities. It turns out that the maximum surrogate likelihood estimator and the Bayes estimator are asymptotically efficient in the sense of local efficiency \citep{xie-xu-2021-onestep}, and they outperform the baseline adjacency spectral embedding in terms of smaller asymptotic mean-squared errors. Our numerical examples also suggest that the proposed surrogate likelihood methodology is more favorable than the previously developed one-step estimator \citep{xie-xu-2021-onestep} in some finite sample problems. In particular, we have observed that for networks with comparatively small and moderate sizes, the empirical improvement of the maximum surrogate likelihood estimates over the one-step estimates, among other competitors, is more significant. Intuitively, such a phenomenon can be partially explained by the $O(n^{-1})$ bias derived in Theorem \ref{thm:second_order_bias} and Example \ref{example:second_order_bias_SBM}. When $n$ is small or moderate, the $O(n^{-1})$ bias difference is more observable as opposed to the case when $n$ is large. For example, in Section \ref{section:numerical}, the outperformance of the maximum surrogate likelihood estimates over the one-step estimates is more significant when $n = 30$ (Table \ref{table:latent_curve_simulation_n30}) and $300$ (Table \ref{table:Rank2_RDPG_simulation}) than the case when $n\in\{1000, 2000, 3000\}$.


Our current methodology and theory are designed for random dot product graphs with positive semidefinite edge probability matrices. These networks can only model the so-called assortative mixing networks and exclude many interesting examples, such as disassortative mixing stochastic block models with larger between-community connection probabilities and smaller within-community connection probabilities. Extending the current framework to random graphs with possibly indefinite edge probability matrices is straightforward by introducing the generalized random dot product graphs \citep{https://doi.org/10.1111/rssb.12509}, where $\expect A_{ij} = \bx_i\transpose\bI_{p, q}\bx_j$ and $\bI_{p, q}$ is a diagonal matrix whose first $p$ diagonals are $+1$ and the remaining $q$ diagonals $-1$ with $d = p + q$. The generalized random dot product graphs allow disassortative mixing networks. The trick is to replace the ASE $\widetilde{\bX} = [\widetilde{\bx}_1,\ldots,\widetilde{\bx}_n]\transpose$ with the sign-adjusted ASE $\widetilde{\bY} = [\widetilde{\by}_1,\ldots,\widetilde{\by}_n]\transpose = [|\widehat{\lambda}_1|\widehat{\bu}_1,\ldots,|\widehat{\lambda}_d|\widehat{\bu}_d]$, where $\widehat{\lambda}_1,\ldots,\widehat{\lambda}_d$ are the largest $d$-eigenvalues of $\bA$ in absolute value and $\widehat{\bu}_1,\ldots,\widehat{\bu}_d$ are the associated eigenvectors. The associated theory and computation methods can be easily extended accordingly. 

In Section \ref{sub:comparison_with_OSE}, we have seen that the one-step estimator also corresponds to the maximizer of an approximate likelihood function, but it has a worse approximation quality than the proposed surrogate likelihood near the oracle maximum likelihood estimator. Surprisingly, under a framework of generalized estimating equations proposed by \cite{xie-wu-2022-ea-spn}, the gradients of both the surrogate log-likelihood function and the approximate log-likelihood function associated with the one-step estimator can be viewed as some generalized estimating equations that take advantage of the likelihood function information.
 This intuition conforms to the fact that the estimators based on the approximation of likelihood are asymptotically equivalent up to the first order. We have also found in some finite sample problems that the maximum surrogate likelihood estimator outperforms the one-step estimator, which can be explained by the difference of their $O(n^{-1})$ bias. However, a more systematic way to study the performance difference of these two estimators requires the analysis of $O(n^{-3/2})$ mean-squared errors by following the spirit of \cite{PFANZAGL19781} and \cite{https://doi.org/10.1111/j.1468-0262.2004.00482.x}, which in turn requires the analysis of $O(n^{-3/2})$ behavior of the ASE beyond \cite{xie2024higher}. This is an interesting direction that we defer to future research.



\appendix
\section{Preliminary Results for the proofs}

\allowdisplaybreaks

\begin{lemma}
\label{lemma:two_to_infinity_norm_ASE}
Let $\bA\sim\mathrm{RDPG}(\rho_n^{1/2}\bX_0)$ with $n\rho_n\gtrsim \log n$. Denote by $\bDelta_n = (1/n)\bX_0\transpose{}\bX_0$. Assume $\lambda_d(\bDelta_n)\geq\lambda$ for some constant $\lambda > 0$ for all sufficiently large $n$, and $\min_{i,j\in [n]}(\bx_{0i}\transpose\bx_{0j}, 1 - \bx_{0i}\transpose\bx_{0j}) \geq\delta$ for some constant $\delta > 0$. Then for all $c > 0$, there exists some constant $N_{c,\lambda}\in\mathbb{N}_+$ depending on $c,\lambda$, such that for all $n\geq N_{c,\lambda}$, 
\begin{align*}
\|\widetilde\bX\bW - \rho_n^{1/2}\bX_0\|_{2\to\infty}
&\lesssim_{c,\lambda}\sqrt{\frac{\log n}{n}}.
\end{align*}
with probability at least $1 - n^{-c}$.
\end{lemma}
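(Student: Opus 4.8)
The plan is to reduce the claim to an entrywise eigenvector perturbation bound for the adjacency matrix, and then transfer that bound -- which concerns the eigenvectors that $\widetilde{\bX}$ is built from -- to the scaled eigenvectors $\widetilde{\bX}$ themselves. \emph{Setup.} Write $\bP = \expect_0(\bA) = \rho_n\bX_0\bX_0\transpose$, which has rank $d$ because $\sigma_d(\bX_0) > 0$, and let $\bP = \bU_P\bS_P\bU_P\transpose$ be its spectral decomposition with $\bU_P\in\mathbb{O}(n,d)$ and $\bS_P = \mathrm{diag}\{\lambda_1(\bP),\ldots,\lambda_d(\bP)\}$, so that $\rho_n\halfpower\bX_0 = \bU_P\bS_P\halfpower\bW_0$ for some $\bW_0\in\mathbb{O}(d)$; let $\widehat{\bU}_A\in\mathbb{O}(n,d)$ and $\widehat{\bS}_A$ collect the leading $d$ eigenvectors and eigenvalues of $\bA$, so that $\widetilde{\bX} = \widehat{\bU}_A\widehat{\bS}_A\halfpower$. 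I would first verify the three regularity conditions needed for the entrywise analysis: (i) from $\lambda_d(\bDelta_n)\geq\lambda$ and $\|\bX_0\bX_0\transpose\|_2\leq n$ (using $\bx_{0i}\transpose\bx_{0i} < 1$), every nonzero eigenvalue of $\bP$ lies in $[\lambda n\rho_n,\, n\rho_n]$, so the signal strength is $\lambda_d(\bP)\gtrsim n\rho_n$ with bounded condition number on the range of $\bP$; (ii) the incoherence bound $\|\bU_P\|_{2\to\infty}\leq\rho_n\halfpower\|\bS_P\invhalfpower\|_2\max_i\|\bx_{0i}\|\lesssim_\lambda n\invhalfpower$; (iii) the noise $\bE := \bA - \bP$ has independent upper-triangular entries bounded by $1$ with variances at most $\rho_n$, so $\|\bE\|_2\lesssim_c\sqrt{n\rho_n}$ with probability at least $1 - n^{-c}$ by matrix concentration, and the row-wise concentration hypotheses required by the entrywise eigenvector theorems hold because $n\rho_n\gtrsim\log n$.

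Next I would invoke the entrywise eigenvector perturbation theory \citep{abbe-fan-wang-zhong-2020, xie-2022-entrywise}: on the event above, there is $\bW_*\in\mathbb{O}(d)$ with
\[
\|\widehat{\bU}_A\bW_* - \bU_P\|_{2\to\infty}\lesssim_{c,\lambda}\|\bU_P\|_{2\to\infty}\sqrt{\frac{\log n}{n\rho_n}}\lesssim_{c,\lambda}\frac{1}{\sqrt n}\sqrt{\frac{\log n}{n\rho_n}}.
\]
Setting $\bW = \bW_*\bW_0$ and using $\widehat{\bU}_A\widehat{\bS}_A\halfpower\bW_* = \widehat{\bU}_A\bW_*(\bW_*\transpose\widehat{\bS}_A\halfpower\bW_*)$, I would decompose
\[
\widetilde{\bX}\bW - \rho_n\halfpower\bX_0 = \big\{(\widehat{\bU}_A\bW_* - \bU_P)(\bW_*\transpose\widehat{\bS}_A\halfpower\bW_*) + \bU_P(\bW_*\transpose\widehat{\bS}_A\halfpower\bW_* - \bS_P\halfpower)\big\}\bW_0.
\]
The first term has two-to-infinity norm at most $\|\widehat{\bU}_A\bW_* - \bU_P\|_{2\to\infty}\|\widehat{\bS}_A\halfpower\|_2$; since $\|\widehat{\bS}_A\|_2\leq\lambda_1(\bP) + \|\bE\|_2\lesssim n\rho_n$ by Weyl's inequality, this is $\lesssim_{c,\lambda}n\invhalfpower\sqrt{(\log n)/(n\rho_n)}\cdot\sqrt{n\rho_n} = \sqrt{(\log n)/n}$, the dominant contribution. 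For the second term, Weyl's inequality together with a Davis--Kahan alignment argument gives $\|\bW_*\transpose\widehat{\bS}_A\bW_* - \bS_P\|_2\lesssim_c\sqrt{n\rho_n}$, and because both matrices dominate $cn\rho_n\bI_d$, the matrix square-root perturbation inequality yields $\|\bW_*\transpose\widehat{\bS}_A\halfpower\bW_* - \bS_P\halfpower\|_2\lesssim_c 1$; hence the second term has two-to-infinity norm $\lesssim_{c,\lambda}\|\bU_P\|_{2\to\infty}\lesssim_\lambda n\invhalfpower$, which is negligible relative to $\sqrt{(\log n)/n}$. A union bound over the events in the two preceding steps then gives the stated conclusion for all $n$ larger than some $N_{c,\lambda}$.

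Because the leave-one-out entrywise eigenvector bound invoked above is available off the shelf, the substantive work is in the verification step and in the transfer step: one must check that the $\mathrm{RDPG}$ sampling model literally satisfies the hypotheses of the cited entrywise theorem (in particular the row-wise concentration requirement, where the Bernoulli boundedness and $n\rho_n\gtrsim\log n$ enter), and one must carefully track the two orthogonal alignment matrices $\bW_*$ and $\bW_0$ so that the eigenvalue-rescaling term is shown to be dominated; the square-root perturbation step is precisely where the assumption $\lambda_d(\bDelta_n)\geq\lambda$ is indispensable. I expect this bookkeeping of rotations and the scale factors $n\rho_n$ versus $\sqrt{n\rho_n}$ to be the main place where an error could slip in.
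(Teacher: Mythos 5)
Your argument is correct, but it takes a more hands-on route than the paper. The paper's proof is essentially a one-step invocation of Corollary 4.1 in \cite{xie-2022-entrywise}, which bounds $\|\widetilde\bX\bW - \rho_n\halfpower\bX_0\|_{2\to\infty}$ for the \emph{scaled} embedding directly; all that remains there is the bookkeeping you also do in your verification step, namely $\lambda\leq\lambda_d(\bDelta_n)\leq\lambda_1(\bDelta_n)\leq 1$, the incoherence bound $\|\bU_\bP\|_{2\to\infty}\leq\rho_n\halfpower\|\bS_\bP\invhalfpower\|_2\max_i\|\bx_{0i}\|\leq(n\lambda)^{-1/2}$, and the observation that $(\log n)/(n\rho_n)$ is bounded, after which the cited bound collapses to $\sqrt{(\log n)/n}$. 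You instead start from the entrywise bound on the \emph{unscaled} eigenvectors and reconstruct the scaled bound yourself via the decomposition $(\widehat\bU_A\bW_*-\bU_\bP)(\bW_*\transpose\widehat\bS_A\halfpower\bW_*)+\bU_\bP(\bW_*\transpose\widehat\bS_A\halfpower\bW_*-\bS_\bP\halfpower)$, with Weyl plus a square-root perturbation bound handling the second term. This is the classical route (in the spirit of the earlier ASE consistency proofs) and it buys transparency about where each assumption enters — in particular that $\lambda_d(\bDelta_n)\geq\lambda$ is what makes the square-root perturbation $O(1)$ — at the cost of extra work that the paper's citation absorbs. The one place where you are a bit loose is the claim $\|\bW_*\transpose\widehat\bS_A\bW_*-\bS_\bP\|_2\lesssim_c\sqrt{n\rho_n}$: Weyl alone only compares eigenvalues of $\bA$ and $\bP$ without the conjugation by $\bW_*$, so you need to expand $\widehat\bU_A\bW_*=\bU_\bP+\bR$ with $\|\bR\|_2\lesssim\|\bE\|_2/(n\rho_n)$ from Davis--Kahan and control the cross terms $\bU_\bP\transpose\bE\bU_\bP$, $\bA\bR$, etc.; this is standard and you flag it as the delicate step, but it should be spelled out rather than attributed to Weyl. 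With that detail filled in, your proof is complete and yields the same rate under the same conditions.
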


\begin{proof}
Denote by $\kappa(\bDelta_n) = \lambda_1(\bDelta_n)/\lambda_d(\bDelta_n)$. 
By Corollary 4.1 in \cite{xie-2022-entrywise}, for all $c > 0$, we can pick a constant $N_c\in\mathbb{N}_+$ such that for all $n\geq N_c$, with probability at least $1 - n^{-c}$,
\begin{align*}
\|\widetilde\bX\bW - \rho_n^{1/2}\bX_0\|_{2\to\infty}
&\lesssim_c \frac{\|\bU_\bP\|_{2\to\infty}}{(n\rho_n)^{1/2}\lambda_d(\bDelta_n)^{2}}\max\left\{
\frac{(\log n)^{1/2}}{\lambda_d(\bDelta_n)^2}, \frac{\kappa(\bDelta_n)}{\lambda_d(\bDelta_n)^2}, \log n
\right\}
\\&\quad
 +  \frac{(\log n)^{1/2}\|\bU_\bP\|_{2\to\infty}}{\lambda_d(\bDelta_n)^{1/2}}.
\end{align*}
Observe that $\lambda_d(\bDelta_n)$ is lower bounded by a constant $\lambda > 0$ for sufficiently large $n$, and $\lambda_1(\bDelta_n)\leq (1/n)\|\bX_0\|_{\mathrm{F}}^2\leq 1$. Also note that
\[
\|\bU_\bP\|_{2\to\infty} \leq \|\rho_n^{1/2}\bX_0\|_{2\to\infty}\|\bS_\bP^{-1/2}\|_{2}\leq 
\sqrt{\frac{\rho_n}{n\rho_n\lambda_d(\bDelta_n)}}\leq \frac{1}{\sqrt{n\lambda}}.
\]
Therefore, by the fact that $(\log n)/(n\rho_n)$ is bounded, we can pick a constant $N_{c,\lambda} \in\mathbb{N}_+$ depending on $c,\lambda$, such that for all $n\geq N_{c,\lambda}$, with probability at least $1 - n^{-c}$,
\begin{align*}
\|\widetilde\bX\bW - \rho_n^{1/2}\bX_0\|_{2\to\infty}
&\lesssim_c \|\bU_\bP\|_{2\to\infty}\frac{\log n}{(n\rho_n)^{1/2}\lambda^5} + \|\bU_\bP\|_{2\to\infty}\frac{(\log n)^{1/2}}{\lambda^{1/2}}\lesssim_{c,\lambda}\sqrt{\frac{\log n}{n}}.
\end{align*}
This completes the proof.
\end{proof}

\begin{lemma}[Some frequently used results]
\label{lemma:some_frequently_used_results}
Suppose $\bA\sim\mathrm{RDPG}(\rho_n^{1/2}\bX_0)$ and assume the conditions of Theorem \ref{theorem:existence_and_uniqueness} hold. Denote by $\widetilde{p}_{ij} = \widetilde{\bx}_i\transpose\widetilde{\bx}_j$, $i,j\in [n]$. Then for any $c>0$, there exists a constant $N_{c,\delta,\lambda}\in\mathbb{N}_+$ depending on $c, \delta, \lambda$ such that for all $n\geq N_{c,\delta,\lambda}$, the following hold with probability at least $1-n^{-c}$:
\begin{align*}
&\max_{j\in[n]}\|\widetilde\bx_j\|_2 \leq \rho_n(1-\frac{\delta}{2}),\\
&\max_{i,j\in[n]}|\widetilde p_{ij} - \rho_n\bx_{0i}\transpose\bx_{0j}| \lesssim_{c,\lambda} \rho_n\halfpower\sqrt{\frac{\log n}{n}},\\
&\frac{\rho_n\delta}{2} \leq \min_{i,j\in[n]}\widetilde p_{ij} \leq \max_{i,j\in[n]}\widetilde p_{ij} \leq \rho_n(1-\frac{\delta}{2}),\\
&\max_{j\in[n]}\|\bW\transpose\widetilde\bx_j\widetilde\bx_j\transpose\bW - \rho_n\bx_{0j}\bx_{0j}\transpose\|_2 \lesssim_{c,\lambda} \rho_n\halfpower\sqrt{\frac{\log n}{n}}.
\end{align*}
\end{lemma}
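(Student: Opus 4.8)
The only substantive input is Lemma~\ref{lemma:two_to_infinity_norm_ASE}; everything else is elementary perturbation bookkeeping, carried out uniformly over $i,j\in[n]$ by combining the two-to-infinity norm with the Cauchy--Schwarz inequality. First I would set $\widehat{\bx}_j=\bW\transpose\widetilde{\bx}_j$, with $\bW$ the orthogonal matrix furnished by Lemma~\ref{lemma:two_to_infinity_norm_ASE}, so that $\widehat{\bx}_j$ is the $j$th row of $\widetilde{\bX}\bW$. Since inner products and spectral norms are invariant under left-multiplication by $\bW$, one has $\widetilde{p}_{ij}=\widetilde{\bx}_i\transpose\widetilde{\bx}_j=\widehat{\bx}_i\transpose\widehat{\bx}_j$ and $\|\bW\transpose\widetilde{\bx}_j\widetilde{\bx}_j\transpose\bW-\rho_n\bx_{0j}\bx_{0j}\transpose\|_2=\|\widehat{\bx}_j\widehat{\bx}_j\transpose-\rho_n\bx_{0j}\bx_{0j}\transpose\|_2$, so every claim can be phrased in terms of $\widehat{\bx}_j$ and $\rho_n\halfpower\bx_{0j}$. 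On the event of probability at least $1-n^{-c}$ supplied by Lemma~\ref{lemma:two_to_infinity_norm_ASE}, valid for $n\geq N_{c,\lambda}$, one has
\[
\max_{j\in[n]}\|\widehat{\bx}_j-\rho_n\halfpower\bx_{0j}\|_2\leq\|\widetilde{\bX}\bW-\rho_n\halfpower\bX_0\|_{2\to\infty}\lesssim_{c,\lambda}\sqrt{\frac{\log n}{n}},
\]
and taking $i=j$ in the hypothesis $\min_{i,j}(1-\bx_{0i}\transpose\bx_{0j})\geq\delta$ gives $\|\bx_{0j}\|_2^2\leq 1-\delta<1$. The single fact used repeatedly is that the sparsity condition $(\log n)/(n\rho_n)\to0$ forces $\sqrt{(\log n)/n}=\rho_n\halfpower\sqrt{(\log n)/(n\rho_n)}=o(\rho_n\halfpower)$.

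The heart of the lemma is the second bullet, from which the first and third follow. I would expand
\[
\widetilde{p}_{ij}-\rho_n\bx_{0i}\transpose\bx_{0j}=(\widehat{\bx}_i-\rho_n\halfpower\bx_{0i})\transpose\widehat{\bx}_j+\rho_n\halfpower\bx_{0i}\transpose(\widehat{\bx}_j-\rho_n\halfpower\bx_{0j}),
\]
and bound each summand by Cauchy--Schwarz; using $\|\widehat{\bx}_j\|_2\leq\rho_n\halfpower\|\bx_{0j}\|_2+o(\rho_n\halfpower)\lesssim\rho_n\halfpower$, $\|\bx_{0i}\|_2\leq1$, and the displayed two-to-infinity bound yields $\max_{i,j}|\widetilde{p}_{ij}-\rho_n\bx_{0i}\transpose\bx_{0j}|\lesssim_{c,\lambda}\rho_n\halfpower\sqrt{(\log n)/n}$. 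The first bullet then follows from $\|\widetilde{\bx}_j\|_2^2=\widetilde{p}_{jj}\leq\rho_n\|\bx_{0j}\|_2^2+o(\rho_n)\leq\rho_n(1-\delta)+o(\rho_n)\leq\rho_n(1-\delta/2)$ for $n$ past a threshold. The third bullet is the same squeeze for general $i,j$: $\rho_n\bx_{0i}\transpose\bx_{0j}\in[\rho_n\delta,\rho_n(1-\delta)]$ by hypothesis, while the fluctuation just bounded is $o(\rho_n)$, hence below $\rho_n\delta/2$ once $n$ is large, so $\widetilde{p}_{ij}\in[\rho_n\delta/2,\rho_n(1-\delta/2)]$ uniformly.

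For the fourth bullet I would use the rank-one splitting
\[
\widehat{\bx}_j\widehat{\bx}_j\transpose-\rho_n\bx_{0j}\bx_{0j}\transpose=(\widehat{\bx}_j-\rho_n\halfpower\bx_{0j})\widehat{\bx}_j\transpose+\rho_n\halfpower\bx_{0j}(\widehat{\bx}_j-\rho_n\halfpower\bx_{0j})\transpose,
\]
take spectral norms, and bound the two rank-one terms via $\|\bu\bv\transpose\|_2=\|\bu\|_2\|\bv\|_2$ together with the same ingredients ($\|\widehat{\bx}_j\|_2\lesssim\rho_n\halfpower$, $\|\bx_{0j}\|_2\leq1$, and the two-to-infinity bound) as before, to arrive at the rate $\rho_n\halfpower\sqrt{(\log n)/n}$. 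To finish, I would work on the high-probability event of Lemma~\ref{lemma:two_to_infinity_norm_ASE} and enlarge $N_{c,\lambda}$ to some $N_{c,\delta,\lambda}$ so that the finitely many ``for $n$ large'' requirements hold simultaneously.

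Given Lemma~\ref{lemma:two_to_infinity_norm_ASE} I do not expect a genuine obstacle; this is essentially a routine corollary. The one point that needs care — and the reason the conclusion holds only for $n\geq N_{c,\delta,\lambda}$ rather than for all $n$ — is that the first and third bullets are \emph{one-sided} bounds with the explicit constants $\rho_n\delta/2$ and $\rho_n(1-\delta/2)$, so one must verify that the $O(\sqrt{(\log n)/n})$ perturbation is of strictly smaller order than the leading term of size $\rho_n$. This is precisely where the minimal sparsity assumption $(\log n)/(n\rho_n)\to0$ is consumed.
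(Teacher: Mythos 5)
Your proposal is correct and follows essentially the same route as the paper: invoke Lemma \ref{lemma:two_to_infinity_norm_ASE} for the uniform two-to-infinity bound, then apply triangle/Cauchy--Schwarz decompositions (including the same rank-one splitting for the fourth bound) uniformly in $i,j$, and absorb the $O(\sqrt{(\log n)/n})$ fluctuation into the $\rho_n$-order leading terms via $(\log n)/(n\rho_n)\to 0$. The only cosmetic difference is that you deduce the first bound from the $i=j$ case of the second, giving $\|\widetilde{\bx}_j\|_2\leq\rho_n^{1/2}(1-\delta/2)^{1/2}$, whereas the paper bounds $\max_j\|\widetilde{\bx}_j\|_2$ directly by the triangle inequality to get $\rho_n^{1/2}(1-\delta/2)$; both versions suffice for all subsequent uses.
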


\begin{proof}
For the first result, by Lemma \ref{lemma:two_to_infinity_norm_ASE} and the condition that $\frac{\log n}{n\rho_n}\to 0$, we can pick a constant $N_{c,\delta,\lambda}\in\mathbb{N}_+$ depending on $c, \delta, \lambda$ such that for all $n\geq N_{c,\delta,\lambda}$, with probability at leat $1-n^{-c}$,
\[
\|\widetilde\bX\bW-\rho_n\halfpower\bX_0\|_{2\to\infty} = \max_{j\in[n]} \|\bW\transpose\widetilde\bx_j-\rho_n\halfpower\bx_{0j}\|_2 \leq \rho_n\halfpower\left(1-\frac{\delta}{2}-\sqrt{1-\delta}\right).
\]
This is because $(1-\delta/2)^2 = 1-\delta+\delta^2/4 > 1-\delta$. Then
\begin{align*}
\max_{j\in[n]}\|\widetilde\bx_j\|_2 &\leq \max_{j\in[n]}\|\bW\transpose\widetilde\bx_j-\rho_n\halfpower\bx_{0j}\|_2 + \max_{j\in[n]}\|\rho_n\halfpower\bx_{0j}\|_2 \\
&\leq \rho_n\halfpower\left(1-\frac{\delta}{2}-\sqrt{1-\delta}\right) + \rho_n\halfpower\sqrt{1-\delta}.
\end{align*}
For the second result, over the same event as above, we have
\begin{align*}
\max_{i,j\in[n]}|\widetilde p_{ij} - \rho_n\bx_{0i}\transpose\bx_{0j}| &\leq
\max_{i,j\in[n]}|\widetilde\bx_i\transpose\bW(\bW\transpose\widetilde\bx_j-\rho_n\halfpower\bx_{0j})| + \max_{i,j\in[n]}|(\bW\transpose\widetilde\bx_{i}-\rho_n\halfpower\bx_{0i})\transpose\rho_n\halfpower\bx_{0j}|\\
&\leq (\max_{j\in[n]}\|\widetilde\bx_j\|_2 + \rho_n\halfpower) \|\widetilde\bX\bW-\rho_n\halfpower\bX_0\|_{2\to\infty}
\lesssim_{c,\lambda} \rho_n\halfpower\sqrt{\frac{\log n}{n}}.
\end{align*}
For the third result, over the same event as above, we have
\[
\max_{i,j\in[n]}\widetilde p_{ij} \leq \max_{i,j\in[n]}|\widetilde p_{ij} - \rho_n\bx_{0i}\transpose\bx_{0j}| + \max_{i,j\in[n]}\rho_n\bx_{0i}\transpose\bx_{0j} \leq C_{c,\lambda}\rho_n\halfpower\sqrt{\frac{\log n}{n}} + \rho_n(1-\delta).
\]
Since $\frac{\log n}{n\rho_n}\to0$ and $\max_{i,j\in[n]}\bx_{0i}\transpose\bx_{0j}\leq1-\delta$, we can pick a  (possibly larger) constant $N_{c,\delta,\lambda}$ such that $C_{c,\lambda}\halfpower\sqrt{\frac{\log n}{n\rho_n}}\leq\delta/2$ for all $n\geq N_{c,\delta,\lambda}$. Then
\[
\max_{i,j\in[n]}\widetilde p_{ij} \leq \rho_n(1-\frac{\delta}{2}).
\]
Similarly,
\[
\min_{i,j\in[n]}\widetilde p_{ij} \geq \min_{i,j\in[n]}|\widetilde p_{ij} - \rho_n\bx_{0i}\transpose\bx_{0j}| - \max_{i,j\in[n]}\rho_n\bx_{0i}\transpose\bx_{0j} \geq \rho_n\delta - C_{c,\lambda}\rho_n\halfpower\sqrt{\frac{\log n}{n}} \geq \frac{\rho_n\delta}{2}.
\]

For the fourth one, over the same event as above, we have
\begin{align*}
&\max_{j\in[n]}\|\bW\transpose\widetilde\bx_j\widetilde\bx_j\transpose\bW - \rho_n\bx_{0j}\bx_{0j}\transpose\|_2\\
&\quad\leq \max_{j\in[n]}\|\bW\transpose\widetilde\bx_j(\widetilde\bx_j\bW-\rho_n\halfpower\bx_{0j}\transpose)\|_2 + \max_{j\in[n]}\|(\bW\transpose\widetilde\bx_j-\rho_n\halfpower\bx_{0j})\rho_n\halfpower\bx_{0j}\transpose\|_2\\
&\quad\leq (\max_{j\in[n]}\|\widetilde\bx_j\|_2 + \rho_n\halfpower)\|\widetilde\bX\bW-\rho_n\halfpower\bX_0\|_{2\to\infty}
\lesssim_{c,\lambda} \rho_n\halfpower\sqrt{\frac{\log n}{n}}.
\end{align*}
\end{proof}

\begin{lemma}[Concentration of Hessian matrices]
\label{lemma:concentration_of_hessian_matrices}
Suppose $\bA\sim\mathrm{RDPG}(\rho_n^{1/2}\bX_0)$ and assume the conditions of Theorem \ref{theorem:existence_and_uniqueness} hold. Denote by $\widetilde{p}_{ij} = \widetilde{\bx}_i\transpose\widetilde{\bx}_j$, $i,j\in [n]$ and let $\epsilon > 0$ be sufficiently small. Then for any $c>0$, there exists a constant $N_{c,\delta,\lambda}\in\mathbb{N}_+$ depending on $c, \delta, \lambda$ such that for all $n\geq N_{c,\delta,\lambda}$, the following hold with probability at least $1-n^{-c}$:
\begin{align*}
&\max_{i\in[n]}\sup_{\bx_i:\|\bW\transpose\bx_i-\rho_n^{\frac{1}{2}}\bx_{0i}\|_2\leq\epsilon} \Bigg\|\frac{1}{n}\sum_{j=1}^n\left\{\frac{1}{\widetilde p_{ij}}+\frac{1-A_{ij}}{(1-\bx_i\transpose\widetilde{\bx}_j)^2}\right\}\bW\transpose\widetilde\bx_j\widetilde\bx_j\transpose\bW \\
&\qquad\qquad\qquad\qquad - \frac{1}{n}\sum_{j=1}^n\frac{\bx_{0j}\bx_{0j}\transpose}{\bx_{0i}\transpose\bx_{0j}(1-\rho_n\bx_{0i}\transpose\bx_{0j})}\Bigg\|_2 \qquad\lesssim_{c,\delta,\lambda} \rho_n^{\frac{3}{2}}\epsilon_n+\sqrt{\frac{\log n}{n\rho_n}},\\
&\left\|\frac{1}{n}\sum_{j=1}^n\left\{\frac{1}{\widetilde p_{ij}}+\frac{1-A_{ij}}{(1-\widetilde p_{ij})^2}\right\}\widetilde\bx_j\widetilde\bx_j\transpose - \frac{1}{n}\sum_{j=1}^n\frac{1}{\widetilde p_{ij}(1-\widetilde p_{ij})}\widetilde\bx_j\widetilde\bx_j\transpose\right\|_2\lesssim_{c,\delta,\lambda} \rho_n^{\frac{3}{2}}\sqrt{\frac{\log n}{n}}.
\end{align*}
\end{lemma}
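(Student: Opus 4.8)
\emph{Strategy.} My plan is to treat both displays with a single device, since each asks that a random matrix average be compared with a (near-)deterministic target. First I would fix the high-probability event $\calE$ on which the estimates of Lemma~\ref{lemma:some_frequently_used_results} hold (probability $\ge 1-n^{-c}$ after enlarging $N_{c,\delta,\lambda}$), and on $\calE$ replace the data-driven quantities $\widetilde p_{ij}$, $\widetilde\bx_j\widetilde\bx_j\transpose$ and $1-\bx_i\transpose\widetilde\bx_j$ by their population values $\rho_n\bx_{0i}\transpose\bx_{0j}$, $\rho_n\bW\transpose\widetilde\bx_j\widetilde\bx_j\transpose\bW\approx\rho_n\bx_{0j}\bx_{0j}\transpose$ and $1-\rho_n\bx_{0i}\transpose\bx_{0j}$, each at the cost of a deterministic operator-norm error of order $\rho_n^{1/2}\sqrt{(\log n)/n}$, using also that the relevant denominators are bounded below by $\rho_n\delta/2$ or $\delta/2$. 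Next I would split the terms carrying $A_{ij}$: peel off a ``clean'' sum $n^{-1}\sum_j(\rho_n\bx_{0i}\transpose\bx_{0j}-A_{ij})M_j^{0}$ whose coefficient $M_j^{0}$ is deterministic and free of $\bx_i$, so that for fixed $i$ the summands are independent, mean zero, of norm $\lesssim\rho_n/n$, with summed second moment $\lesssim\rho_n^3/n$ (because $n^{-1}\sum_j\bx_{0j}\bx_{0j}\transpose=\bDelta_n\preceq\bI_d$), and bound it by the matrix Bernstein inequality, which together with $n\rho_n\gtrsim\log n$ gives $\lesssim\rho_n^{3/2}\sqrt{(\log n)/n}$. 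Finally I would bound the leftover ``coupling'' sum $n^{-1}\sum_j(\rho_n\bx_{0i}\transpose\bx_{0j}-A_{ij})(M_j-M_j^{0})$ crudely by $(\max_j\|M_j-M_j^{0}\|)\cdot n^{-1}\sum_j|\rho_n\bx_{0i}\transpose\bx_{0j}-A_{ij}|$, where the first factor is the step-one approximation error and the second is $O(\rho_n)$, not $O(1)$, because a Chernoff bound gives $\sum_j A_{ij}\lesssim n\rho_n$ for every $i$ (enlarging $\calE$).

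\emph{The second display.} I would use $1/\widetilde p_{ij}+1/(1-\widetilde p_{ij})=1/\{\widetilde p_{ij}(1-\widetilde p_{ij})\}$ to rewrite the norm as $\|n^{-1}\sum_j(\widetilde p_{ij}-A_{ij})(1-\widetilde p_{ij})^{-2}\widetilde\bx_j\widetilde\bx_j\transpose\|_2$, then split $\widetilde p_{ij}-A_{ij}=(\widetilde p_{ij}-\rho_n\bx_{0i}\transpose\bx_{0j})+(\rho_n\bx_{0i}\transpose\bx_{0j}-A_{ij})$. On $\calE$ the first part is deterministic and bounded by $\max_{i,j}|\widetilde p_{ij}-\rho_n\bx_{0i}\transpose\bx_{0j}|\cdot(\delta/2)^{-2}\cdot\max_j\|\widetilde\bx_j\|_2^2\lesssim\rho_n^{3/2}\sqrt{(\log n)/n}$, and the second part is handled by the peel-off and coupling steps above (conjugating by $\bW$ to bring in $\rho_n\bx_{0j}\bx_{0j}\transpose$, with $M_j^{0}=\rho_n\bW\bx_{0j}\bx_{0j}\transpose\bW\transpose/(1-\rho_n\bx_{0i}\transpose\bx_{0j})^2$), for a total of $\rho_n^{3/2}\sqrt{(\log n)/n}$.

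\emph{The first display.} I would note that $\{(\rho_n\bx_{0i}\transpose\bx_{0j})^{-1}+(1-\rho_n\bx_{0i}\transpose\bx_{0j})^{-1}\}\cdot\rho_n\bx_{0j}\bx_{0j}\transpose$ equals the $j$th target summand, so it suffices to bound, uniformly over $\|\bW\transpose\bx_i-\rho_n^{1/2}\bx_{0i}\|_2\le\epsilon$: group (A), $n^{-1}\sum_j\{\widetilde p_{ij}^{-1}\bW\transpose\widetilde\bx_j\widetilde\bx_j\transpose\bW-(\rho_n\bx_{0i}\transpose\bx_{0j})^{-1}\rho_n\bx_{0j}\bx_{0j}\transpose\}$; group (B), the analogue of the $(1-A_{ij})/(1-\bx_i\transpose\widetilde\bx_j)^2$ term with $A_{ij}$ frozen at $\rho_n\bx_{0i}\transpose\bx_{0j}$; and group (C), the mean-zero residual. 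Group (A) carries neither $A_{ij}$ nor $\bx_i$, and since its denominators are of order $\rho_n$, Lemma~\ref{lemma:some_frequently_used_results} gives $\|(A)\|_2\lesssim_{c,\delta,\lambda}\rho_n^{-1/2}\sqrt{(\log n)/n}=\sqrt{(\log n)/(n\rho_n)}$, the dominant contribution. Group (B) carries no $A_{ij}$ and depends on $\bx_i$ only through $(1-\bx_i\transpose\widetilde\bx_j)^{-2}$; on $\calE$, $\sup_{\bx_i}|\bx_i\transpose\widetilde\bx_j-\rho_n\bx_{0i}\transpose\bx_{0j}|\lesssim\rho_n^{1/2}(\epsilon+\sqrt{(\log n)/n})$, so $\|(B)\|_2\lesssim\rho_n^{3/2}\epsilon+\rho_n^{3/2}\sqrt{(\log n)/n}$. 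Group (C) goes through the peel-off and coupling steps with $M_j^{0}=\rho_n\bx_{0j}\bx_{0j}\transpose/(1-\rho_n\bx_{0i}\transpose\bx_{0j})^2$ (free of $\bx_i$), giving $\lesssim\rho_n^{3/2}\sqrt{(\log n)/n}+\rho_n^{5/2}\epsilon$. Adding (A)--(C), taking a union bound over $i\in[n]$ in the Bernstein and degree events (permissible by choosing the tail constants large relative to $c$), and discarding lower-order terms yields $\rho_n^{3/2}\epsilon+\sqrt{(\log n)/(n\rho_n)}$.

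\emph{The main obstacle.} The hard part will be group (C) and its counterpart in the second display: the coefficient $\bW\transpose\widetilde\bx_j\widetilde\bx_j\transpose\bW/(1-\bx_i\transpose\widetilde\bx_j)^2$ is correlated with $A_{ij}$ through the adjacency spectral embedding, so one cannot concentrate $n^{-1}\sum_j(\rho_n\bx_{0i}\transpose\bx_{0j}-A_{ij})(\cdot)$ directly. The remedy is to subtract the deterministic coefficient $M_j^0$ to expose an honest independent-summand sum and then dominate the residual by a small deterministic factor times $n^{-1}\sum_j|\rho_n\bx_{0i}\transpose\bx_{0j}-A_{ij}|=O(\rho_n)$ rather than $O(1)$; this is exactly what keeps the bound from losing a factor $\rho_n^{-1}$. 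By contrast the supremum over the $\epsilon$-ball should cost nothing extra, because the decomposition is arranged so that every $\bx_i$-dependent piece is already $O(\rho_n^{3/2}\epsilon)$ deterministically on $\calE$, so no net or covering argument over $\bx_i$ is needed.
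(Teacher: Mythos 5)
Your proposal is correct and follows essentially the same route as the paper's proof: work on the high-probability event of Lemma \ref{lemma:some_frequently_used_results}, split each Hessian difference into purely deterministic replacement terms (which give the dominant $\sqrt{(\log n)/(n\rho_n)}$ and the $\rho_n^{3/2}\epsilon$ contributions), a mean-zero sum with deterministic coefficients handled by Bernstein's inequality plus a union bound over $i$, and a coupling term bounded by $n^{-1}\|\bA-\rho_n\bX_0\bX_0\transpose\|_\infty\lesssim\rho_n$ times the maximal coefficient error. Your regroupings (freezing $A_{ij}$ at its mean instead of bounding $1-A_{ij}\leq 1$, merging the two $\rho_n^{-1}$-denominator replacement terms, and using matrix rather than entrywise Bernstein with $d$ fixed) are cosmetic, and all the rates match the paper's.
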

\begin{proof}
For simplicity of notation, denote by $p_{0ij}=\rho_n\bx_{0i}\transpose\bx_{0j}$. 
The large probability bounds below are with regard to $n\geq N_{c,\delta,\lambda}$ for some large constant $N_{c,\delta,\lambda}$ depending on $c,\delta,\lambda$. 

\noindent
$\blacksquare$
We show the first conclusion first. Write
\begin{align*}
&\left\|\frac{1}{n}\sum_{j=1}^n\left\{\frac{1}{\widetilde p_{ij}}+\frac{1-A_{ij}}{(1-\bx_i\transpose\widetilde{\bx}_j)^2}\right\}\bW\transpose\widetilde\bx_j\widetilde\bx_j\transpose\bW - \frac{1}{n}\sum_{j=1}^n\frac{\bx_{0j}\bx_{0j}\transpose}{\bx_{0i}\transpose\bx_{0j}(1-\rho_n\bx_{0i}\transpose\bx_{0j})}\right\|_2 \\
&\quad\leq\left\|\frac{1}{n}\sum_{j=1}^n\left(1-A_{ij}\right)\left\{\frac{1}{(1-\bx_i\transpose\widetilde\bx_j)^2}-\frac{1}{(1-p_{0ij})^2}\right\}\bW\transpose\widetilde\bx_j\widetilde\bx_j\transpose\bW\right\|_2\\
&\quad\quad+\left\|\frac{1}{n}\sum_{j=1}^n\frac{A_{ij}-p_{0ij}}{(1-p_{0ij})^2}\left(\bW\transpose\widetilde\bx_j\widetilde\bx_j\transpose\bW-\rho_n\bx_{0j}\bx_{0j}\transpose\right)\right\|_2
+\left\|\frac{1}{n}\sum_{j=1}^n\frac{A_{ij}-p_{0ij}}{(1-p_{0ij})^2}\rho_n\bx_{0j}\bx_{0j}\transpose\right\|_2\\
&\quad\quad+\left\|\frac{1}{n}\sum_{j=1}^n\left\{\frac{1}{\widetilde p_{ij}}-\frac{1}{p_{0ij}}\right\}\bW\transpose\widetilde\bx_j\widetilde\bx_j\transpose\bW\right\|_2
+\left\|\frac{1}{n}\sum_{j=1}^n\frac{\bW\transpose\widetilde\bx_j\widetilde\bx_j\transpose\bW-\rho_n\bx_{0j}\bx_{0j}\transpose}{p_{0ij}(1-p_{0ij})}\right\|_2.
\end{align*}
For the first term, with probability at least $1-n^{-c}$,
\begin{align*}
&\max_{i\in[n]}\sup_{\bx_i:\|\bW\transpose\bx_i-\rho_n\halfpower\bx_{0i}\|_2\leq\epsilon}\left\|\frac{1}{n}\sum_{j=1}^n\left(1-A_{ij}\right)\left\{\frac{1}{(1-\bx_i\transpose\widetilde\bx_j)^2}-\frac{1}{(1-p_{0ij})^2}\right\}\bW\transpose\widetilde\bx_j\widetilde\bx_j\transpose\bW\right\|_2\\
&\quad\leq \max_{i\in[n]}\sup_{\bx_i:\|\bW\transpose\bx_i-\rho_n\halfpower\bx_{0i}\|_2\leq\epsilon}\frac{1}{n}\sum_{j=1}^n 2\frac{|(\bx_i\transpose\widetilde\bx_j-p_{0ij})(2-\bx_i\transpose\widetilde\bx_j-p_{0ij})|}{(1-\bx_i\transpose\widetilde\bx_j)^2(1-p_{0ij})^2}\|\widetilde\bx_j\|_2^2\\
&\quad\lesssim_{c,\delta,\lambda}\max_{i\in[n]}\sup_{\bx_i:\|\bW\transpose\bx_i-\rho_n\halfpower\bx_{0i}\|_2\leq\epsilon}\rho_n\frac{1}{n}\sum_{j=1}^n\left|\bx_i\transpose\widetilde\bx_j-p_{0ij}\right|\\
&\quad\leq \max_{i\in[n]}\sup_{\bx_i:\|\bW\transpose\bx_i-\rho_n\halfpower\bx_{0i}\|_2\leq\epsilon}\rho_n\frac{1}{n}\sum_{j=1}^n\Bigg\{\left\|\bW\transpose\bx_i-\rho_n\halfpower\bx_{0i}\right\|_2\|\widetilde\bx_j\|_2 \\
&\qquad\qquad\qquad\qquad\qquad\qquad\qquad\qquad + \|\rho_n\halfpower\bx_{0i}\|_2\left\|\bW\transpose\widetilde\bx_j-\rho_n\halfpower\bx_{0j}\right\|_2\Bigg\}\\
&\quad\lesssim_{c,\delta,\lambda} \rho_n^{\frac{3}{2}}\epsilon + \rho_n^{\frac{3}{2}}\sqrt{\frac{\log n}{n}},
\end{align*}
where in the second inequality we use Lemma \ref{lemma:some_frequently_used_results}, in the third inequality triangle inequality and Cauchy--Schwarz inequality, and in the fourth inequality Lemma \ref{lemma:two_to_infinity_norm_ASE} and Lemma \ref{lemma:some_frequently_used_results}.

\noindent For the second term, with probability at least $1-n^{-c}$,
\begin{align*}
&\max_{i\in[n]}\left\|\frac{1}{n}\sum_{j=1}^n\frac{A_{ij}-p_{0ij}}{(1-p_{0ij})^2}\left(\bW\transpose\widetilde\bx_j\widetilde\bx_j\transpose\bW-\rho_n\bx_{0j}\bx_{0j}\transpose\right)\right\|_2\\
&\quad\lesssim_\delta\frac{1}{n}\left\|\bA-\rho_n\bX_0\bX_0\transpose\right\|_\infty\max_{j\in[n]}\left\|\bW\transpose\widetilde\bx_j\widetilde\bx_j\transpose\bW-\rho_n\bx_{0j}\bx_{0j}\transpose\right\|_2 \\
&\quad\leq \frac{1}{n}\left(\|\bA\|_\infty+\|\rho_n\bX_0\bX_0\transpose\|_\infty\right)\max_{j\in[n]}\left\|\bW\transpose\widetilde\bx_j\widetilde\bx_j\transpose\bW-\rho_n\bx_{0j}\bx_{0j}\transpose\right\|_2 
\lesssim_{c,\delta,\lambda}\rho_n^{\frac{3}{2}}\sqrt{\frac{\log n}{n}},
\end{align*}
by Lemma \ref{lemma:some_frequently_used_results} and the result that $\|\bA\|_\infty\lesssim_cn\rho_n$ with probability at least $1-n^{-c}$ which follows from triangle inequality and Bernstein's inequality.

\noindent For the third term, for a typical $(k,l)$th entry, by Bernstein's inequality and a union bound over $i\in [n]$, for any $t > 0$,
\begin{align*}
&\prob\left\{\max_{i\in[n]}\left|\frac{1}{n}\sum_{j=1}^n\left(A_{ij}-p_{0ij}\right)\frac{\rho_n x_{0jk}x_{0jl}}{(1-p_{0ij})^2}\right|\geq t\right\}\\
&\quad\leq 2n\exp\left\{\frac{-3n^2t^2}{6\sum_{j=1}^n\frac{\rho_n^2x_{0jk}^2x_{0jl}^2}{(1-p_{0ij})^4} p_{0ij}(1-p_{0ij})+2\max_{j\in[n]}\frac{\rho_nx_{0jk}x_{0jl}}{(1-p_{0ij})^2} nt}\right\}\\
&\quad\leq 2n\exp\left\{-K_\delta\frac{nt^2}{\rho_n^3+\rho_n t}\right\},
\end{align*}
where $K_\delta > 0$ is a constant depending on $\delta$. Taking $t=C\sqrt{(\rho_n^3\log n)/n}$ for an appropriate constant $C > 0$, we see that
\[
\max_{i\in[n]}\left|\frac{1}{n}\sum_{j=1}^n\left(A_{ij}-p_{0ij}\right)\frac{\rho_n x_{0jk}x_{0jl}}{(1-p_{0ij})^2}\right| \lesssim_{c,\delta} \sqrt{\frac{\rho_n^3\log n}{n}}
\]
with probability at least $1-n^{-c}$. Since $d$ is fixed (it implicitly depends on $\lambda$), we have
\[
\max_{i\in[n]}\left\|\frac{1}{n}\sum_{j=1}^n\left(A_{ij}-p_{0ij}\right)\frac{\rho_n \bx_{0j}\bx_{0j}\transpose}{(1-p_{0ij})^2}\right\|_2 \lesssim_{c,\delta,\lambda} \rho_n^{\frac{3}{2}}\sqrt{\frac{\log n}{n}}
\]
with probability at least $1-n^{-c}$.

\noindent For the fourth term, with probability at least $1-n^{-c}$,
\begin{align*}
\max_{i\in[n]}\left\|\frac{1}{n}\sum_{j=1}^n\left\{\frac{1}{\widetilde p_{ij}}-\frac{1}{p_{0ij}}\right\}\bW\transpose\widetilde\bx_j\widetilde\bx_j\transpose\bW\right\|_2 &\leq \max_{i,j\in[n]}\frac{|\widetilde p_{ij}-p_{0ij}|}{\widetilde p_{ij}p_{0ij}}\|\widetilde\bx_j\|_2^2
\lesssim_{c,\delta,\lambda}\sqrt{\frac{\log n}{n\rho_n}}
\end{align*}
by Lemma \ref{lemma:some_frequently_used_results}.

\noindent For the fifth term, with probability at least $1-n^{-c}$,
\begin{align*}
\max_{i\in[n]}\left\|\frac{1}{n}\sum_{j=1}^n\frac{\bW\transpose\widetilde\bx_j\widetilde\bx_j\transpose\bW-\rho_n\bx_{0j}\bx_{0j}\transpose}{p_{0ij}(1-p_{0ij})}\right\|_2 &\lesssim_{\delta} \rho_n\inverse\max_{j\in[n]}\|\bW\transpose\widetilde\bx_j\widetilde\bx_j\transpose\bW-\rho_n\bx_{0j}\bx_{0j}\transpose\|_2
\\&
\lesssim_{c,\delta,\lambda} \sqrt{\frac{\log n}{n\rho_n}}
\end{align*}
by Lemma \ref{lemma:some_frequently_used_results}.
So the first conclusion is shown by combining the above five bounds.

\noindent
$\blacksquare$ Next, we show the second conclusion. Write\begin{align*}
&\left\|\frac{1}{n}\sum_{j = 1}^n\left\{\frac{1}{\widetilde{p}_{ij}} + \frac{(1 - A_{ij})}{(1 - \widetilde{p}_{ij})^2}\right\}\widetilde{\bx}_j\widetilde{\bx}_j\transpose - \frac{1}{n}\sum_{j = 1}^n\frac{\widetilde{\bx}_j\widetilde{\bx}_j\transpose}{\widetilde{p}_{ij}(1 - \widetilde{p}_{ij})}\right\|_2\\
&\quad= \left\|\frac{1}{n}\sum_{j = 1}^n\frac{(A_{ij} - \widetilde{p}_{ij})\bW\transpose\widetilde{\bx}_j\widetilde{\bx}_j\transpose\bW}{(1 - \widetilde{p}_{ij})^2}\right\|_2\\
&\quad\leq \left\|\frac{1}{n}\sum_{j = 1}^n\frac{(A_{ij} - \rho_n\bx_{0i}\transpose\bx_{0j})\rho_n\bx_{0j}\bx_{0j}\transpose}{(1 - \rho_n\bx_{0i}\transpose\bx_{0j})^2}\right\|_2\\
&\quad\quad+\left\|\frac{1}{n}\sum_{j = 1}^n(A_{ij} - \rho_n\bx_{0i}\transpose\bx_{0j})
\left\{
\frac{\rho_n\bx_{0j}\bx_{0j}\transpose}{(1 - \rho_n\bx_{0i}\transpose\bx_{0j})^2} - \frac{\bW\transpose\widetilde{\bx}_j\widetilde{\bx}_j\transpose\bW}{(1 - \widetilde{p}_{ij})^2}
\right\}
\right\|_2\\
&\quad\quad + \left\|\frac{1}{n}\sum_{j = 1}^n(\widetilde{p}_{ij} - \rho_n\bx_{0i}\transpose\bx_{0j})\frac{\bW\transpose\widetilde{\bx}_j\widetilde{\bx}_j\transpose\bW}{(1 - \widetilde{p}_{ij})^2}\right\|_2.
\end{align*}
The first term is $O(\rho_n^{\frac{3}{2}}\sqrt{\frac{\log n}{n}})$ with probability at least $1-n^{-c}$ as previously shown.

\noindent For the second term, with probability at least $1-n^{-c}$,
\begin{align*}
&\left\|\frac{1}{n}\sum_{j = 1}^n(A_{ij} - p_{0ij})\left\{\frac{\rho_n\bx_{0j}\bx_{0j}\transpose}{(1 - p_{0ij})^2} - \frac{\bW\transpose\widetilde{\bx}_j\widetilde{\bx}_j\transpose\bW}{(1 - \widetilde{p}_{ij})^2}\right\}\right\|_2\\
&\quad\leq \frac{1}{n}\sum_{j = 1}^n|A_{ij}-p_{0ij}|\left\{\left\|\frac{\rho_n\bx_{0j}\bx_{0j}\transpose}{(1 - p_{0ij})^2} - \frac{\bW\transpose\widetilde{\bx}_j\widetilde{\bx}_j\transpose\bW}{(1 - {p}_{0ij})^2}\right\|_2 + \left\|\frac{\bW\transpose\widetilde{\bx}_j\widetilde{\bx}_j\transpose\bW}{(1 - p_{0ij})^2} - \frac{\bW\transpose\widetilde{\bx}_j\widetilde{\bx}_j\transpose\bW}{(1 - \widetilde{p}_{ij})^2}\right\|_2\right\}\\
&\quad\leq \frac{1}{n}\|\bA-\rho_n\bX_0\bX_0\transpose\|_\infty \\
&\qquad\qquad\qquad \times \max_{i,j\in[n]}\left\{\frac{\|\rho_n\bx_{0j}\bx_{0j}\transpose-\bW\transpose\widetilde\bx_j\widetilde\bx_j\transpose\bW\|_2}{(1-p_{0ij})^2} + \left|\frac{(\widetilde p_{ij}-p_{0ij})(2-\widetilde p_{ij}-p_{0ij})}{(1-p_{0ij})^2(1-\widetilde p_{ij})^2}\right|\|\widetilde\bx_j\|_2^2\right\}\\
&\quad\lesssim_{c,\delta,\lambda} \frac{1}{n}\rho_n\left(\rho_n\halfpower\sqrt{\frac{\log n}{n}} + \rho_n\halfpower\sqrt{\frac{\log n}{n}}\rho_n\right)\\
&\quad\lesssim_{c,\delta,\lambda} \rho_n^{\frac{3}{2}}\sqrt{\frac{\log n}{n}}
\end{align*}
by Cauchy--Schwarz inequality, Lemma \ref{lemma:some_frequently_used_results}, and $\|\bA-\rho_n\bX_0\bX_0\transpose\|_\infty\leq\|\bA\|_\infty+\|\rho_n\bX_0\bX_0\transpose\|_\infty\lesssim_c n\rho_n$ with probability at least $1-n^{-c}$.

\noindent For the third term, with probability at least $1-n^{-c}$,
\begin{align*}
\left\|\frac{1}{n}\sum_{j = 1}^n(\widetilde{p}_{ij} - \rho_n\bx_{0i}\transpose\bx_{0j})\frac{\bW\transpose\widetilde{\bx}_j\widetilde{\bx}_j\transpose\bW}{(1 - \widetilde{p}_{ij})^2}\right\|_2 &\leq \max_{i,j\in[n]}|\widetilde p_{ij} - p_{0ij}|\cdot\max_{i,j\in[n]}\frac{1}{(1-\widetilde p_{ij})^2}\cdot\max_{j\in[n]}\|\widetilde\bx_j\|_2^2\\
&\lesssim_{c,\delta,\lambda}\rho_n^{\frac{3}{2}}\sqrt{\frac{\log n}{n}}
\end{align*}
by Lemma \ref{lemma:some_frequently_used_results}.
So the second conclusion is shown by combining the above three bounds.
\end{proof}

\begin{lemma}\label{lemma:Frobenius_norm_concentration}
Suppose $\bA\sim\mathrm{RDPG}(\rho_n^{1/2}\bX_0)$ and assume the conditions of Theorem \ref{theorem:asymptotic_properties_of_MSLE} hold. Denote by
\[
Z = Z(\bA) = \sum_{i = 1}^n\left\|\frac{1}{n\rho_n^{1/2}}\sum_{j = 1}^n\frac{(A_{ij} - \rho_n\bx_{0i}\transpose\bx_{0j})\bG_{0in}^{-1}\bx_{0j}}{\bx_{0i}\transpose\bx_{0j}(1 - \rho_n\bx_{0i}\transpose\bx_{0j})}\right\|_2^2.
\]
Then $Z = \expect_0Z + o_{\prob_0}(1)$. 
\end{lemma}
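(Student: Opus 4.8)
\emph{Plan of proof.} The quantity $Z$ depends only on the adjacency entries $A_{ij}$ and the deterministic truth $\bX_0$ (not on the spectral embedding), so this is a pure concentration statement for a degree‑two polynomial in independent, centred, bounded random variables; the plan is to bound $\mathrm{Var}_0(Z)$ and apply Chebyshev's inequality. First I would write $E_{ij}=A_{ij}-\rho_n\bx_{0i}\transpose\bx_{0j}$ (so $E_{ij}=E_{ji}$, $|E_{ij}|\le 1$, $\expect_0 E_{ij}=0$, and $\{A_{ij}:1\le i\le j\le n\}$ is an independent family), $\bv_{ij}=\bG_{0in}\inverse\bx_{0j}\{\bx_{0i}\transpose\bx_{0j}(1-\rho_n\bx_{0i}\transpose\bx_{0j})\}\inverse$, and $\bS_i=\sum_{j=1}^n E_{ij}\bv_{ij}$, so that $Z=(n^2\rho_n)\inverse\sum_{i=1}^n\|\bS_i\|_2^2=:(n^2\rho_n)\inverse\sum_{i=1}^n Q_i$. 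Under the standing assumptions one has $\bG_{0in}\succeq 4\bX_0\transpose\bX_0/n\succeq 4\lambda\bI_d$ (using $t(1-t)\le 1/4$), $\|\bx_{0j}\|_2<1$, and the denominator defining $\bv_{ij}$ is at least $\delta^2$, whence $\max_{i,j}\|\bv_{ij}\|_2\le C_{\delta,\lambda}$; likewise $\expect_0 E_{ij}^2=p_{0ij}(1-p_{0ij})$ with $p_{0ij}=\rho_n\bx_{0i}\transpose\bx_{0j}$ satisfies $\rho_n\delta^2\le\expect_0 E_{ij}^2\le\rho_n$. Consequently $\expect_0\|\bS_i\|_2^2=\sum_j\expect_0 E_{ij}^2\|\bv_{ij}\|_2^2\lesssim_{\delta,\lambda}n\rho_n$, so $\expect_0 Z\lesssim_{\delta,\lambda}1$ is finite, and it remains only to prove $\mathrm{Var}_0(Z)\to 0$; for this the single fact $n\rho_n\to\infty$ (implied by the sparsity condition) will be enough.

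For the variance I would invoke the Efron--Stein inequality with respect to the independent coordinates $\{A_{ij}:1\le i\le j\le n\}$. Resampling a single $A_{ij}$ by an independent copy (which simultaneously resamples $E_{ij}=E_{ji}$) affects $Z$ only through $Q_i$ and $Q_j$, since $\bS_\ell$ involves $E_{ij}$ for no $\ell\notin\{i,j\}$. Writing $\delta_{ij}$ for the resulting change in $E_{ij}$ ($|\delta_{ij}|\le 1$), the identity $Q_i-Q_i^{(ij)}=2\delta_{ij}\bv_{ij}\transpose\bS_i-\delta_{ij}^2\|\bv_{ij}\|_2^2$ gives $|Q_i-Q_i^{(ij)}|\lesssim_{\delta,\lambda}\|\bS_i\|_2+1$ pointwise, hence $\expect_0(Q_i-Q_i^{(ij)})^2\lesssim_{\delta,\lambda}\expect_0\|\bS_i\|_2^2+1\lesssim_{\delta,\lambda}n\rho_n$, and similarly for $Q_j$. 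Therefore $\expect_0(Z-Z^{(ij)})^2\lesssim_{\delta,\lambda}(n^2\rho_n)^{-2}n\rho_n=(n^3\rho_n)\inverse$, and summing the Efron--Stein bound $\mathrm{Var}_0(Z)\le\tfrac12\sum_{i\le j}\expect_0(Z-Z^{(ij)})^2$ over the $n(n+1)/2\asymp n^2$ coordinates yields $\mathrm{Var}_0(Z)\lesssim_{\delta,\lambda}(n\rho_n)\inverse\to 0$. Chebyshev's inequality then delivers $Z-\expect_0 Z=o_{\prob_0}(1)$.

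An alternative, more hands‑on route expands $\mathrm{Var}_0(Z)=(n^2\rho_n)^{-2}\sum_{i,i'}\mathrm{Cov}_0(Q_i,Q_{i'})$ directly: for $i=i'$, splitting $Q_i$ into its $j=k$ and $j\ne k$ parts and using independence of $\{E_{ij}\}_j$ gives $\mathrm{Var}_0(Q_i)\lesssim_{\delta,\lambda}n\rho_n+n^2\rho_n^2\lesssim n^2\rho_n^2$; for $i\ne i'$, the crucial observation is that $Q_i$ and $Q_{i'}$ share exactly one noise variable, $E_{ii'}=E_{i'i}$, so writing $\bS_i=E_{ii'}\bv_{ii'}+\bR_i$ and $\bS_{i'}=E_{ii'}\bv_{i'i}+\bR_{i'}$ with $\bR_i,\bR_{i'},E_{ii'}$ mutually independent and $\expect_0\bR_i=\expect_0\bR_{i'}=\zero$, every term of the covariance expansion vanishes except $\|\bv_{ii'}\|_2^2\|\bv_{i'i}\|_2^2\{\expect_0 E_{ii'}^4-(\expect_0 E_{ii'}^2)^2\}=O_{\delta,\lambda}(\rho_n)$, so $\mathrm{Var}_0(Z)\lesssim_{\delta,\lambda}n\inverse+(n^2\rho_n)\inverse$. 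Either way the only place requiring care is the bookkeeping of the dependence structure — localizing the effect of resampling $A_{ij}$ to $Q_i,Q_j$ in the Efron--Stein argument, or checking that all but one covariance term dies for $i\ne i'$ in the direct argument — and neither is a serious obstacle; the remainder is routine moment estimation, and the hypothesis $n\rho_n\to\infty$ is precisely what makes the resulting bound vanish.
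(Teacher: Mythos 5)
Your proof is correct, and it takes a genuinely different (more elementary) route than the paper. The paper also treats $Z$ as a function of the independent coordinates $\{A_{kl}:k\leq l\}$ and does the same second-moment bookkeeping on the single-coordinate differences $Z-Z_{kl}$, but it then feeds these into an exponential concentration inequality of logarithmic-Sobolev type (Theorem 6.7 of Boucheron--Lugosi--Massart): it shows $\expect_0\{\sum_{k\leq l}(Z-Z_{kl})^2\}\lesssim_\delta (n\rho_n)^{-1}$, uses Markov's inequality to get $\sum_{k\leq l}(Z-Z_{kl})^2\leq 1/\log n$ with probability $1-O(\log n/(n\rho_n))$, and concludes $\prob_0(|Z-\expect_0 Z|>\epsilon)\lesssim_\delta \exp(-\epsilon^2\log n/2)+\log n/(n\rho_n)$. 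You instead convert the same coordinate-wise differences directly into a variance bound via Efron--Stein and finish with Chebyshev, obtaining $\mathrm{Var}_0(Z)\lesssim_{\delta,\lambda}(n\rho_n)^{-1}\to 0$; your alternative direct covariance expansion (exploiting that $Q_i$ and $Q_{i'}$ share only the single variable $E_{ii'}$) even gives the slightly sharper $n^{-1}+(n^2\rho_n)^{-1}$. Since the lemma only asserts $Z=\expect_0Z+o_{\prob_0}(1)$, your weaker in-probability conclusion is exactly what is needed, and in fact the paper's final tail bound is anyway dominated by the polynomial term $\log n/(n\rho_n)$, so nothing of substance is lost; what the paper's route would buy in principle is an exponential tail if the self-bounding quantity could be controlled with high probability rather than in expectation. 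One immaterial slip: your claim $\bG_{0in}\succeq 4\bX_0\transpose\bX_0/n$ via $t(1-t)\leq 1/4$ does not hold for $\rho_n<1$, since the relevant factor is $\bx_{0i}\transpose\bx_{0j}(1-\rho_n\bx_{0i}\transpose\bx_{0j})$, which is only bounded by $1$; but you only use $\bG_{0in}\succeq\lambda\bI_d$ (which does follow, and is what the paper records in its display \eqref{eqn:finite_pd_fisher_info}), so the constant is harmless.
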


\begin{proof}
Denote by
\[
\bgamma_{ij} = \frac{\bG_{0in}^{-1}\bx_{0j}}{n\rho_n^{1/2}\bx_{0i}\transpose\bx_{0j}(1 - \rho_n\bx_{0i}\transpose\bx_{0j})},\quad i,j\in [n].
\]
Then we have
\begin{align*}
Z - \expect_0Z
& = \sum_{i = 1}^n\sum_{a = 1}^n\sum_{b = 1}^nE_{ia}E_{ib}\bgamma_{ia}\transpose\bgamma_{ib}\mathbbm{1}(a\neq b)\\
& = \sum_{i = 1}^n\sum_{a\geq i}\sum_{b\geq i}E_{ia}E_{ib}\bgamma_{ia}\transpose\bgamma_{ib}\mathbbm{1}(a\neq b)
 + \sum_{i = 1}^n\sum_{a < i}\sum_{b\geq i}E_{ai}E_{ib}\bgamma_{ia}\transpose\bgamma_{ib}\mathbbm{1}(a\neq b)\\
&\quad + \sum_{i = 1}^n\sum_{a\geq i}\sum_{b < i}E_{ia}E_{bi}\bgamma_{ia}\transpose\bgamma_{ib}\mathbbm{1}(a\neq b)
 + \sum_{i = 1}^n\sum_{a < i}\sum_{b < i}E_{ai}E_{bi}\bgamma_{ia}\transpose\bgamma_{ib}\mathbbm{1}(a\neq b). 
\end{align*}
To prove the desired result, we need the following decoupling inequality for $U$-statistic-type random variables. 
\begin{lemma}[Theorem 1 in \cite{10.1214/aop/1176988291}]
\label{lemma:decoupling_inequality}
Let $(X_i)_i$ be a sequence of independent random variables on a measurable space and let $(X_i^{(1)})_i,(X_i^{(2)})_i$ be two independent copies of $(X_i)_i$. Let $f_{i_1i_2}$ be families of functions of $k$ variables taking $(S\times S)$ into a Banach space $(B,\|\cdot\|_2)$. Then, for all $n\geq 2$, $t > 0$, there exists a numerical constant $C$ such that
\begin{align*}
\prob\bigg\{\bigg\|\sum_{1\leq i_1\neq i_2\leq n}f_{i_1i_2}(X_{i_1}^{(1)}, X_{i_2}^{(1)})\bigg\|\geq t\bigg\}
\leq C\prob\bigg\{C\bigg\|\sum_{1\leq i_1\neq i_2\leq n}f_{i_1i_2}(X_{i_1}^{(1)}, X_{i_2}^{(2)})\bigg\|_2\geq t\bigg\}
\end{align*}
\end{lemma}
Now we apply Lemma \ref{lemma:decoupling_inequality} with $(X_i)_i = (E_{ij}:1\leq i\leq j\leq n)$, and $f_{(i_1,a),(i_2,b)}(E_{i_1a},E_{i_2b}) = \mathbbm{1}(i_1 = i_2)\bgamma_{i_1a}\transpose\bgamma_{i_2b}$, $f_{(i_1,a),(i_2,b)}(E_{i_1a},E_{i_2b}) = \mathbbm{1}(a = i_2)\bgamma_{ai_1}\transpose\bgamma_{i_2b}$, $f_{(i_1,a),(i_2,b)}(E_{i_1a},E_{i_2b}) = \mathbbm{1}(i_1 = b)\bgamma_{i_1a}\transpose\bgamma_{bi_2}$, and $f_{(i_1,a),(i_2,b)}(E_{i_1a},E_{i_2b}) = \mathbbm{1}(a = b)\bgamma_{ai_1}\transpose\bgamma_{bi_2}$, for the four terms, respectively. Without loss of generality, it is sufficient to work with the first term. Now let $(\bar{E}_{ij}:1\leq i\leq j\leq n)$ be an independent copy of $(E_{ij}:1\leq i\leq j\leq n)$. It is sufficient to show that
\begin{align*}
\sum_{i = 1}^n\sum_{a\geq i}\sum_{b\geq i}E_{ia}\bar{E}_{ib}\bgamma_{ia}\transpose\bgamma_{ib}\mathbbm{1}(a\neq b) = o_{\prob_0}(1).
\end{align*}
By Bernstein's inequality and the independence between $E_{ia}$ and $\bar{E}_{ib}$, for any $c > 0$, there exists a constant $K_c > 0$, such that
\begin{align*}
\bigg|\sum_{i = 1}^n\sum_{a\geq i}\sum_{b\geq i}E_{ia}\bar{E}_{ib}\bgamma_{ia}\transpose\bgamma_{ib}\mathbbm{1}(a\neq b)\bigg|
\leq K_cn(\rho_n\log n)^{1/2}\max_{1\leq i,a\leq n}\bigg|\sum_{b = 1}^n\bar{E}_{ib}\bgamma_{ia}\transpose\bgamma_{ib}\bigg|\leq \frac{K_c^2\log n}{\sqrt{n}}
\end{align*}
with probability at least $1 - O(n^{-c})$. The proof is thus completed.

\end{proof}

\begin{theorem}[Theorem 4.7 in \citealp{xie-2022-entrywise}]
\label{theorem:asymptotic_normality_OSE}
Suppose $\bA\sim\mathrm{RDPG}(\rho_n\halfpower\bX_0)$ and assume the conditions of Theorem \ref{theorem:existence_and_uniqueness} hold. Define the one-step estimator $\widehat\bx_i^{(\mathrm{OS})}$ by
\[
\widehat{\bx}_i^{(\mathrm{OS})}
 = \widetilde{\bx}_i + 
 \left\{\frac{1}{n}\sum_{j = 1}^n\frac{\widetilde\bx_j\widetilde\bx_j\transpose}{\widetilde{p}_{ij}(1 - \widetilde{p}_{ij})}\right\}^{-1}
 \left\{\frac{1}{n}\sum_{j = 1}^n\frac{(A_{ij} - \widetilde{p}_{ij})\widetilde\bx_j}{\widetilde{p}_{ij}(1 - \widetilde{p}_{ij})}\right\}.
\]
Then
\[
\bG_{0in}^{1/2}(\bW\transpose\widehat{\bx}_i^{(\mathrm{OS})} - \rho_n^{1/2}\bx_{0i})
 = \frac{1}{n\rho_n^{1/2}}\sum_{j = 1}^n\frac{(A_{ij} - \rho_n\bx_{0i}\transpose\bx_{0j})\bG_{0in}^{-1/2}\bx_{0j}}{\bx_{0i}\transpose\bx_{0j}(1 - \rho_n\bx_{0i}\transpose\bx_{0j})} + \br_{in}^{(\mathrm{OS})},
\]
where
\[
\bG_{0in} = \frac{1}{n}\sum_{j = 1}^n\frac{\bx_{0j}\bx_{0j}\transpose}{\bx_{0i}\transpose\bx_{0j}(1 - \rho_n\bx_{0i}\transpose\bx_{0j})},
\]
and for all $c > 0$, there exists a constant $N_{c,\delta,\lambda}\in\mathbb{N}_+$ depending on $c,\delta,\lambda$, such that for all $n\geq N_{c,\delta,\lambda}$, with probability at least $1 - (n\rho_n)^{-c}$, $\|\br_{in}^{(\mathrm{OS})}\|_2\lesssim (\log(n\rho_n))^2/(n\rho_n^{1/2})$. Furthermore, 
\begin{align*}
&\sqrt{n}\bG_{0in}^{1/2}(\bW\transpose\widehat{\bx}_i^{(\mathrm{OS})} - \rho_n^{1/2}\bx_{0i})\overset{\calL}{\to}\mathrm{N}_d(\mathbf{0}_d,\bI_d),
\end{align*}
\end{theorem}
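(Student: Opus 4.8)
The plan is to expand the one-step correction around the ground truth in the $\bW$-rotated frame and show that the spectral-embedding bias cancels to first order. Write $\bh_j = \bW\transpose\widetilde\bx_j - \rho_n\halfpower\bx_{0j}$, $p_{0ij} = \rho_n\bx_{0i}\transpose\bx_{0j}$, and introduce the empirical information and score $\widetilde\bG_{in} = (1/n)\sum_{j=1}^n\widetilde\bx_j\widetilde\bx_j\transpose/\{\widetilde p_{ij}(1-\widetilde p_{ij})\}$ and $\bS_{in} = (1/n)\sum_{j=1}^n(A_{ij}-\widetilde p_{ij})\widetilde\bx_j/\{\widetilde p_{ij}(1-\widetilde p_{ij})\}$, so that by definition $\bW\transpose\widehat\bx_i^{(\mathrm{OS})} - \rho_n\halfpower\bx_{0i} = \bh_i + (\bW\transpose\widetilde\bG_{in}\bW)\inverse\bW\transpose\bS_{in}$.

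The first step is the algebraic identity $\bW\transpose\bS_{in} = -(\bW\transpose\widetilde\bG_{in}\bW)\bh_i + \bU_{in} + \bepsilon_{in}$, where $\bU_{in} = (1/n)\sum_{j=1}^n(A_{ij}-p_{0ij})\rho_n\halfpower\bx_{0j}/\{p_{0ij}(1-p_{0ij})\}$ is the oracle score evaluated at the truth and $\bepsilon_{in}$ is a remainder. This is obtained by writing $A_{ij}-\widetilde p_{ij} = (A_{ij}-p_{0ij}) - (\widetilde p_{ij}-p_{0ij})$ and expanding $\widetilde p_{ij}-p_{0ij} = \rho_n\halfpower\bh_i\transpose\bx_{0j} + \rho_n\halfpower\bx_{0i}\transpose\bh_j + \bh_i\transpose\bh_j$: the piece linear in $\bh_i$ reassembles, after replacing one factor $\rho_n\halfpower\bx_{0j}$ by $\bW\transpose\widetilde\bx_j$ (with error controlled by Lemma~\ref{lemma:two_to_infinity_norm_ASE}), into $-(\bW\transpose\widetilde\bG_{in}\bW)\bh_i$, while the substitution $\widetilde\bx_j\to\rho_n\halfpower\bx_{0j}$, $\widetilde p_{ij}\to p_{0ij}$ in the $(A_{ij}-p_{0ij})$ term together with the cross term in $\bx_{0i}\transpose\bh_j$ and the quadratic term in $\bh_i\transpose\bh_j$ all go into $\bepsilon_{in}$. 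Substituting the identity, the $\bh_i$ terms cancel and $\bW\transpose\widehat\bx_i^{(\mathrm{OS})} - \rho_n\halfpower\bx_{0i} = (\bW\transpose\widetilde\bG_{in}\bW)\inverse(\bU_{in} + \bepsilon_{in})$. I would then replace $\bW\transpose\widetilde\bG_{in}\bW$ by $\bG_{0in}$ using Lemma~\ref{lemma:some_frequently_used_results} (which controls $\widetilde p_{ij}-p_{0ij}$ and $\bW\transpose\widetilde\bx_j\widetilde\bx_j\transpose\bW-\rho_n\bx_{0j}\bx_{0j}\transpose$) and the fact that $\bG_{0in}$ is well conditioned --- since $\bx_{0i}\transpose\bx_{0j}$ and $1-p_{0ij}$ lie between constant multiples of $\delta$ and $1$, one has $\bG_{0in}\asymp\bDelta_n$ with eigenvalues bounded above and below --- multiply through by $\bG_{0in}\halfpower$, and use $\rho_n\halfpower/p_{0ij} = 1/(\rho_n\halfpower\bx_{0i}\transpose\bx_{0j})$ to recognize $\bG_{0in}\invhalfpower\bU_{in}$ as the asserted leading term, collecting all errors into $\br_{in}^{(\mathrm{OS})}$.

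The sharp control $\|\br_{in}^{(\mathrm{OS})}\|_2 \lesssim (\log(n\rho_n))^2/(n\rho_n\halfpower)$ on the event of probability $1-(n\rho_n)^{-c}$ is the main obstacle, and it is here that the entrywise spectral-embedding analysis of \cite{xie-2022-entrywise} is genuinely needed rather than the cruder bounds of Lemmas~\ref{lemma:some_frequently_used_results} and~\ref{lemma:concentration_of_hessian_matrices}. The delicate contribution is $(1/n)\sum_{j=1}^n(A_{ij}-p_{0ij})\bh_j/\{p_{0ij}(1-p_{0ij})\}$: a triangle-inequality bound gives only $O(\sqrt{(\log n)/n})$, which is not $o(1/\sqrt n)$, because $\bh_j$ is correlated with $A_{ij}$ (the $i$th row of $\bA$ enters $\widetilde\bx_j$). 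One must decouple this dependence --- replace $\bh_j$ by a leave-the-$i$th-row-out surrogate $\bh_j^{(-i)}$ that is independent of $(A_{ik})_{k\in[n]}$, bound $\max_j\|\bh_j-\bh_j^{(-i)}\|_2$, and apply a Bernstein inequality to the resulting independent sum --- which earns a factor $1/\sqrt{n\rho_n}$ and produces a $\log(n\rho_n)$ rather than $\log n$ polylogarithmic factor. That distinction is exactly what makes $\sqrt n\,\br_{in}^{(\mathrm{OS})} = O((\log(n\rho_n))^2/\sqrt{n\rho_n}) = o_{\prob_0}(1)$ under the minimal sparsity $(\log n)/(n\rho_n)\to 0$, because $n\rho_n\to\infty$ kills any fixed power of $\log(n\rho_n)$, whereas a $\log n$ factor would require a stronger condition such as $(\log n)^2/(n\rho_n)\to0$. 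The remaining parts of $\br_{in}^{(\mathrm{OS})}$ --- the information-matrix replacement, the $\bx_{0i}\transpose\bh_j$ and $\bh_i\transpose\bh_j$ contributions, and the denominator substitutions --- are of the same or smaller order and are handled by the same tools.

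Finally, the asymptotic normality follows from the leading term alone. For fixed $i$, the summands of $\bU_{in}$ indexed by $j$ are independent and mean zero (the $j=i$ diagonal term has norm $O(1/(n\rho_n\halfpower))$ and is negligible), and a direct variance computation using $p_{0ij}(1-p_{0ij})/\{(\bx_{0i}\transpose\bx_{0j})^2(1-p_{0ij})^2\} = \rho_n/\{\bx_{0i}\transpose\bx_{0j}(1-p_{0ij})\}$ gives $\mathrm{Var}_0(\sqrt n\,\bG_{0in}\invhalfpower\bU_{in}) = \bG_{0in}\invhalfpower\bG_{0in}\bG_{0in}\invhalfpower = \bI_d$. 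Each summand of $\sqrt n\,\bG_{0in}\invhalfpower\bU_{in}$ has norm $O(1/\sqrt{n\rho_n})\to0$ while the total covariance is $\bI_d$, so the Lindeberg condition holds, and the Lindeberg--Feller theorem for triangular arrays (with $d$ fixed, via the Cram\'er--Wold device) yields $\sqrt n\,\bG_{0in}\invhalfpower\bU_{in}\overset{\calL}{\to}\mathrm{N}_d(\zero_d,\bI_d)$; combining with $\sqrt n\,\br_{in}^{(\mathrm{OS})} = o_{\prob_0}(1)$ and Slutsky's theorem gives $\sqrt n\,\bG_{0in}\halfpower(\bW\transpose\widehat\bx_i^{(\mathrm{OS})}-\rho_n\halfpower\bx_{0i})\overset{\calL}{\to}\mathrm{N}_d(\zero_d,\bI_d)$.
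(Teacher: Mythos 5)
This statement is not proved in the paper at all: it is imported verbatim as Theorem 4.7 of \citealp{xie-2022-entrywise} (the asymptotic normality statement also appears in \citealp{xie-xu-2021-onestep}), so there is no in-paper argument to compare against; the paper's ``proof'' is the citation. Your sketch reconstructs the strategy of those cited works correctly: the decomposition $\bW\transpose\widehat\bx_i^{(\mathrm{OS})}-\rho_n\halfpower\bx_{0i}=\bh_i+(\bW\transpose\widetilde\bG_{in}\bW)\inverse\bW\transpose\bS_{in}$, the expansion $\widetilde p_{ij}-p_{0ij}=\rho_n\halfpower\bh_i\transpose\bx_{0j}+\rho_n\halfpower\bx_{0i}\transpose\bh_j+\bh_i\transpose\bh_j$ with the resulting first-order cancellation of $\bh_i$, the identification of the oracle score as the leading term (your algebra recovering the stated $1/(n\rho_n^{1/2})$ normalization is right), the variance identity yielding $\bI_d$, and the Lindeberg--Feller/Cram\'er--Wold step are all sound.

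The caveat is that, as a standalone argument, the crux is delegated to exactly the theorem being proved, and it is broader than the single term you single out. Besides the $A_{ij}$--$\bh_j$ coupling term, the averaged contributions $(1/n)\sum_j(\bx_{0i}\transpose\bh_j)(\cdots)$ and $(1/n)\sum_j(\bh_i\transpose\bh_j)(\cdots)$ are also beyond the uniform bound of Lemma \ref{lemma:two_to_infinity_norm_ASE}: naive estimates give $O(\sqrt{(\log n)/n})$ and $O((\log n)/(n\rho_n^{1/2}))$ respectively, the first of which is not $o(n^{-1/2})$, and the second exceeds the claimed $(\log(n\rho_n))^2/(n\rho_n^{1/2})$ when $n\rho_n$ is only polylogarithmic in $n$, which the assumption $(\log n)/(n\rho_n)\to0$ permits. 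Handling them requires the sharp entrywise expansion of $\bh_j$ itself (a linear form in the $j$th row of $\bA-\expect_0\bA$ plus a uniformly smaller residual), so that averages over $j$ enjoy cancellation, in addition to the leave-one-out decoupling you describe; that, together with the $1-(n\rho_n)^{-c}$ probability level, is precisely the content of Theorem 4.7 in \citealp{xie-2022-entrywise}. So your proposal is a faithful outline of the cited proof rather than a complete independent one, and within this paper the intended justification is simply the citation.
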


\section{Proofs of the Main Results}
\subsection{Proof of Theorem \ref{theorem:existence_and_uniqueness}}
\begin{proof}
$\blacksquare$ We first prove existence. For any $c>0$, there exists $N_{c,\delta,\lambda}\in\mathbb{N}_+$ such that
\[
\sup_{\|\bx_i\|_2\leq1}\max_{j\in[n]}|\bx_i\transpose\widetilde\bx_j| \leq \max_{j\in[n]}\|\widetilde\bx_j\|_2 \leq \rho_n(1-\frac{\delta}{2}) < 1
\]
with probability at least $1-n^{-c}$, where the first inequality follows from Cauchy--Schwarz inequality, the second from Lemma \ref{lemma:some_frequently_used_results}. By definition of $\widetilde M_{in}(\bx_i)$, it is continuous over the closed unit ball $\{\bx_i\in\mathbb{R}^d:\|\bx_i\|_2\leq1\}$ over this event. Hence the maximizer $\widehat\bx_i$ of $\widetilde M_{in}(\bx_i)$ exists with probability at least $1-n^{-c}$.

\noindent
$\blacksquare$ Next we prove uniqueness. By definition, with probability at least $1-n^{-c}$, $\widetilde M_{in}(\bx_i)$ is twice continuously differentiable, with
\begin{align*}
-\frac{\partial\widetilde M_{in}}{\partial\bx_i\partial\bx_i\transpose}(\bx_i)
&= \frac{1}{n}\sum_{j=1}^n\left\{\frac{1}{\widetilde p_{ij}} + \frac{1-A_{ij}}{(1-\bx_i\transpose\widetilde\bx_j)^2}\right\}\widetilde\bx_j\widetilde\bx_j\transpose
\succeq \frac{1}{n}\sum_{j=1}^n\frac{\widetilde\bx_j\widetilde\bx_j\transpose}{\widetilde p_{ij}}\\
&\succeq \frac{1}{n\rho_n}\sum_{j=1}^n\widetilde\bX\transpose\widetilde\bX
\succeq \frac{1}{n\rho_n}\sigma_d(\widetilde\bX)^2\mathbf{I}_d.
\end{align*}
By Theorem 5.2 in \cite{lei-rinaldo-2015} and Weyl's inequality, there exists a constant depending on $c$, such that with probability at least $1-n^{-c}$,
\[
\sigma_d(\widetilde\bX)^2 = \lambda_d(\bA) \geq \frac{1}{2}\lambda_d\left(\rho_n\bX_0\bX_0\transpose\right) = \frac{1}{2}n\rho_n\lambda_d\left(\frac{1}{n}\bX_0\bX_0\transpose\right) \geq \frac{1}{2}n\rho_n\lambda > 0.
\]
Therefore, for any $c>0$, there exists $N_{c,\delta,\lambda}\in\mathbb{N}_+$ such that for all $n\geq N_{c,\delta,\lambda}$, with probability at least $1-n^{-c}$, $\widetilde M_{in}(\bx_i)$ is strictly concave. Hence it has a unique maximizer $\widehat\bx_i$.
\end{proof}

\subsection{Proof of Theorem \ref{theorem:asymptotic_properties_of_MSLE}}
\begin{proof}
$\blacksquare$ We first establish the following consistency result: 
For any $c>0$, there exists some constant $N_{c,\delta,\lambda}\in\mathbb{N}_+$ depending on $c,\delta,\lambda$ such that for all $n\geq N_{c,\delta,\lambda}\in\mathbb{N}_+$, there exists an orthogonal matrix $\bW\in\mathbb{O}(d)$, such that
with probability at least $1-n^{-c}$,
\[
\max_{i\in[n]}\|\bW\transpose\widehat\bx_i - \rho_n\halfpower\bx_{0i}\|_2\lesssim_{c,\delta,\lambda} \{{\log n}/(n\rho_n)\}\halfpower.\]
Define $\widetilde{M}_{in}(\bx_i) = (1/n)\widetilde{\ell}_{in}(\bx_i)$ and the population counterpart of $\widetilde M_{in}(\bx_i)$ as
\[
M_{in}(\bx_i) = \frac{1}{n}\sum_{j=1}^n\left\{2\rho_n\halfpower\bx_i\transpose\bx_{0j} - \frac{\bx_i\transpose\bx_{0j}\bx_{0j}\transpose\bx_i}{2\bx_{0i}\transpose\bx_{0j}} + (1-\rho_n\bx_{0i}\transpose\bx_{0j})\log(1-\rho_n\halfpower\bx_i\transpose\bx_{0j})\right\}.
\]
Simple calculation shows that
\begin{align*}
\frac{\partial M_{in}}{\partial\bx_i}(\bx_i)&= \frac{1}{n}\sum_{j=1}^n\rho_n\halfpower\bx_{0j}\transpose\left(\rho_n\halfpower\bx_{0i}-\bx_i\right)\left\{\frac{1}{\rho_n\bx_{0i}\transpose\bx_{0j}} + \frac{1}{1-\rho_n\halfpower\bx_i\transpose\bx_{0j}}\right\}\rho_n\halfpower\bx_{0j},\\
\frac{\partial^2 M_{in}}{\partial\bx_i\partial\bx_i\transpose}(\bx_i)&= -\frac{1}{n}\sum_{j=1}^n\left\{\frac{1}{\rho_n\bx_{0i}\transpose\bx_{0j}} + \frac{1-\rho_n\bx_{0i}\transpose\bx_{0j}}{(1-\rho_n\halfpower\bx_i\transpose\bx_{0j})^2}\right\}\rho_n\bx_{0j}\bx_{0j}\transpose,
\end{align*}
and
\begin{align*}
\frac{\partial\widetilde M_{in}}{\partial\bx_i}(\bx_i)&= \frac{1}{n}\sum_{j=1}^n\left(A_{ij}-\bx_i\transpose\widetilde\bx_j\right)\left\{\frac{1}{\widetilde p_{ij}} + \frac{1}{1-\bx_i\widetilde\bx_j}\right\}\widetilde\bx_j,\\
\frac{\partial^2\widetilde M_{in}}{\partial\bx_i\partial\bx_i\transpose}(\bx_i)&= -\frac{1}{n}\sum_{j=1}^n\left\{\frac{1}{\widetilde p_{ij}} + \frac{1-A_{ij}}{(1-\bx_i\widetilde\bx_j)^2}\right\}\widetilde\bx_j\widetilde\bx_j\transpose.
\end{align*}
For simplicity of notation, in what follows the large probability bounds are with regard to $n\geq N_{c,\delta,\lambda}$ for some large constant $N_{c,\delta,\lambda}$ depending on $c,\delta,\lambda$. 

\noindent
Claim I (identifiability): For all $\epsilon>0$,
\[
\inf_{\|\bx_i-\rho_n\halfpower\bx_{0i}\|_2\geq\epsilon}\left\|\frac{\partial M_{in}}{\partial\bx_i}(\bx_i)\right\|_2 \geq \lambda\epsilon > \left\|\frac{\partial M_{in}}{\partial\bx_i}(\rho_n\halfpower\bx_{0i})\right\|_2 = 0.
\]
Claim II (uniform convergence): With probability at least $1-n^{-c}$,
\[
\max_{i\in[n]}\sup_{\|\bx_i\|_2\leq1} \left\|\bW\transpose\frac{\partial\widetilde M_{in}}{\partial\bx_i}(\bW\bx_i) -\frac{\partial M_{in}}{\partial\bx_i}(\bx_i)\right\|_2 \lesssim_{c,\delta,\lambda} \sqrt{\frac{\log n}{n\rho_n}}.
\]
Now we show Claim I. It is obvious that $\frac{\partial M_{in}}{\partial\bx_i}(\rho_n\halfpower\bx_{0i})=\mathbf{0}_d$. Because $\rho_n\leq1$, $\|\bx_i\|_2\leq1$, and $\max_{j\in[n]}\|\bx_{0j}\|_2\leq1$, we have
\[
-\frac{\partial^2 M_{in}}{\partial\bx_i\partial\bx_i\transpose}(\bx_i) \succeq \frac{1}{n}\sum_{j=1}^n\frac{\bx_{0j}\bx_{0j}\transpose}{\bx_{0i}\transpose\bx_{0j}} \succeq \frac{1}{n}\sum_{j=1}^n\bx_{0j}\bx_{0j}\transpose \succeq \frac{1}{n}\bX_0\transpose\bX \succeq \lambda_d\left(\frac{1}{n}\bX_0\transpose\bX\right)\mathbf{I}_d \succeq \lambda\mathbf{I}_d,
\]
which implies that $M_{in}(\bx_i)$ is strictly concave with $\rho_n\halfpower\bx_{0i}$ as a unique maximizer. By Taylor's theorem, $\frac{\partial M_{in}}{\partial\bx_i}(\bx_i) = \frac{\partial^2 M_{in}}{\partial\bx_i\partial\bx_i\transpose}(\bar\bx_i)(\bx_i-\rho_n\halfpower\bx_{0i})$, where $\bar\bx_i=\theta\rho_n\halfpower\bx_{0i}+(1-\theta)\bx_i$ for some $\theta\in[0,1]$. It follows that
\begin{align*}
\left\|\frac{\partial M_{in}}{\partial\bx_i}(\bx_i)\right\|_2 &= \left\|\frac{\partial^2 M_{in}}{\partial\bx_i\partial\bx_i\transpose}(\bar\bx_i)\left(\bx_i-\rho_n\halfpower\bx_{0i}\right)\right\|_2 \\
&\geq \lambda_d\left(-\frac{\partial^2 M_{in}}{\partial\bx_i\partial\bx_i\transpose}(\bar\bx_i)\right) \left\|\bx_i-\rho_n\halfpower\bx_{0i}\right\|_2 
\geq \lambda\left\|\bx_i-\rho_n\halfpower\bx_{0i}\right\|_2,
\end{align*}
so $\inf_{\|\bx_i-\rho_n\halfpower\bx_{0i}\|_2\geq\epsilon}\left\|\frac{\partial M_{in}}{\partial\bx_i}(\bx_i)\right\|_2 \geq \lambda\epsilon$. Thus Claim I is shown.
Now we show Claim II. By triangle inequality,
\begin{align*}
&\left\|\bW\transpose\frac{\partial\widetilde M_{in}}{\partial\bx_i}(\bW\transpose\bx_i) -\frac{\partial M_{in}}{\partial\bx_i}(\bx_i)\right\|_2\\
&\quad \leq\left\|\frac{1}{n}\sum_{j=1}^n\left(A_{ij}-\rho_n\bx_{0i}\transpose\bx_{0j}\right)\left(\frac{1}{\rho_n\bx_{0i}\transpose\bx_{0j}} + \frac{1}{1-\rho_n\halfpower\bx_i\transpose\bx_{0j}}\right)\rho_n\halfpower\bx_{0j}\right\|_2 \\
&\quad\quad+ \left\|\frac{1}{n}\sum_{j=1}^n\left(\bx_i\transpose\bW\transpose\widetilde\bx_j-\rho_n\halfpower\bx_i\transpose\bx_{0j}\right)\left(\frac{1}{\rho_n\bx_{0i}\transpose\bx_{0j}} + \frac{1}{1-\rho_n\halfpower\bx_i\transpose\bx_{0j}}\right)\rho_n\halfpower\bx_{0j}\right\|_2\\
&\quad\quad+ \left\|\frac{1}{n}\sum_{j=1}^n\left(A_{ij}-\bx_i\transpose\bW\transpose\widetilde\bx_j\right)\left\{\left(\frac{1}{\widetilde p_{ij}}+\frac{1}{1-\bx_i\transpose\bW\transpose\widetilde\bx_j}\right) - \left(\frac{1}{\rho_n\bx_{0i}\transpose\bx_{0j}}+\frac{1}{1-\rho_n^{\frac{1}{2}}\bx_i\transpose\bx_{0j}}\right)\right\}\widetilde\bx_j\right\|_2\\
&\quad\quad+ \left\|\frac{1}{n}\sum_{j=1}^n\left(A_{ij}-\bx_i\transpose\bW\transpose\widetilde\bx_j\right)\left(\frac{1}{\rho_n\bx_{0i}\transpose\bx_{0j}}+\frac{1}{1-\rho_n^{\frac{1}{2}}\bx_i\transpose\bx_{0j}}\right)\left(\bW\transpose\widetilde\bx_j-\rho_n^{\frac{1}{2}}\bx_{0j}\right)\right\|_2.
\end{align*}
For the second term,
\begin{align*}
&\max_{i\in[n]}\sup_{\|\bx_i\|_2\leq1}\left\|\frac{1}{n}\sum_{j=1}^n\left(\bx_i\transpose\bW\transpose\widetilde\bx_j-\rho_n\halfpower\bx_i\transpose\bx_{0j}\right)\left(\frac{1}{\rho_n\bx_{0i}\transpose\bx_{0j}} + \frac{1}{1-\rho_n\halfpower\bx_i\transpose\bx_{0j}}\right)\rho_n\halfpower\bx_{0j}\right\|_2\\
&\quad\leq\max_{i\in[n]}\sup_{\|\bx_i\|_2\leq1}\frac{1}{n}\sum_{j=1}^n\left\|\bW\transpose\widetilde\bx_j-\rho_n^{\frac{1}{2}}\bx_{0j}\right\|_2\|\bx_i\|_2\left(\left|\frac{1}{\rho_n\bx_{0i}\transpose\bx_{0j}}\right| + \left|\frac{1}{1-\rho_n\halfpower\bx_i\transpose\bx_{0j}}\right|\right)\rho_n\halfpower\|\bx_{0j}\|_2\\
&\quad\lesssim_\delta \left\|\widetilde\bX\bW-\rho_n\halfpower\bX_0\right\|_{2\to\infty}\rho_n\invhalfpower\\
&\quad\lesssim_{c,\delta,\lambda}\sqrt{\frac{\log n}{n\rho_n}}
\end{align*}
with probability at least $1-n^{-c}$. For the third term,
\begin{align*}
&\max_{i\in[n]}\sup_{\|\bx_i\|_2\leq1}\Bigg\|\frac{1}{n}\sum_{j=1}^n\left(A_{ij}-\bx_i\transpose\bW\transpose\widetilde\bx_j\right) \\
&\qquad\qquad\qquad \times \left\{\left(\frac{1}{\widetilde p_{ij}}+\frac{1}{1-\bx_i\transpose\bW\transpose\widetilde\bx_j}\right) - \left(\frac{1}{\rho_n\bx_{0i}\transpose\bx_{0j}}+\frac{1}{1-\rho_n^{\frac{1}{2}}\bx_i\transpose\bx_{0j}}\right)\right\}\widetilde\bx_j\Bigg\|_2\\
&\quad\leq\max_{i\in[n]}\sup_{\|\bx_i\|_2\leq1}\frac{1}{n}\sum_{j=1}^n\left(A_{ij}+\|\bx_i\|_2\|\widetilde\bx_j\|_2\right) \\
&\qquad\qquad\qquad \times \left(\frac{|\widetilde p_{ij}-\rho_n\bx_{0i}\transpose\bx_{0j}|}{|\widetilde p_{ij}\rho_n\bx_{0i}\transpose\bx_{0j}|} + \frac{\left\|\bW\transpose\widetilde\bx_j-\rho_n^{\frac{1}{2}}\bx_{0j}\right\|_2\|\bx_i\|_2}{(1-\rho_n^{\frac{1}{2}}\bx_i\transpose\bx_{0j})(1-\bx_i\transpose\bW\transpose\widetilde\bx_j)}\right)\|\widetilde\bx_j\|_2\\
&\quad\lesssim_{c,\delta,\lambda}\max_{i\in[n]}\frac{1}{n}\sum_{j=1}^n(A_{ij}+\rho_n\halfpower)\left(\rho_n^{-3/2}\sqrt{\frac{\log n}{n}} + \sqrt{\frac{\log n}{n}}\right)\rho_n\halfpower \\
&\quad\lesssim_{c,\delta,\lambda}\left(\frac{1}{n}\|\bA\|_\infty+\rho_n\halfpower\right)\rho_n\inverse\sqrt{\frac{\log n}{n}} \\
&\quad\lesssim_{c,\delta,\lambda}\sqrt{\frac{\log n}{n\rho_n}}
\end{align*}
with probability at least $1-n^{-c}$, where the second inequality follows from Lemma \ref{lemma:some_frequently_used_results}, and the last one from $\|\bA\|_\infty\lesssim_cn\rho_n$ with probability at least $1-n^{-c}$, which follows from Bernstein's inequality and triangle inequality.

\noindent For the fourth term,
\begin{align*}
&\max_{i\in[n]}\sup_{\|\bx_i\|_2\leq1}\left\|\frac{1}{n}\sum_{j=1}^n\left(A_{ij}-\bx_i\transpose\bW\transpose\widetilde\bx_j\right)\left(\frac{1}{\rho_n\bx_{0i}\transpose\bx_{0j}}+\frac{1}{1-\rho_n^{\frac{1}{2}}\bx_i\transpose\bx_{0j}}\right)\left(\bW\transpose\widetilde\bx_j-\rho_n^{\frac{1}{2}}\bx_{0j}\right)\right\|_2\\
&\quad\leq\max_{i\in[n]}\sup_{\|\bx_i\|_2\leq1}\frac{1}{n}\sum_{j=1}^n (A_{ij}+\|\bx_i\|_2\|\widetilde\bx_j\|_2) \left(\frac{1}{\rho_n\bx_{0i}\transpose\bx_{0j}}+\frac{1}{1-\rho_n^{\frac{1}{2}}\bx_i\transpose\bx_{0j}}\right) \left\|\bW\transpose\widetilde\bx_j-\rho_n^{\frac{1}{2}}\bx_{0j}\right\|_2\\
&\quad\lesssim_{c,\delta,\lambda}\max_{i\in[n]} \frac{1}{n}\sum_{j=1}^n (A_{ij}+\rho_n\halfpower)\rho_n\inverse\left\|\widetilde\bX\bW-\rho_n\halfpower\bX_0\right\|_{2\to\infty}\\
&\quad\lesssim_{c,\delta,\lambda}\left(\frac{1}{n}\|\bA\|_\infty+\rho_n\halfpower\right)\rho_n\inverse\sqrt{\frac{\log n}{n}} 
\lesssim_{c,\delta,\lambda}\sqrt{\frac{\log n}{n\rho_n}}
\end{align*}
with probability at least $1-n^{-c}$.

\noindent In order to bound the first term, a maximal inequality is required. We use the results in Chapter 8 of \cite{kosorok-2008-book}. Define a stochastic process on $\{\by\in\mathbb{R}^d:\|\by\|_2\leq1\}$ for each $k\in[d]$,
\[
J_{ink}(\by) = \frac{1}{n}\sum_{j=1}^n\left(A_{ij}-\rho_n\bx_{0i}\transpose\bx_{0j}\right)\left(\frac{1}{\rho_n\bx_{0i}\transpose\bx_{0j}}+\frac{1}{1-\rho_n\halfpower\bx_i\transpose\bx_{0j}}\right)\rho_n\halfpower x_{0jk}.
\]
Then for any $\by, \by'$ with $\|\by\|_2\leq 1,\|\by'\|_2\leq 1$,
\[
\left|J_{ink}(\by)-J_{ink}(\by')\right| = \left|\frac{1}{n}\sum_{j=1}^n\left(A_{ij}-\rho_n\bx_{0i}\transpose\bx_{0j}\right)\frac{\rho_n\halfpower\bx_{0j}\transpose(\by-\by')}{(1-\rho_n\halfpower\overline{\by}\transpose\bx_{0j})^2}\rho_n\halfpower x_{0jk}\right|,
\]
where $\overline\by=\theta\by+(1-\theta)\by'$ for some $\theta\in[0,1]$.
By Hoeffding's inequality,
\begin{align*}
\prob\{\left|J_{ink}(\by)-J_{ink}(\by')\right|\geq t\} &\leq 2\exp\left\{-\frac{2n^2t^2}{\sum_{j=1}^n(\rho_n\halfpower\bx_{0j}\transpose(\by-\by'))^2\rho_nx_{0jk}^2/(1-\rho_n\halfpower\overline{\by}\transpose\bx_{0j})^4}\right\}\\
&\leq 2\exp\left\{-\frac{nt^2}{C_\delta}\rho_n^2\|\by-\by'\|_2^2\right\},
\end{align*}
where $C_\delta > 0$ is a constant depending on $\delta$, which indicates that $J_{ink}(\by)$ is a sub-Gaussian process on $\{\by\in\mathbb{R}^d:\|\by\|_2\leq1\}$ with respect to the metric $d_n(\by,\by')=\|\by-\by'\|_2\sqrt C_\delta\rho_n^2/n$. The metric entropy of the metric space $(\{\by\in\mathbb{R}^d:\|\by\|_2\leq1\}, d_n)$ can be bounded by
\[
\log D(\epsilon, \{\by\in\mathbb{R}^d:\|\by\|_2\leq1\}, d_n) \leq d\log\left(\frac{K_\delta}{\epsilon}\sqrt{\frac{\rho_n^2}{n}}\right),
\]
where $K_\delta$ is a constant depending on $\delta$. Recall that the $\psi_2$-Orlicz norm (sub-Gaussian norm) of a random variable $X$ is defined as
\[
\left\|X\right\|_{\psi_2} = \inf\left\{c>0:\expect\psi_2\left(\frac{X}{c}\right)\leq1\right\},
\]
where $\psi_2(x)=e^{x^2}-1$ (see Chapter 8 of \citealp{kosorok-2008-book}).

\noindent By Theorem 8.4 in \cite{kosorok-2008-book},
\begin{align*}
\left\|\sup_{\|\by\|_2\leq1}J_{ink}(\by)\right\|_{\psi_2} &\lesssim \int_0^{2\sqrt{\frac{4\rho_n^2}{n\delta^4}}}\sqrt{\log D(\epsilon, \{\by\in\mathbb{R}^d:\|\by\|_2\leq1\}, d_n)}\mathrm{d}\epsilon\\
&\leq \int_0^{2\sqrt{\frac{4\rho_n^2}{n\delta^4}}}\sqrt{d\log\left(\frac{K_\delta}{\epsilon}\sqrt{\frac{\rho_n^2}{n}}\right)}\mathrm{d}\epsilon
\\&
= \int_{K_\delta}^\infty K_\delta\sqrt{d}\sqrt{\frac{\rho_n^2}{n}}\sqrt{u}e^{-u}\mathrm{d}u
\lesssim_{\delta,\lambda} \sqrt{\frac{\rho_n^2}{n}},
\end{align*}
where we note that $d$ depends on $\lambda$ implicitly. Then by Lemma 8.1 in \cite{kosorok-2008-book} and a union bound over $i\in[n]$, $\max_{i\in[n]}\sup_{\|\bx_i\|_2\leq1}|J_{ink}(\bx_i)|\lesssim_{c,\delta,\lambda}\sqrt{(\rho_n^2\log n)/n}$ with probability at least $1-n^{-c}$. So
\begin{align*}
&\max_{i\in[n]}\sup_{\|\bx_i\|_2\leq1}\left\|\frac{1}{n}\sum_{j=1}^n\left(A_{ij}-\rho_n\bx_{0i}\transpose\bx_{0j}\right)\left(\frac{1}{\rho_n\bx_{0i}\transpose\bx_{0j}}+\frac{1}{1-\rho_n\halfpower\bx_i\transpose\bx_{0j}}\right)\rho_n\halfpower\bx_{0j}\right\|_2\\
&\quad\leq \max_{i\in[n]}\sup_{\|\bx_i\|_2\leq1}\sum_{k=1}^d\left|J_{ink}(\bx_i)\right|
\lesssim_{c,\delta,\lambda} \rho_n\sqrt{\frac{\log n}{n}}
\end{align*}
with probability at least $1-n^{-c}$. Thus Claim II is shown.

By Theorem \ref{theorem:existence_and_uniqueness}, $\widehat\bx_i$ is the unique zero of $\|\partial\widetilde M_{in}/\partial\bx_i(\bx_i)\|_2$ with probability at least $1-n^{-c}$. Now
\begin{align*}
&\max_{i\in[n]}\left(\left\|\frac{\partial M_{in}}{\partial \bx_i}(\bW\transpose\widehat\bx_i)\right\|_2 - \left\|\frac{\partial M_{in}}{\partial \bx_i}(\rho_n\halfpower\bx_{0i})\right\|_2\right) \\
&\quad\leq \max_{i\in[n]}\left(\left\|\frac{\partial M_{in}}{\partial \bx_i}(\bW\transpose\widehat\bx_i)\right\|_2 - \left\|\frac{\partial\widetilde M_{in}}{\partial \bx_i}(\widehat\bx_i)\right\|_2\right)\\
&\quad\quad + \max_{i\in[n]}\left(\left\|\frac{\partial\widetilde M_{in}}{\partial \bx_i}(\bW\rho_n\halfpower\bx_{0i})\right\|_2 - \left\|\frac{\partial M_{in}}{\partial \bx_i}(\rho_n\halfpower\bx_{0i})\right\|_2\right) \\
&\quad\leq 2\max_{i\in[n]}\sup_{\|\bx_i\|_2\leq1}\left\|\bW\transpose\frac{\partial\widetilde M_{in}}{\partial\bx_i}(\bW\bx_i) -\frac{\partial M_{in}}{\partial\bx_i}(\bx_i)\right\|_2 
\lesssim_{c,\delta,\lambda}\sqrt{\frac{\log n}{n\rho_n}},
\end{align*}
where the first inequality follows from $\widehat\bx_i$ being the unique zero of $\|\partial\widetilde M_{in}/\partial\bx_i(\bx_i)\|_2$ with probability at least $1-n^{-c}$, the second inequality from triangle inequality, and the third inequality from Claim II.

\noindent By Claim I, take $\epsilon=K_{c,\delta,\lambda}\sqrt{(\log n)/(n\rho_n)}$, we have
\[
\max_{i\in[n]}\left\|\bW\transpose\widehat\bx_i-\rho_n\halfpower\bx_{0i}\right\|_2 \lesssim_{c,\delta,\lambda} \sqrt{\frac{\log n}{n\rho_n}}
\]
with probability at least $1-n^{-c}$.

\vspace*{2ex}\noindent
$\blacksquare$ We next establish the asymptotic normality. 
We utilize the asymptotic normality of the one-step estimator $\widehat\bx_i^{\mathrm{(OS)}}$ (Theorem \ref{theorem:asymptotic_normality_OSE}) to establish the asymptotic normality of the maximum surrogate likelihood estimator $\widehat\bx_i$. By the previous part of the theorem, we know that with probability at least $1-n^{-c}$, $\widehat\bx_i$ is in the interior of the closed unit ball $B(\mathbf{0}_d,1)=\{\bx\in\mathbb{R}^d:\|\bx\|_2\leq1\}$. For each $k\in[d]$, we apply Taylor's theorem to $(\partial\widetilde M_{in})/(\partial x_{ik})(\widehat\bx_i)=0$ at $\bx_i=\widetilde\bx_i$ to obtain
\begin{align*}
0 &= \frac{\partial\widetilde M_{in}}{\partial x_{ik}}(\widehat\bx_i) = \frac{\partial\widetilde M_{in}}{\partial x_{ik}}(\widetilde\bx_i) + \frac{\partial}{\partial\widetilde\bx_i\transpose}\frac{\partial\widetilde M_{in}}{\partial x_{ik}}(\widetilde\bx_i)(\widehat\bx_i-\widetilde\bx_i)\\
&\quad + \frac{1}{2}(\widehat\bx_i-\widetilde\bx_i)\transpose\frac{\partial^2}{\partial\bx_i\partial\bx_i\transpose}\frac{\partial\widetilde M_{in}}{\partial x_{ik}}(\bar\bx_i)(\widehat\bx_i-\widetilde\bx_i),
\end{align*}
where $\bar\bx_i=\theta\widehat\bx_i+(1-\theta)\widetilde\bx_i$ for some $\theta\in[0,1]$. It is easy to compute
\[
\frac{\partial^2}{\partial\bx_i\partial\bx_i\transpose}\frac{\partial\widetilde M_{in}}{\partial x_{ik}}(\bx_i) = -\frac{2}{n}\sum_{j=1}^n\frac{(1-A_{ij})\widetilde x_{jk}}{(1-\bx_i\transpose\widetilde\bx_j)^3}\widetilde\bx_j\widetilde\bx_j\transpose,
\]
then
\begin{align*}
\sup_{\|\bx_i\|_2\leq1}\left\|\frac{\partial^2}{\partial\bx_i\partial\bx_i\transpose}\frac{\partial\widetilde M_{in}}{\partial x_{ik}}(\bx_i)\right\|_2 &= \sup_{\|\bx_i\|_2\leq1}\left\|\frac{2}{n}\widetilde\bX\transpose\mathrm{diag}\left\{\frac{1-A_{i1}}{(1-\bx_i\transpose\widetilde\bx_1)^3},\ldots,\frac{1-A_{in}}{(1-\bx_i\transpose\widetilde\bx_n)^3}\right\}\widetilde\bX\right\|_2\\
&\lesssim_\delta\|\frac{1}{n}\widetilde\bX\transpose\widetilde\bX\|_2 = \|\frac{1}{n}\bA\|_2 \leq \frac{1}{n}\left(\|\bA-\bP\|_2 + \|\bP\|_2\right) \lesssim_c \rho_n,
\end{align*}
where in the last inequality we applied the fact that $\|\bA-\bP\|_2\lesssim_c\sqrt{n\rho_n}$ with probability at least $1-n^{-c}$ (Theorem 5.2 in \citealp{lei-rinaldo-2015}). By Lemma \ref{lemma:two_to_infinity_norm_ASE} and the previous part of the theorem, with probability at least $1-n^{-c}$,
\[
\|\widehat\bx_i-\widetilde\bx_i\|_2 \leq \|\bW\transpose\widehat\bx_i-\rho_n\halfpower\bx_{0i}\|_2 + \|\bW\transpose\widetilde\bx_i-\rho_n\halfpower\bx_{0i}\|_2 \lesssim_{c,\delta,\lambda} \sqrt{\frac{\log n}{n\rho_n}}.
\]
So the Taylor expansion of $(\partial\widetilde M_{in})/(\partial \bx_i)$ mentioned above can be written as
\[
-\left(\frac{\partial^2\widetilde M_{in}}{\partial\bx_i\partial\bx_i\transpose}(\widetilde\bx_i)+\bR_{in1}\right)(\widehat\bx_i-\widetilde\bx_i) = \frac{\partial\widetilde M_{in}}{\partial\bx_i}(\widetilde\bx_i),
\]
where $\bR_{in1}\in\mathbb{R}^{d\times d}$ is a random matrix with $\|\bR_{in1}\|_2\lesssim_{c,\delta,\lambda}\rho_n\halfpower\sqrt{(\log n)/n}$ with probability at least $1-n^{-c}$. By definition of $\widetilde M_{in}(\bx_i)$ and Lemma \ref{lemma:concentration_of_hessian_matrices},
\[
\left(\frac{1}{n}\sum_{j=1}^n\frac{1}{\widetilde p_{ij}(1-\widetilde p_{ij})}\widetilde\bx_j\widetilde\bx_j\transpose + \bR_{in2}\right)(\widehat\bx_i-\widetilde\bx_i) = \frac{1}{n}\sum_{j=1}^n\frac{A_{ij}-\widetilde p_{ij}}{\widetilde p_{ij}(1-\widetilde p_{ij})}\widetilde\bx_j,
\]
where $\bR_{in2}\in\mathbb{R}^{d\times d}$ is a random matrix with $\|\bR_{in2}\|_2\lesssim_{c,\delta,\lambda}\rho_n\halfpower\sqrt{(\log n)/n}$ with probability at least $1-n^{-c}$ and $\widetilde{p}_{ij} = \widetilde{\bx}_i\transpose\widetilde{\bx}_j$, $i, j\in [n]$.

\noindent Denote $\widetilde\bG_{in}=\frac{1}{n}\sum_{j=1}^n\frac{\widetilde\bx_j\widetilde\bx_j\transpose}{\widetilde p_{ij}(1-\widetilde p_{ij})}$. Similarly as in the proof of Theorem \ref{theorem:existence_and_uniqueness},
\begin{align*}
\frac{\lambda}{2} &\leq \frac{1}{n\rho_n}\lambda_d(\bA) = \lambda_d\left(\frac{1}{n\rho_n}\widetilde\bX\transpose\widetilde\bX\right) \leq \lambda_d(\widetilde\bG_{in}) \leq \lambda_1(\widetilde\bG_{in}) \\
&\lesssim_\delta \lambda_1\left(\frac{1}{n\rho_n}\widetilde\bX\transpose\widetilde\bX\right) = \frac{1}{n\rho_n}\lambda_1(\bA) \lesssim_c 1,
\end{align*}
i.e., $\widetilde\bG_{in}$ is finite and positive definite with probability at least $1-n^{-c}$.

\noindent Now write
\begin{align*}
\widehat\bx_i-\widetilde\bx_i &= \left(\widetilde\bG_{in}+\bR_{in2}\right)\inverse \frac{1}{n}\sum_{j=1}^n\frac{A_{ij}-\widetilde p_{ij}}{\widetilde p_{ij}(1-\widetilde p_{ij})}\widetilde\bx_j\\
&=\left(\bI_d+\widetilde\bG_{in}\inverse\bR_{in2}\right)\inverse\widetilde\bG_{in}\inverse\frac{1}{n}\sum_{j=1}^n\frac{A_{ij}-\widetilde p_{ij}}{\widetilde p_{ij}(1-\widetilde p_{ij})}\widetilde\bx_j\\
&=\sum_{m=0}^\infty(-\widetilde\bG_{in}\inverse\bR_{in2})^m(\widehat\bx_i^{\mathrm{(OS)}}-\widetilde\bx_i)\\
&=(\widehat\bx_i^{\mathrm{(OS)}}-\widetilde\bx_i) + \sum_{m=1}^\infty(-\widetilde\bG_{in}\inverse\bR_{in2})^m(\widehat\bx_i^{\mathrm{(OS)}}-\widetilde\bx_i),
\end{align*}
then
\begin{align*}
\|\widehat\bx_i-\widehat\bx_i^{\mathrm{(OS)}}\|_2 &\leq \sum_{m=1}^\infty\|\widetilde\bG_{in}\inverse\|_2^m\|\bR_{in2}\|_2^m\|\widehat\bx_i^{\mathrm{(OS)}}-\widetilde\bx_i\|_2\\
&= \frac{\|\widetilde\bG_{in}\inverse\|_2\|\bR_{in2}\|_2}{1-\|\widetilde\bG_{in}\inverse\|_2\|\bR_{in2}\|_2}\|\widehat\bx_i^{\mathrm{(OS)}}-\widetilde\bx_i\|_2 \\
&\lesssim_{c,\delta\lambda}\rho_n\halfpower\sqrt{\frac{\log n}{n}}\left(\|\bW\transpose\widehat\bx_i^{\mathrm{(OS)}}-\rho_n\halfpower\bx_{0i}\|_2 + \|\widetilde\bX\bW-\rho_n\halfpower\bX_0\|_{2\to\infty}\right).
\end{align*}
Note that
\begin{equation}\label{eqn:finite_pd_fisher_info}
\lambda \leq \lambda_d\left(\frac{1}{n}\bX_0\transpose\bX_0\right) \leq \lambda_d(\bG_{0in}) \leq \lambda_1(\bG_{0in}) \leq \frac{1}{\delta^2} \lambda_1\left(\frac{1}{n}\bX_0\transpose\bX_0\right) \leq \frac{1}{\delta^2},
\end{equation}
i.e., $\bG_{0in}$ is positive definite with eigenvalues bounded away from $0$ and $\infty$. By Theorem \ref{theorem:asymptotic_normality_OSE} and Bernstein's inequality, $\|\bW\transpose\widehat\bx_i^{\mathrm{(OS)}}-\rho_n\halfpower\bx_{0i}\|_2\lesssim_{c,\delta,\lambda}\sqrt{\frac{\log n}{n}}$ with probability at least $1-(n\rho_n)^{-c}$. By Lemma \ref{lemma:two_to_infinity_norm_ASE}, $\|\widetilde\bX\bW-\rho_n\halfpower\bX_0\|_{2\to\infty}\lesssim_{c,\delta,\lambda}\sqrt{\frac{\log n}{n}}$ with probability at least $1-n^{-c}$. So $\|\widehat\bx_i-\widehat\bx_i^{\mathrm{(OS)}}\|_2\lesssim_{c,\delta,\lambda}\rho_n\halfpower\frac{\log n}{n}$ with probability at least $1-(n\rho_n)^{-c}$. By Theorem \ref{theorem:asymptotic_normality_OSE} and Slutsky's theorem, we have
\[
\sqrt{n}\bG_{0in}\halfpower\left(\bW\transpose\widehat\bx_i-\rho_n\halfpower\bx_{0i}\right)\overset{\calL}{\to}N(\mathbf{0}_d,\bI_d),
\]
and
\[
\bG_{0in}\halfpower(\bW\transpose\hat\bx_i-\rho_n\halfpower\bx_{0i}) = \frac{1}{n\rho_n\halfpower}\sum_{j=1}^n\frac{(A_{ij}-\rho_n\bx_{0i}\transpose\bx_{0j})\bG_{0in}\invhalfpower\bx_{0j}}{\bx_{0i}\transpose\bx_{0j}(1-\rho_n\bx_{0i}\transpose\bx_{0j})} + \br_{in},
\]
where
\[
\|\br_{in}\|_2\lesssim_{c,\delta,\lambda}\rho_n\halfpower\frac{\log n}{n}+\frac{1}{\sqrt{n}}\sqrt{\frac{(\log(n\rho_n))^4}{n\rho_n}}
\]
with probability at least $1-(n\rho_n)^{-c}$.

\vspace*{2ex}\noindent
$\blacksquare$ We finally show the convergence of the sum of squares errors, that is
\[
\|\widehat{\bX}\bW - \rho_n^{1/2}\bX_0\|_{\mathrm{F}}^2 - \frac{1}{n}\sum_{i = 1}^n\mathrm{tr}(\bG_{0in}^{-1})\overset{\prob_o}{\to} 0.
\]
By the previous result, we have
\begin{align*}
\|\widehat{\bX}\bW - \rho_n^{1/2}\bX_{0}\|_{\mathrm{F}}^2
& = \sum_{i = 1}^n\left\|
\frac{1}{n\rho_n^{1/2}}\sum_{j = 1}^n\frac{(A_{ij} - \rho_n\bx_{0i}\transpose\bx_{0j})\bG_{0in}^{-1}\bx_{0j}}{\bx_{0i}\transpose\bx_{0j}(1 - \rho_n\bx_{0i}\transpose\bx_{0j})}
 + \bG_{0in}\invhalfpower\br_{in}\right\|_{2}^2\\
& =
\sum_{i = 1}^n\left\|
\frac{1}{n\rho_n^{1/2}}\sum_{j = 1}^n\frac{(A_{ij} - \rho_n\bx_{0i}\transpose\bx_{0j})\bG_{0in}^{-1}\bx_{0j}}{\bx_{0i}\transpose\bx_{0j}(1 - \rho_n\bx_{0i}\transpose\bx_{0j})}
\right\|_{2}^2 + 
\sum_{i = 1}^n\|\bG_{0in}^{-1/2}\br_{in}\|_2^2\\
&\quad
+ 2\sum_{i = 1}^n\left\langle
\frac{1}{n\rho_n^{1/2}}\sum_{j = 1}^n\frac{(A_{ij} - \rho_n\bx_{0i}\transpose\bx_{0j})\bG_{0in}^{-1}\bx_{0j}}{\bx_{0i}\transpose\bx_{0j}(1 - \rho_n\bx_{0i}\transpose\bx_{0j})}
,
\bG_{0in}\invhalfpower\br_{in}
\right\rangle.
\end{align*}
By Lemma \ref{lemma:Frobenius_norm_concentration}, the first term equals
\begin{align*}
&\sum_{i = 1}^n\expect_0\left\|
\frac{1}{n\rho_n^{1/2}}\sum_{j = 1}^n\frac{(A_{ij} - \rho_n\bx_{0i}\transpose\bx_{0j})\bG_{0in}^{-1}\bx_{0j}}{\bx_{0i}\transpose\bx_{0j}(1 - \rho_n\bx_{0i}\transpose\bx_{0j})}
\right\|_{2}^2 + o_{\prob_0}(1)\\
&\quad = \frac{1}{n^2\rho_n}\sum_{i = 1}^n\sum_{a = 1}^n\sum_{b = 1}^n\frac{\expect_0\{(A_{ia} - \rho_n\bx_{0i}\transpose\bx_{0a})(A_{ib} - \rho_n\bx_{0i}\transpose\bx_{0b})\}}{\bx_{0i}\transpose\bx_{0a}(1 - \rho_n\bx_{0i}\transpose\bx_{0a})\bx_{0i}\transpose\bx_{0b}(1 - \rho_n\bx_{0i}\transpose\bx_{0b})}\bx_{0a}\transpose\bG_{0in}^{-2}\bx_{0b}+ o_{\prob_0}(1)\\
&\quad = \frac{1}{n^2}\sum_{i = 1}^n\sum_{j = 1}^n\frac{\bx_{0j}\transpose\bG_{0in}^{-2}\bx_{0j}}{\bx_{0i}\transpose\bx_{0j}(1 - \rho_n\bx_{0i}\transpose\bx_{0j})}+ o_{\prob_0}(1)\\
&\quad = \frac{1}{n}\sum_{i = 1}^n\mathrm{tr}\left\{\frac{1}{n}\sum_{j = 1}^n\frac{\bx_{0j}\bx_{0j}\transpose}{\bx_{0i}\transpose\bx_{0j}(1 - \rho_n\bx_{0i}\transpose\bx_{0j})}\bG_{0in}^{-2}\right\}+ o_{\prob_0}(1)\\
&\quad = \frac{1}{n}\sum_{i = 1}^n\mathrm{tr}(\bG_{0in}^{-1})+ o_{\prob_0}(1).
\end{align*}
For the second term, by Theorem 4.7 in \cite{xie-2022-entrywise}, we have
\[
\sum_{i = 1}^n\|\br_{in}\|_2^2 \leq n\max_{i\in [n]}\|\br_{in}\|_2^2\lesssim_{c,\delta,\lambda}\frac{\rho_n(\log n)^2}{n} + \frac{(\log n)^4}{n\rho_n}
\]
with probability at least $1 - n^{-c}$ for all $n\geq N_{c,\delta,\lambda}$, so $\sum_{i = 1}^n\|\br_{in}\|_2^2 = o_{\prob_0}(1)$ by the condition that $(\log n)^4 = o(n\rho_n)$.
For the third term, by Cauchy--Schwarz inequality, we have
\begin{align*}
&\left|
\sum_{i = 1}^n\left\langle
\frac{1}{n\rho_n^{1/2}}\sum_{j = 1}^n\frac{(A_{ij} - \rho_n\bx_{0i}\transpose\bx_{0j})\bG_{0in}^{-1}\bx_{0j}}{\bx_{0i}\transpose\bx_{0j}(1 - \rho_n\bx_{0i}\transpose\bx_{0j})}
,
\bG_{0in}\invhalfpower\br_{in}
\right\rangle
\right|\\
&\quad\leq \sum_{i = 1}^n\left\|\frac{1}{n\rho_n^{1/2}}\sum_{j = 1}^n\frac{(A_{ij} - \rho_n\bx_{0i}\transpose\bx_{0j})\bG_{0in}^{-1}\bx_{0j}}{\bx_{0i}\transpose\bx_{0j}(1 - \rho_n\bx_{0i}\transpose\bx_{0j})}\right\|_2\|\bG_{0in}\invhalfpower\br_{in}\|_2\\
&\quad\leq\left\{\sum_{i = 1}^n\left\|\frac{1}{n\rho_n^{1/2}}\sum_{j = 1}^n\frac{(A_{ij} - \rho_n\bx_{0i}\transpose\bx_{0j})\bG_{0in}^{-1}\bx_{0j}}{\bx_{0i}\transpose\bx_{0j}(1 - \rho_n\bx_{0i}\transpose\bx_{0j})}\right\|_2^2\right\}^{1/2}\left\{\sum_{i = 1}^n\|\bG_{0in}\invhalfpower\br_{in}\|_2^2\right\}^{1/2}\\
&\quad = O_{\prob_0}(1)\times o_{\prob_0}(1) = o_{\prob_0}(1).
\end{align*}
Hence, we conlcude that
\[
\|\widehat{\bX}\bW - \rho_n^{1/2}\bX_{0}\|_{\mathrm{F}}^2
= \frac{1}{n}\sum_{i = 1}^n\mathrm{tr}(\bG_{0in}^{-1}) + o_{\prob_0}(1).
\]
\end{proof}
{
\subsection{Proof of Theorem \ref{thm:second_order_bias}}
\begin{proof}
For convenience, in this proof, we introduce the following $\Optilde(\cdot)$ notation. Given a sequence of random matrices $(\bX_n)_{n = 1}^\infty$ and a deterministic positive sequence $(\eps_n)_{n = 1}^\infty$, we write $\bX_n = \widetilde{O}_{\prob}(\eps_n)$, if for any $c > 0$, there exists constants $C_c, N_c > 0$, such that $\prob(\|\bX_n\|_2 > C_c\eps_n)\leq n^{-c}$ for any $n \geq N_c$. It is clear that $\Optilde(\cdot)$ is a stronger notion than $O_{\prob}(\cdot)$. 
Let $\psi(s, t):(0, 1)^2\to\mathbb{R}_+$ be a function such that $\psi(\bx_{0i}\transpose\bx_{0j}, \bx_{0i}\transpose\bx_{0j}) = 1/\mathrm{var}(A_{ij})$. Denote by $\bg_{ij}(\bx, \bu, \bv) = (A_{ij} - \bx\transpose\bv)\psi(\bx\transpose\bv, \bu\transpose\bv)$. Consider a generic estimating equation $(1/n)\sum_{j = 1}^n\bg_{ij}(\bx_i, \widetilde{\bx}_i, \widetilde{\bx}_j)$ (also known as the eigenvector-assisted estimating equation in \citealp{xie-wu-2022-ea-spn}). A simple algebra shows that the solution to this estimating equation corresponds to the one-step estimator and the maximum surrogate likelihood estimator when $\psi(s, t) = 1/t + 1/(1 - t)$ and $\psi(s, t) = 1/t + 1/(1 - s)$, respectively. Therefore, it is sufficient to work with the generic estimating equation. With a slight abuse of notation, we denote by $\widehat{\bx}_i$ the associated estimating equation estimator, and depending on the context, $\widehat{\bx}_i$ may represent the one-step estimator or the maximum surrogate likelihood estimator in this subsection. Following the proof of Theorem 1 in \cite{xie-wu-2022-ea-spn}, we have $\bW^*\widehat{\bx}_i - \bx_{0i} = \bgamma_i + \widehat{\br}_i$, where $\bgamma_i = \bG_{0in}^{-1}(1/n)\sum_{j = 1}^n{(A_{ij} - \bx_{0i}\transpose\bx_{0j})\bx_{0j}}/\{\bx_{0i}\transpose\bx_{0j}(1 - \bx_{0i}\transpose\bx_{0j})\}$, $\widehat{\br}_i = \widetilde{O}_{\prob}\{(\log n)^{2\xi}/n\}$ for any $\xi > 1$, and $\bW^*$ is a diagonal matrix whose $k$th diagonal entry is the sign of $\bu_k\transpose\widehat{\bu}_k$, where $\widehat{\bu}_k$ is the eigenvector of $\bA$ corresponding to its $k$th largest eigenvalue $\widehat{\lambda}_k$. For the ASE, we have a similar first-order stochastic expansion $\bW^*\widetilde{\bx}_i - \bx_{0i} = \be_i\transpose\bE\bX_0(\bX_0\transpose\bX_0)^{-1} + \widetilde{\br}_i$, where $\widetilde{\br}_i = \widetilde{O}_{\prob}\{(\log n)^{2\xi} / n\}$ for any $\xi > 1$. Furthermore, by Theorem 1 in \cite{xie2024higher}, the proof of Theorem 1 in \cite{athreya2022eigenvalues}, Lemma S2.3 of \cite{xie-2022-entrywise}, and the equation 
\begin{align*}
\widetilde{\bX}\bW^* - \bX_0
& = \bE\bX_0(\bX_0\transpose\bX_0)^{-1} + (\bU_\bA\bW^* - \bU_\bP - \bE\bU_\bP\bS_\bP^{-1})\bS_\bP^{1/2}\\
&\quad + \bU_\bP(\bS_\bA^{1/2} - \bS_\bP^{1/2}) + (\bU_\bA\bW^* - \bU_\bP)(\bS_\bA^{1/2} + \bS_\bP^{1/2}),
\end{align*}
we obtain the following second-order stochastic expansion of the ASE:
\begin{align*}
\bW^*\widetilde{\bx}_i - \bx_{0i} & = \widetilde{\balpha}_i + \widetilde{\bbeta}_i + \widetilde{O}_{\prob}\bigg\{\frac{(\log n)^{3\xi}}{n^{3/2}}\bigg\}
\end{align*}
for any $\xi > 1$, where $\widetilde{\balpha}_i = [\widetilde{\alpha}_{i1},\ldots,\widetilde{\alpha}_{id}]\transpose$, and $\widetilde{\bbeta}_i = [\widetilde{\beta}_{i1},\ldots,\widetilde{\beta}_{id}]\transpose$ are given by
\begin{align*}
\widetilde{\alpha}_{ik} & = \be_i\transpose\bigg(\bI_n - \frac{1}{2}\bu_k\bu_k\transpose - \sum_{m\in[d]\backslash\{k\}}\frac{\lambda_m\bu_m\bu_m\transpose}{\lambda_k - \lambda_m}\bigg)\frac{\bE\bu_k}{\lambda_k^{1/2}},\\
\widetilde{\beta}_{ik} & = \be_i\transpose\bigg(\bI_n - \bu_k\bu_k\transpose - \sum_{m\in[d]\backslash\{k\}}\frac{\lambda_m\bu_m\bu_m\transpose}{\lambda_k - \lambda_m}\bigg)\frac{\bE^2\bu_k}{\lambda_k^{3/2}}.
\end{align*}
Here, $\bS_\bA = \mathrm{diag}(\widehat{\lambda}_1,\ldots,\widehat{\lambda}_d)$, $\bS_\bP = \mathrm{diag}(\lambda_1,\ldots,\lambda_d)$, $\bE = [E_{ij}]_{n\times n} = \bA - \bX_0\bX_0\transpose$, $\bU_\bA = [\widehat{\bu}_1,\ldots,\widehat{\bu}_d]$, and $\bU_\bP = [\bu_1,\ldots,\bu_d]$. For any $a, b\in \{0, 1, 2\}$, denote by $D^{(a, b)}\psi(s, t) = \partial^{a + b}\psi(s, t)/(\partial^as\partial^bt)$ and write $D^{(a, b)}\psi_{ij} = D^{(a, b)}\psi(\bx_{0i}\transpose\bx_{0j}, \bx_{0i}\transpose\bx_{0j})$. In particular, when $a = b = 0$, we write $D^{(0, 0)}\psi_{ij} = \psi_{ij} = \psi(\bx_{0i}\transpose\bx_{0j}, \bx_{0i}\transpose\bx_{0j})$. Since $\bg_{ij}(\cdot, \cdot, \cdot)$ is continuously three-times differentiable and $\|\widetilde{\bX}\bW^* - \bX_0\|_{2\to\infty} + \|\widehat{\bX}\bW^* - \bX_0\|_{2\to\infty} = \widetilde{O}_{\prob}\{(\log n)^{3\xi}/\sqrt{n}\}$ for any $\xi > 1$, it follows Taylor's expansion that 
\begin{align}
&\bG_{0in}(\bW^*\widehat{\bx}_i - \bx_{0i})\nonumber\\
&\quad = \frac{1}{n}\sum_{j = 1}^n\bg_{ij}(\bx_{0i}, \bx_{0i}, \bx_{0j})
\\
\label{eqn:gx}
&\quad\quad
 + \bigg\{\frac{1}{n}\sum_{j = 1}^n\frac{\partial\bg_{ij}}{\partial\bx\transpose}(\bx_{0i}, \bx_{0i}, \bx_{0j}) + \bG_{0in}\bigg\}(\bW^*\widehat{\bx}_i - \bx_{0i})\\
\label{eqn:gu}
&\quad\quad + \frac{1}{n}\sum_{j = 1}^n\frac{\partial\bg_{ij}}{\partial\bu\transpose}(\bx_{0i}, \bx_{0i}, \bx_{0j})(\bW^*\widetilde{\bx}_i - \bx_{0i})\\
\label{eqn:gv}
&\quad\quad + \frac{1}{n}\sum_{j = 1}^n\frac{\partial\bg_{ij}}{\partial\bv\transpose}(\bx_{0i}, \bx_{0i}, \bx_{0j})(\bW^*\widetilde{\bx}_j - \bx_{0j})\\
\label{eqn:gxx}
&\quad\quad + \frac{1}{2n}\sum_{j = 1}^n\sum_{k,l = 1}^d\frac{\partial\bg_{ij}}{\partial x_k\partial x_l}(\bx_{0i}, \bx_{0i}, \bx_{0j})(w_k^*\widehat{x}_{ik} - x_{0ik})(w_l^*\widehat{x}_{il} - x_{0il})\\
\label{eqn:guu}
&\quad\quad + \frac{1}{2n}\sum_{j = 1}^n\sum_{k,l = 1}^d\frac{\partial\bg_{ij}}{\partial u_k\partial u_l}(\bx_{0i}, \bx_{0i}, \bx_{0j})(w_k^*\widetilde{x}_{ik} - x_{0ik})(w_l^*\widetilde{x}_{il} - x_{0il})\\
\label{eqn:gvv}
&\quad\quad + \frac{1}{2n}\sum_{j = 1}^n\sum_{k,l = 1}^d\frac{\partial\bg_{ij}}{\partial v_k\partial v_l}(\bx_{0i}, \bx_{0i}, \bx_{0j})(w_k^*\widetilde{x}_{jk} - x_{0jk})(w_l^*\widetilde{x}_{jl} - x_{0jl})\\
\label{eqn:gxu}
&\quad\quad + \frac{1}{n}\sum_{j = 1}^n\sum_{k,l = 1}^d\frac{\partial\bg_{ij}}{\partial x_k\partial u_l}(\bx_{0i}, \bx_{0i}, \bx_{0j})(w_k^*\widehat{x}_{ik} - x_{0ik})(w_l^*\widetilde{x}_{il} - x_{0il})\\
\label{eqn:gxv}
&\quad\quad + \frac{1}{n}\sum_{j = 1}^n\sum_{k,l = 1}^d\frac{\partial\bg_{ij}}{\partial x_k\partial v_l}(\bx_{0i}, \bx_{0i}, \bx_{0j})(w_k^*\widehat{x}_{ik} - x_{0ik})(w_l^*\widetilde{x}_{jl} - x_{0jl})\\
\label{eqn:guv}
&\quad\quad + \frac{1}{n}\sum_{j = 1}^n\sum_{k,l = 1}^d\frac{\partial\bg_{ij}}{\partial u_k\partial v_l}(\bx_{0i}, \bx_{0i}, \bx_{0j})(w_k^*\widetilde{x}_{ik} - x_{0ik})(w_l^*\widetilde{x}_{jl} - x_{0jl})\\
&\quad\quad + \widetilde{O}_{\prob}\bigg\{\frac{(\log n)^{3\xi}}{n^{3/2}}\bigg\}\nonumber,
\end{align}
where $\bW^* = \mathrm{diag}(w_1^*,\ldots, w_d^*)$, $\bu = [u_1,\ldots,u_d]\transpose$, $\bv = [v_1,\ldots,v_d]\transpose$, $\widehat{\bx}_i = [\widehat{x}_{i1},\ldots,\widehat{x}_{id}]\transpose$, and $\widetilde{\bx}_i = [\widetilde{x}_{i1},\ldots,\widetilde{x}_{id}]\transpose$. Now we analyze each term separately. For the terms on \eqref{eqn:gx} and \eqref{eqn:gu}, by Bernstein's inequality and the first-order expansions of $\widehat{\bx}_i$ and $\widetilde{\bx}_i$, we have
\begin{align*}
&\bigg\{\frac{1}{n}\sum_{j = 1}^n\frac{\partial\bg_{ij}}{\partial\bx\transpose}(\bx_{0i}, \bx_{0i}, \bx_{0j}) + \bG_{0in}\bigg\}(\bW^*\widehat{\bx}_i - \bx_{0i})\\
&\quad = \frac{1}{n}\sum_{j = 1}^nE_{ij}D^{(1, 0)}\psi_{ij}\bx_{0j}\bx_{0j}\transpose\bgamma_i + \widetilde{O}_{\prob}\bigg\{\frac{(\log n)^{3\xi}}{n^{3/2}}\bigg\},\\
&\frac{1}{n}\sum_{j = 1}^n\frac{\partial\bg_{ij}}{\partial\bu\transpose}(\bx_{0i}, \bx_{0i}, \bx_{0j})(\bW^*\widetilde{\bx}_i - \bx_{0i})\\
&\quad = \frac{1}{n}\sum_{j = 1}^n E_{ij}D^{(0, 1)}\psi_{ij}\bx_{0j}\bx_{0j}\transpose\be_i\transpose\bE\bX_0(\bX_0\transpose\bX_0)^{-1/2} + \widetilde{O}_{\prob}\bigg\{\frac{(\log n)^{3\xi}}{n^{3/2}}\bigg\}
\end{align*}
For the term on \eqref{eqn:gv}, by the second-order stochastic expansion for the ASE, Bernstein's inequality, Result S3 of \cite{xie-wu-2022-ea-spn}, Result B.1 of \cite{xie2024higher}, and a union bound over $j\in\{1,\ldots,n\}$, we have
\begin{align*}
&\frac{1}{n}\sum_{j = 1}^n\frac{\partial\bg_{ij}}{\partial\bv\transpose}(\bx_{0i}, \bx_{0i}, \bx_{0j})(\bW^*\widetilde{\bx}_j - \bx_{0j})\\
&\quad = \frac{1}{n}\sum_{j = 1}^n\sum_{k = 1}^d\frac{\partial\bg_{ij}}{\partial v_k}(\bx_{0i}, \bx_{0i}, \by_{0j})(w_k^*\widetilde{x}_{jk} - x_{0jk})\\
&\quad = \frac{1}{n}\sum_{j = 1}^n\sum_{k = 1}^dE_{ij}\{D^{(1, 0)}\psi_{ij}\bx_{0j}x_{0ik} + D^{(0, 1)}\psi_{ij}\bx_{0j}x_{0ik} + \psi_{ij}\be_k\}\widetilde{\alpha}_{jk}\\
&\quad\quad + \frac{1}{n}\sum_{j = 1}^n\sum_{k = 1}^dE_{ij}\{D^{(1, 0)}\psi_{ij}\bx_{0j}x_{0ik} + D^{(0, 1)}\psi_{ij}\bx_{0j}x_{0ik} + \psi_{ij}\be_k\}\\
&\quad\quad\quad\times\be_j\transpose\bigg(\bI_n - \bu_k\bu_k - \sum_{m\in[d]\backslash\{k\}}\frac{\lambda_m\bu_m\bu_m\transpose}{\lambda_k - \lambda_m}\bigg)\frac{\bE^2\bu_k}{\lambda_k^{3/2}}\\
&\quad\quad - \frac{1}{n}\sum_{j = 1}^n\sum_{k = 1}^d\psi_{ij}\bx_{0j}x_{0ik}(\widetilde{\alpha}_{jk} + \widetilde{\beta}_{jk}) + \widetilde{O}_{\prob}\bigg\{\frac{(\log n)^{3\xi}}{n^{3/2}}\bigg\}\\
&\quad = \frac{1}{n}\sum_{j = 1}^n\sum_{k = 1}^dE_{ij}\{D^{(1, 0)}\psi_{ij}\bx_{0j}x_{0ik} + D^{(0, 1)}\psi_{ij}\bx_{0j}x_{0ik} + \psi_{ij}\be_k\}\frac{\be_j\transpose\bE\bu_k}{\lambda_k^{1/2}}\\
&\quad\quad - \frac{1}{n}\sum_{j = 1}^n\sum_{k = 1}^d\psi_{ij}\bx_{0j}x_{0ik}(\widetilde{\alpha}_{jk} + \widetilde{\beta}_{jk}) + \widetilde{O}_{\prob}\bigg\{\frac{(\log n)^{3\xi}}{n^{3/2}}\bigg\}.
\end{align*}
Next, we work with the second-order derivative terms. For the terms on \eqref{eqn:gxx} and \eqref{eqn:guu}, by the first-order expansion of $\widehat{\bx}_i$ and the fact that $\|\widehat{\bX}\bW^* - \bX_0\|_{2\to\infty} = \widetilde{O}_{\prob}\{(\log n)/\sqrt{n}\}$, we have
\begin{align*}
&\frac{1}{n}\sum_{j = 1}^n\sum_{k = 1}^d\sum_{l = 1}^d
\frac{\partial^2 \bg_{ij}}{\partial x_k\partial x_l}(\bx_{0i}, \bx_{0i}, \bx_{0j})(w_l^*\widehat{x}_{il} - x_{0il})(w_k^*\widehat{x}_{ik} - x_{0ik})\\
&\quad = -\frac{2}{n}\sum_{j = 1}^n\sum_{k = 1}^d\sum_{l = 1}^dD^{(1, 0)}\psi_{ij}\bx_{0j} x_{0jk}x_{0jl}(w_l^*\widehat{x}_{il} - x_{0il})(w_k^*\widehat{x}_{ik} - x_{0ik})\\
&\quad\quad + \frac{1}{n}\sum_{j = 1}^n\sum_{k = 1}^d\sum_{l = 1}^dE_{ij}D^{(2, 0)}\psi_{ij}\bx_{0j}x_{0jk}x_{0jl}(w_l^*\widehat{x}_{il} - x_{0il})(w_k^*\widehat{x}_{ik} - x_{0ik})\\
&\quad = -\frac{2}{n}\sum_{j = 1}^nD^{(1, 0)}\psi_{ij}\bx_{0j} \bx_{0j}\transpose\bigg[\bgamma_{i} + \widetilde{O}_{\prob}\bigg\{\frac{(\log n)^{2\xi}}{n}\bigg\}\bigg]\bx_{0j}\transpose\bigg[\bgamma_{i} + \widetilde{O}_{\prob}\bigg\{\frac{(\log n)^{2\xi}}{n}\bigg\}\bigg]\\
&\quad\quad + \frac{1}{n}\sum_{j = 1}^n\sum_{k = 1}^d\sum_{l = 1}^dE_{ij}D^{(2, 0)}\psi_{ij}\bx_{0j}x_{0jk}x_{0jl}\times \widetilde{O}_{\prob}\bigg\{\frac{(\log n)^{2\xi}}{n}\bigg\}\\
&\quad = -\frac{2}{n}\sum_{j = 1}^nD^{(1, 0)}\psi_{ij}\bx_{0j} (\bx_{0j}\transpose\bgamma_i)^2 + \widetilde{O}_{\prob}\bigg\{\frac{(\log n)^{3\xi}}{n^{3/2}}\bigg\}
\end{align*}
and
\begin{align*}
&\frac{1}{n}\sum_{j = 1}^n\sum_{k = 1}^d\sum_{l = 1}^d\frac{\partial^2 \bg_{ij}}{\partial u_k\partial u_l}(\bx_{0i}, \bx_{0i}, \bx_{0j})(w_l^*\widetilde{x}_{il} - x_{0il})(w_k^*\widetilde{x}_{ik} - x_{0ik})\\
&\quad = \frac{1}{n}\sum_{j = 1}^n\sum_{k = 1}^d\sum_{l = 1}^dE_{ij}D^{(0, 2)}\psi_{ij}\bx_{0j} x_{0jk}x_{0jl}(w_l^*\widetilde{x}_{il} - x_{0il})(w_k^*\widetilde{x}_{ik} - x_{0ik})\\
&\quad = \frac{1}{n}\sum_{j = 1}^n\sum_{k = 1}^d\sum_{l = 1}^dE_{ij}D^{(0, 2)}\psi_{ij}\bx_{0j} x_{0jk}x_{0jl}\times\widetilde{O}_{\prob}\bigg\{\frac{(\log n)^{2}}{n}\bigg\}
 = \widetilde{O}_{\prob}\bigg\{\frac{(\log n)^{3\xi}}{n^{3/2}}\bigg\}.
\end{align*}
For the term on \eqref{eqn:gvv}, we first observe that, for any $(c_{ij}:i,j\in\{1,\ldots,n\})$ with $\sup_{i,j\in\{1,\ldots,n\}}|c_{ij}| = O(1)$, the following bound holds:
\begin{align*}
\frac{1}{n}\sum_{j = 1}^nc_{ij}E_{ij}\frac{\be_j\transpose\bE\bu_k}{\lambda_k^{1/2}}\frac{\be_j\transpose\bE\bu_l}{\lambda_l^{1/2}}
& = \frac{c_{ii}E_{ii}}{n}\frac{\be_j\transpose\bE\bu_k}{\lambda_k^{1/2}}\frac{\be_j\transpose\bE\bu_l}{\lambda_l^{1/2}} + \frac{1}{n}\sum_{j\neq i}^nc_{ij}E_{ij}\sum_{a\neq i}^n\frac{E_{ja}x_{ak}}{\lambda_k}\sum_{b\neq i}^n\frac{E_{jb}x_{bl}}{\lambda_l}\\
&\quad + \frac{1}{n}\sum_{j\neq i}^n\frac{c_{ij}x_{ik}E_{ij}^2}{\lambda_k}\sum_{b\neq i}\frac{E_{jb}x_{bl}}{\lambda_l} + \frac{1}{n}\sum_{j\neq i}^n\frac{c_{ij}x_{il}E_{ij}^2}{\lambda_l}\sum_{a\neq i}\frac{E_{ja}x_{ak}}{\lambda_k}\\
&\quad + \frac{1}{n}\sum_{j\neq i}^n\frac{c_{ij}x_{ik}x_{il}E_{ij}^3}{\lambda_k\lambda_l} = \widetilde{O}_{\prob}\bigg\{\frac{(\log n)^{3\xi}}{n^{3/2}}\bigg\}
\end{align*}
for any $\xi > 1$, where we have used the Bernstein's inequality, the independence between $(E_{ij}:j\in\{1,\ldots,n\}\backslash\{i\})$ and $(E_{ja}:j,a\in\{1,\ldots,n\}\backslash\{i\})$, the fact that $|E_{ij}|\leq 1$ with probability one, and a union bound over $j\in\{1,\ldots,n\}$.
Then, we write $\widetilde{\br}_i = [\widetilde{r}_{i1},\ldots,\widetilde{r}_{id}]\transpose$ and obtain the following decomposition on the term on \eqref{eqn:gvv}:
\begin{align*}
&\frac{1}{n}\sum_{j = 1}^n\sum_{k = 1}^d\sum_{l = 1}^d\frac{\partial^2 \bg_{ij}}{\partial v_k\partial v_l}(\bx_{0i}, \bx_{0i}, \bx_{0j})(w_l^*\widetilde{x}_{jl} - x_{0jl})(w_k^*\widetilde{x}_{jk} - x_{0jk})\\
&\quad = \frac{1}{n}\sum_{j = 1}^n\sum_{k = 1}^d\sum_{l = 1}^dE_{ij}
\{D^{(2, 0)}\psi_{ij}\bx_{0j} x_{0ik}x_{0il} + 2D^{(1, 1)}\psi_{ij}\bx_{0j}x_{0ik}x_{0il} + D^{(0, 2)}\psi_{ij}\bx_{0j} x_{0ik}x_{0il}\\
&\quad\quad\quad + D^{(1, 0)}\psi_{ij}(\be_kx_{0il} + \be_lx_{0ik}) + D^{(0, 1)}\psi_{ij}(\be_k x_{0il} + \be_l x_{0ik})\}
\\&\quad\quad\times
\bigg(\frac{\be_j\transpose\bE\bu_k}{\lambda_k^{1/2}} + \widetilde{r}_{jk}\bigg)\bigg(\frac{\be_j\transpose\bE\bu_l}{\lambda_l^{1/2}} + \widetilde{r}_{jl}\bigg)\\
&\quad\quad - \frac{1}{n}\sum_{j = 1}^n\sum_{k = 1}^d\sum_{l = 1}^d\{2D^{(1, 0)}\psi_{ij}\bx_{0j} x_{0ik}x_{0il} + 2D^{(0, 1)}\psi_{ij}\bx_{0j}x_{0ik}x_{0il} + \psi_{ij}(\be_l x_{0ik} + \be_k x_{0il})\}\\
&\quad\quad\times\frac{\be_j\transpose\bE\bu_k\be_j\bE\bu_l}{\lambda_k^{1/2}\lambda_l^{1/2}} + 
\Optilde\bigg\{\frac{(\log n)^{3\xi}}{n^{3/2}}\bigg\}\\
&\quad = \frac{1}{n}\sum_{j = 1}^n\sum_{k = 1}^d\sum_{l = 1}^dE_{ij}
\{D^{(2, 0)}\psi_{ij}\bx_{0j} x_{0ik}x_{0il} + 2D^{(1, 1)}\psi_{ij}\bx_{0j}x_{0ik}x_{0il} + D^{(0, 2)}\psi_{ij}\bx_{0j} x_{0ik}x_{0il}\\
&\quad\quad\quad + D^{(1, 0)}\psi_{ij}(\be_kx_{0il} + \be_lx_{0ik}) + D^{(0, 1)}\psi_{ij}(\be_k x_{0il} + \be_l x_{0ik})\}\frac{\be_j\transpose\bE\bu_k}{\lambda_k^{1/2}}\frac{\be_j\transpose\bE\bu_l}{\lambda_l^{1/2}}
\\&\quad\quad
 - \frac{1}{n}\sum_{j = 1}^n\sum_{k = 1}^d\sum_{l = 1}^d\{2D^{(1, 0)}\psi_{ij}\bx_{0j} x_{0ik}x_{0il} + 2D^{(0, 1)}\psi_{ij}\bx_{0j}x_{0ik}x_{0il} + \psi_{ij}(\be_l x_{0ik} + \be_k x_{0il})\}\\
&\quad\quad\times \frac{\be_j\transpose\bE\bu_k\be_j\bE\bu_l}{\lambda_k^{1/2}\lambda_l^{1/2}} + 
\Optilde\bigg\{\frac{(\log n)^{3\xi}}{n^{3/2}}\bigg\}\\
&\quad = - \frac{1}{n}\sum_{j = 1}^n\sum_{k = 1}^d\sum_{l = 1}^d\{2D^{(1, 0)}\psi_{ij}\bx_{0j} x_{0ik}x_{0il} + 2D^{(0, 1)}\psi_{ij}\bx_{0j}x_{0ik}x_{0il} + \psi_{ij}(\be_l x_{0ik} + \be_k x_{0il})\} \\&\quad\quad
\times\frac{\be_j\transpose\bE\bu_k\be_j\bE\bu_l}{\lambda_k^{1/2}\lambda_l^{1/2}}
 + 
\Optilde\bigg\{\frac{(\log n)^{3\xi}}{n^{3/2}}\bigg\},
\end{align*}
where we have used the first-order stochastic expansion for $\widetilde{\bx}_i$ and a union bound over $j\in\{1,\ldots,n\}$. For the term on \eqref{eqn:gxu}, by the first-order stochastic expansions of $\widehat{\bx}_i$ and $\widetilde{\bx}_i$ and Bernstein's inequality, we obtain
\begin{align*}
&\frac{1}{n}\sum_{j = 1}^n\sum_{k,l = 1}^d\frac{\partial^2 \bg_{ij}}{\partial x_k\partial u_l}(\bx_{0i}, \bx_{0i}, \bx_{0j})(w_k^*\widehat{x}_{ik} - x_{0ik})(w_l^*\widetilde{x}_{il} - x_{0il})\\
&\quad = \frac{1}{n}\sum_{j = 1}^n\sum_{k,l = 1}^dE_{ij}D^{(1, 1)}\psi_{ij}\bx_{0j} x_{0jk}x_{0jl}(w_k^*\widehat{x}_{ik} - x_{0ik})(w_l^*\widetilde{x}_{il} - x_{0il})\\
&\quad\quad - 
\frac{1}{n}\sum_{j = 1}^n\sum_{k,l = 1}^dD^{(0, 1)}\psi_{ij}\bx_{0j} x_{0jk}x_{0jl}(w_k^*\widehat{x}_{ik} - x_{0ik})(w_l^*\widetilde{x}_{il} - x_{0il})\\
&\quad = \Optilde\bigg\{\frac{(\log n)^{2\xi}}{n^{3/2}}\bigg\}
- \frac{1}{n}\sum_{j = 1}^n\sum_{k,l = 1}^dD^{(0, 1)}\psi_{ij}\bx_{0j} x_{0jk}x_{0jl}\bigg[\gamma_{ik}\frac{\be_i\transpose\bE\bu_l}{\lambda_l^{1/2}} + \Optilde\bigg\{\frac{(\log n)^{3\xi}}{n^{3/2}}\bigg\}\bigg]
\\
&\quad = -\frac{1}{n}\sum_{j = 1}^nD^{(0, 1)}\psi_{ij}\bx_{0j} \bx_{0j}\transpose\bgamma_i\bx_{0j}\transpose(\bX_0\transpose\bX_0)^{-1}\bX_0\transpose\bE\be_i + \Optilde\bigg\{\frac{(\log n)^{2\xi}}{n^{3/2}}\bigg\}
\end{align*}
For the term on \eqref{eqn:gxv}, by the first-order stochastic expansions of $\widehat{\bx}_i$, $\widetilde{\bx}_j$, Result 3 in \cite{xie-wu-2022-ea-spn}, the fact that $|E_{ij}|\leq 1$ with probabiltiy one, and a union bound over $j\in\{1,\ldots,n\}$, 
\begin{align*}
&\frac{1}{n}\sum_{j = 1}^n\sum_{k,l = 1}^d\frac{\partial^2 \bg_{ij}}{\partial x_k\partial v_l}(\bx_{0i}, \bx_{0i}, \by_{0j})(w_k^*\widehat{x}_{ik} - x_{0ik})(w_l^*\widetilde{x}_{jl} - x_{0jl})\\
&\quad = \sum_{j = 1}^n\sum_{k,l = 1}^d\{D^{(2, 0)}\psi_{ij}\bx_{0j} x_{0jk}x_{0il} + D^{(1, 1)}\psi_{ij}\bx_{0j} x_{0jk}x_{0il} + D^{(1, 0)}\psi_{ij}(\bx_{0j}\be_k\transpose\be_l + \be_l x_{0jk})\}\\
&\quad\quad\times E_{ij}\bigg(\frac{\be_j\transpose\bE\bu_l}{\lambda_l^{1/2}} + \widetilde{r}_{jl}\bigg)\Optilde\bigg\{\frac{(\log n)^{\xi}}{n^{3/2}}\bigg\}\\
&\quad\quad - \sum_{j = 1}^n\sum_{k,l = 1}^d\{2D^{(1, 0)}\psi_{ij}\bx_{0j} x_{0jk}x_{0il} + D^{(0, 1)}\psi_{ij}\bx_{0j} x_{0jk}x_{0il} + \psi_{ij}(\be_l x_{0jk} + \bx_{0j}\be_k\transpose\be_l)\}\\
&\quad\quad\times 
\bigg(\frac{\be_j\transpose\bE\bu_l}{\lambda_l^{1/2}} + \widetilde{r}_{jl}\bigg)\Optilde\bigg\{\frac{(\log n)^{\xi}}{n^{3/2}}\bigg\}
\\
&\quad = \sum_{j = 1}^n\sum_{k,l = 1}^d\{D^{(2, 0)}\psi_{ij}\bx_{0j} x_{0jk}x_{0il} + D^{(1, 1)}\psi_{ij}\bx_{0j} x_{0jk}x_{0il} + D^{(1, 0)}\psi_{ij}(\bx_{0j}\be_k\transpose\be_l + \be_l x_{0jk})\}\\
&\quad\quad\times \frac{E_{ij}\be_j\transpose\bE\bu_l}{\lambda_l^{1/2}}\Optilde\bigg\{\frac{(\log n)^{\xi}}{n^{3/2}}\bigg\}\\
&\quad\quad + \sum_{j = 1}^n\sum_{k,l = 1}^d\{D^{(2, 0)}\psi_{ij}\bx_{0j} x_{0jk}x_{0il} + D^{(1, 1)}\psi_{ij}\bx_{0j} x_{0jk}x_{0il} + D^{(1, 0)}\psi_{ij}(\bx_{0j}\be_k\transpose\be_l + \be_l x_{0jk})\}\\
&\quad\quad\times E_{ij}\widetilde{r}_{jl}\Optilde\bigg\{\frac{(\log n)^\xi}{n^{3/2}}\bigg\}\\
&\quad\quad - \sum_{j = 1}^n\sum_{k = 1}^d\sum_{l = 1}^d\{2D^{(1, 0)}\psi_{ij}\bx_{0j} x_{0jk}x_{0il} + D^{(0, 1)}\psi_{ij}\bx_j x_{0jk}x_{0il} + \psi_{ij}(\be_l x_{0jk} + \bx_{0j}\be_k\transpose\be_l)\}\\
&\quad\quad\times \frac{\be_j\transpose\bE\bu_l}{\lambda_l^{1/2}}\Optilde\bigg\{\frac{(\log n)^{\xi}}{n^{3/2}}\bigg\}\\
&\quad\quad - \sum_{j = 1}^n\sum_{k = 1}^d\sum_{l = 1}^d\{2D^{(1, 0)}\psi_{ij}\bx_{0j} x_{0jk}x_{0il} + D^{(0, 1)}\psi_{ij}\bx_{0j} x_{0jk}x_{0il} + \psi_{ij}(\be_l x_{0jk} + \bx_{0j}\be_k\transpose\be_l)\}\\
&\quad\quad\times \widetilde{r}_{jl}\Optilde\bigg\{\frac{(\log n)^\xi}{n^{3/2}}\bigg\}\\
&\quad = \Optilde\bigg\{\frac{(\log n)^{3\xi}}{n^{3/2}}\bigg\}.
\end{align*}
Finally, for the term on \eqref{eqn:guv}, by the first-order stochastic expansions of $\widetilde{\bx}_i$, $\widetilde{\bx}_j$, Result 3 of \cite{xie-wu-2022-ea-spn}, the fact that $|E_{ij}|\leq 1$ with probability one, Bernstein's inequality, and a union bound over $j\in\{1,\ldots,n\}$, we have
\begin{align*}
&\frac{1}{n}\sum_{j = 1}^n\sum_{k, l = 1}^d\frac{\partial^2 \bg_{ij}}{\partial u_k\partial v_l}(\bx_{0i}, \bx_{0i}, \bx_{0j})(w_k^*\widetilde{x}_{ik} - x_{0ik})(w_k^*\widetilde{x}_{jl} - x_{0jl})\\
&\quad = \sum_{j = 1}^n\sum_{k, l = 1}^d\{D^{(1, 1)}\psi_{ij}\bx_{0j} x_{0jk}x_{0il} + D^{(0, 2)}\psi_{ij}\bx_{0j} x_{0jk}x_{0il} + D^{(0, 1)}\psi_{ij}(\bx_{0j}\be_k\transpose\be_l + \be_l x_{0jk})\}\\
&\quad\quad\times E_{ij}\bigg(\frac{\be_j\transpose\bE\bu_l}{\lambda_l^{1/2}} + \widetilde{r}_{jl}\bigg)\Optilde\bigg\{\frac{(\log n)^\xi}{n^{3/2}}\bigg\}\\
&\quad\quad - \frac{1}{n}\sum_{j = 1}^n\sum_{k = 1}^d\sum_{l = 1}^dD^{(0, 1)}\psi_{ij}\bx_{0j} x_{0jk}x_{0il}\bigg[\frac{\be_i\transpose\bE\bu_k}{\lambda_k^{1/2}} + \Optilde\bigg\{\frac{(\log n)^{2\xi}}{n}\bigg\}\bigg]\bigg(\frac{\be_j\transpose\bE\bu_l}{\lambda_l^{1/2}} + \widetilde{r}_{jl}\bigg)\\
&\quad = \Optilde\bigg\{\frac{(\log n)^{3\xi}}{n^{3/2}}\bigg\} - \frac{1}{n}\sum_{j = 1}^n\sum_{k = 1}^d\sum_{l = 1}^dD^{(0, 1)}\psi_{ij}\bx_{0j} x_{0jk}x_{0il}\frac{\be_i\transpose\bE\bu_k}{\lambda_k^{1/2}}\frac{\be_j\transpose\bE\bu_l}{\lambda_l^{1/2}} = \Optilde\bigg\{\frac{(\log n)^{3\xi}}{n^{3/2}}\bigg\}.
\end{align*}
Combining the above results, we obtain the following second-order expansion for $\widehat{\bx}_i$:
\begin{align*}
\bW^*\widehat{\bx}_i - \bx_{0i} = \bgamma_i - \bG_{0in}^{-1}\frac{1}{n}\sum_{j = 1}^n\psi_{ij}\bx_{0j}\bx_{0i}\transpose(\bX\transpose\bX)^{-1}\widetilde{\balpha}_j + \bq_i + \Optilde\bigg\{\frac{(\log n)^{3\xi}}{n^{3/2}}\bigg\}, 
\end{align*}
where
\begin{align*}
\bq_i & = \bG_{0in}^{-1}\frac{1}{n}\sum_{j = 1}^nE_{ij}D^{(0, 1)}\psi_{ij}\bx_{0j}
\bx_{0j}\transpose(\bX_0\transpose\bX_0)^{-1}\bX_0\transpose\bE\be_i
 + \bG_{0in}^{-1}\frac{1}{n}\sum_{j = 1}^nE_{ij}D^{(1, 0)}\psi_{ij}\bx_{0j}\bx_{0j}\transpose\bgamma_i\\
&\quad + \bG_{0in}^{-1}\frac{1}{n}\sum_{j = 1}^n\sum_{k = 1}^dE_{ij}\{D^{(1, 0)}\psi_{ij}\bx_{0j}x_{0ik} + D^{(0, 1)}\psi_{ij}\bx_{0j}x_{0ik} + \psi_{ij}\be_k\}\frac{\be_j\transpose\bE\bu_k}{\lambda_k^{1/2}}\\
&\quad - \bG_{0in}^{-1}\frac{1}{n}\sum_{j = 1}^n\psi_{ij}\bx_{0j}\bx_{0i}\transpose\widetilde{\bbeta}_{j} - \bG_{in}^{-1}\frac{1}{n}\sum_{j = 1}^nD^{(1, 0)}\psi_{ij}\bx_{0j} (\bx_{j}\transpose\bgamma_i)^2\\
&\quad - \bG_{0in}^{-1}\frac{1}{2n}\sum_{j = 1}^n\sum_{k,l = 1}^d\{2D^{(1, 0)}\psi_{ij}\bx_{0j} x_{0ik}x_{0il} + 2D^{(0, 1)}\psi_{ij}\bx_{0j}x_{0ik}x_{0il} + \psi_{ij}(\be_l x_{0ik} + \be_k x_{0il})\}\\
&\quad\quad\times\frac{\be_j\transpose\bE\bu_k\be_j\bE\bu_l}{\lambda_k^{1/2}\lambda_l^{1/2}}
 -\bG_{0in}^{-1}\frac{1}{n}\sum_{j = 1}^nD^{(0, 1)}\psi_{ij}\bx_{0j} \bx_{0j}\transpose\bgamma_i\bx_{0j}\transpose(\bX_0\transpose\bX_0)^{-1}\bX_0\transpose\bE\be_i.
\end{align*}
Since $|E_{ij}|\leq 1$ with probability one, it follows that $\expect \bW^*\widehat{\bx}_i = \bx_{0i} + \expect \bq_i + o(n^{-1})$. Furthermore, a simple algebra shows that 
\begin{align*}
\expect \bq_i& = \bG_{0in}^{-1}\frac{1}{n}\sum_{j = 1}^nD^{(0, 1)}\psi_{ij}\psi_{ij}\bx_{0j}
\bx_{0j}\transpose(\bX_0\transpose\bX_0)^{-1}\bx_{0j}
 + \bG_{0in}^{-1}\frac{1}{n}\sum_{j = 1}^nD^{(1, 0)}\psi_{ij}\bx_{0j}\bx_{0j}\transpose\bG_{0in}^{-1}\bx_{0j}\\
 &\quad + \bb_i^{(\mathrm{ASE})} + \bb_i^{(\mathrm{base})}.
\end{align*}
The proof is completed by substituting the generic function $\psi(s, t)$ above with $\psi(s, t) = 1 / t + 1 / (1 - t)$ for the one-step estimator and $\psi(s, t) = 1 / t + 1 / (1 - s)$ for the maximum surrogate likelihood estimator, respectively. 
\end{proof}
}

\subsection{Proof of Theorem \ref{theorem:posterior_convergence_tvm}}
\begin{proof}
Similar to the earlier proofs, the large probability bounds below are with regard to $n\geq N_{c,\delta,\lambda}$ for some large constant $N_{c,\delta,\lambda}$ depending on $c,\delta,\lambda$. 
By definition, $\bt=\sqrt{n}\bW\transpose(\bx_i-\widehat\bx_i)$, then $\bx_i=\widehat\bx_i+\bW\bt/\sqrt{n}$. Denote the parameter space of $\bt$ by $\widehat\Theta_{in}=\{\bt\in\mathbb{R}^d:\|\widehat\bx_i+\bW\bt/\sqrt{n}\|_2\leq1\}$. Denote the normalizing constant by
\[
d_{in} = \int_{\mathbb{R}^d}\exp\left\{n\widetilde M_{in}(\widehat\bx_i+\frac{\bW\bt}{\sqrt{n}}) - n\widetilde M_{in}(\widehat\bx_i)\right\}\pi(\widehat\bx_i+\frac{\bW\bt}{\sqrt{n}})\indicator(\bt\in\widehat\Theta_{in})\mathrm{d}\bt.
\]
By definition,
\[
\widetilde\pi_{in}^*(\bt\mid\bA) = \frac{1}{d_{in}}\exp\left\{n\widetilde M_{in}(\widehat\bx_i+\frac{\bW\bt}{\sqrt{n}}) - n\widetilde M_{in}(\widehat\bx_i)\right\}\pi(\widehat\bx_i+\frac{\bW\bt}{\sqrt{n}})\indicator(\bt\in\widehat\Theta_{in}).
\]
It is sufficient to show that
\begin{equation}\label{equation:sufficient_condition_tvm}
\begin{aligned}
&\max_{i\in[n]}\int_{\mathbb{R}^d}\left(1+\|\bt\|_2^\alpha\right){\Bigg|}\exp\left\{n\widetilde M_{in}(\widehat\bx_i+\frac{\bW\bt}{\sqrt{n}}) - n\widetilde M_{in}(\widehat\bx_i)\right\}\pi\left(\widehat\bx_i+\frac{\bW\bt}{\sqrt{n}}\right)\indicator(\bt\in\widehat\Theta_{in})\\
&\quad\quad- e^{-\frac{1}{2}\bt\transpose\bG_{0in}\bt}\pi\left(\rho_n^{\frac{1}{2}}\bW\bx_{0i}\right){\Bigg|}\mathrm{d}\bt = o_{\prob_0}(1).
\end{aligned}
\end{equation}
To see this, note that \eqref{equation:posterior_convergence_tvm} in the manuscript can be rewritten as
\begin{align*}
&\max_{i\in[n]}\frac{1}{d_{in}}\int\left(1+\|\bt\|_2^\alpha\right)\Bigg|\exp\left\{n\widetilde M_{in}(\widehat\bx_i+\frac{\bW\bt}{\sqrt{n}}) - n\widetilde M_{in}(\widehat\bx_i)\right\}\pi(\widehat\bx_i+\frac{\bW\bt}{\sqrt{n}})\indicator(\bt\in\widehat\Theta_{in}) \\
&\qquad\qquad\qquad\qquad\qquad - \frac{d_{in}e^{-\bt\transpose\bG_{0in}\bt/2}}{\det(2\pi\bG_{0in}\inverse)\halfpower}\Bigg|\mathrm{d}\bt\\
&\quad\leq\max_{i\in[n]}\frac{1}{d_{in}}\int\left(1+\|\bt\|_2^\alpha\right){\Bigg|}\exp\left\{n\widetilde M_{in}(\widehat\bx_i+\frac{\bW\bt}{\sqrt{n}}) - n\widetilde M_{in}(\widehat\bx_i)\right\}\pi(\widehat\bx_i+\frac{\bW\bt}{\sqrt{n}})\indicator(\bt\in\widehat\Theta_{in})\\
&\qquad\qquad\qquad\qquad\qquad-e^{-\frac{1}{2}\bt\transpose\bG_{0in}\bt}\pi(\rho_n^{\frac{1}{2}}\bW\bx_{0i}){\Bigg|}\mathrm{d}\bt\\
&\quad\quad+\max_{i\in[n]}\left|\frac{\pi(\bW\rho_n\halfpower\bx_{0i})}{d_{in}}-\det(2\pi\bG_{0in}\inverse)\invhalfpower\right|\int\left(1+\|\bt\|_2^\alpha\right)e^{-\frac{1}{2}\bt\transpose\bG_{0in}\bt}\mathrm{d}\bt.
\end{align*}
Since \eqref{equation:sufficient_condition_tvm} implies that $\max_{i\in[n]}|d_{in}-\det(2\pi\bG_{0in}\inverse)\halfpower\pi(\bW\rho_n\halfpower\bx_{0i})|=o_{\prob_0}(1)$ (by taking $\alpha=0$), it can be seen that \eqref{equation:sufficient_condition_tvm} implies that the two terms on the right hand side of the previous display are $o_{\prob_0}(1)$. Hence, we are left with establishing \eqref{equation:sufficient_condition_tvm}.

\noindent
Let $\{\eta_n\}_{n=1}^\infty$ be a sequence to be determined later with $0<\eta_n\to\infty$ and consider the following partition of $\mathbb{R}^d$:
\[
\calA_1=\{\bt\in\widehat\Theta_{in}:\|\bt\|_2\leq\eta_n\}, \qquad \calA_2=\{\bt\in\widehat\Theta_{in}:\|\bt\|_2>\eta_n\}, \qquad \calA_3=\widehat\Theta_{in}^c.
\]
We first consider the integral of \eqref{equation:sufficient_condition_tvm} over $\calA_3$. 
By definition of $\mathbbm{1}(\bt\in\widehat{\Theta}_{in})$, the integral over $\calA_3$ can be bounded by
\begin{equation}\label{equation:suffcond_tvm_A3}
\begin{split}
&
\max_{i\in[n]}\int_{\calA_3}\left(1+\|\bt\|_2^\alpha\right)e^{-\frac{1}{2}\bt\transpose\bG_{0in}\bt}\pi(\rho_n^{\frac{1}{2}}\bW\bx_{0i})\mathrm{d}\bt\\
&\quad\leq\int_{\calA_3}\left(1+\|\bt\|_2^\alpha\right)e^{-\min_{i\in[n]}\lambda_d(\bG_{0in})\|\bt\|_2^2/2}\pi(\rho_n\halfpower\bW\bx_{0i})\mathrm{d}\bt\\
&\quad\leq\int_{\calA_3}\left(1+\|\bt\|_2^\alpha\right)e^{-\lambda\|\bt\|_2^2/2}\pi(\rho_n\halfpower\bW\bx_{0i})\mathrm{d}\bt \to 0,
\end{split}
\end{equation}
since $\calA_3$ is shrinking to empty set and $\min_{i\in[n]}(\bG_{0in})\geq\lambda$ has been shown in the proof of Theorem \ref{theorem:asymptotic_properties_of_MSLE} (see diaplay \eqref{eqn:finite_pd_fisher_info}).
We next consider the integral of \eqref{equation:sufficient_condition_tvm} over $\calA_2$. Define the event
\[
\calE_{2n} =\left\{\bA:\max_{i\in[n]}\max_{\|\bx_i\|_2\leq1}\bs\transpose\frac{\partial^2\widetilde M_{in}}{\partial\bx_i\partial\bx_i\transpose}(\bx_i)\bs\leq-\frac{\lambda}{2}\|\bs\|_2^2\quad\forall\bs\in\mathbb{R}^d\right\}.
\]
Note that by Lemma \ref{lemma:some_frequently_used_results}, Theorem 5.2 in \cite{lei-rinaldo-2015}, and Weyl's inequality, with probability at least $1-n^{-c}$,
\begin{align*}
&\min_{i\in[n]}\min_{\|\bx_i\|_2\leq1}\bs\transpose\left(-\frac{\partial^2\widetilde M_{in}}{\partial\bx_i\partial\bx_i\transpose}(\bx_i)\right)\bs = \min_{i\in[n]}\min_{\|\bx_i\|_2\leq1}\bs\transpose\left(\frac{1}{n}\sum_{j=1}^n\left\{\frac{1}{\widetilde p_{ij}}+\frac{1-A_{ij}}{(1-\bx_i\transpose\widetilde\bx_j)^2}\right\}\widetilde\bx_j\widetilde\bx_j\transpose\right)\bs\\
&\quad\geq \frac{1}{\max_{i,j\in[n]}\widetilde p_{ij}}\frac{1}{n}\sum_{j=1}^n\bs\transpose\widetilde\bx_j\widetilde\bx_j\transpose\bs \geq \frac{1}{n\rho_n}\bs\transpose\widetilde\bX\transpose\widetilde\bX\bs \geq \frac{1}{n\rho_n}\lambda_d(\bA)\|\bs\|_2^2\geq\frac{\lambda}{2}\|\bs\|_2^2.
\end{align*}
This shows that $\prob_0(\calE_{2n})\geq1-n^{-c}$ for all $n\geq N_{c,\delta,\lambda}$. 
By Taylor's expansion, for any $\bt\in\widehat{\Theta}_{in}$, we have
\begin{equation}
\label{eqn:surrogate_likelihood_Taylor_expansion}
\begin{aligned}
n\widetilde M_{in}(\widehat\bx_i+\frac{\bW\bt}{\sqrt{n}}) - n\widetilde M_{in}(\widehat\bx_i)    
& = \frac{1}{2}\bt\transpose\bW
\frac{\partial^2\widetilde M_{in}}{\partial\bx_i\partial\bx_i\transpose}(\bar\bx_i)
\bW\bt,
\end{aligned}
\end{equation}
where $\bar{\bx}_i = \widehat{\bx}_i + \theta_i\bW\bt/\sqrt{n}$ for some $\theta_i\in [0, 1]$ because the gradient of $\widetilde{M}_{in}$ evaluated at $\bx_i = \widehat{\bx}_i$ is zero by definition of the maximum surrogate likelihood estimator $\widehat{\bx}_i$. Over this event, the integral of \eqref{equation:sufficient_condition_tvm} over $\calA_2$ can be upper bounded by
\begin{align*}
&\max_{i\in[n]}\int_{\calA_2}\left(1+\|\bt\|_2^\alpha\right)\exp\left\{\frac{1}{2}\bt\transpose\bW\transpose\frac{\partial^2\widetilde M_{in}}{\partial\bx_i\partial\bx_i\transpose}(\bar\bx_i)\bW\bt\right\}\pi(\widehat\bx_i+\frac{\bW\bt}{\sqrt{n}})\mathrm{d}\bt\\
&\quad\quad + \max_{i\in[n]}\int_{\calA_2}\left(1+\|\bt\|_2^\alpha\right)e^{-\bt\transpose\bG_{0in}\bt/2}\pi(\rho_n\halfpower\bW\bx_{0i})\mathrm{d}\bt\\
&\quad\leq C\int_{\calA_2}\left(1+\|\bt\|_2^\alpha\right)\exp\left\{\max_{i\in[n]}\max_{\|\bx_i\|_2\leq1}\frac{1}{2}\bt\transpose\bW\transpose\frac{\partial^2\widetilde M_{in}}{\partial\bx_i\partial\bx_i\transpose}(\bar\bx_i)\bW\bt\right\}\mathrm{d}\bt\\
&\quad\quad + \max_{i\in[n]}C\int_{\calA_2}\left(1+\|\bt\|_2^\alpha\right)e^{-\bt\transpose\bG_{0in}\bt/2}\mathrm{d}\bt\\
&\quad\leq 2C\int_{\|\bt\|_2>\eta_n}\left(1+\|\bt\|_2^\alpha\right)e^{-\lambda\|\bt\|_2^2/4}\mathrm{d}\bt.
\end{align*}
Denote the last line of the above display by $\epsilon_{2n}$, then $\epsilon_{2n}\to0$ because $\eta_n\to\infty$. It follows that
\begin{align*}
&\prob_0{\Bigg\{}\max_{i\in[n]}\int_{\calA_2}\left(1+\|\bt\|_2^\alpha\right){\Bigg|}\exp\left\{n\widetilde M_{in}(\widehat\bx_i+\frac{\bW\bt}{\sqrt{n}}) - n\widetilde M_{in}(\widehat\bx_i)\right\}\pi(\widehat\bx_i+\frac{\bW\bt}{\sqrt{n}})\indicator(\bt\in\widehat\Theta_{in})\\
&\quad\quad- e^{-\frac{1}{2}\bt\transpose\bG_{0in}\bt}\pi(\rho_n^{\frac{1}{2}}\bW\bx_{0i}){\Bigg|}\mathrm{d}\bt \geq \epsilon_{2n}{\Bigg\}}\leq n^{-c}
\end{align*}
for all $n\geq N_{c,\delta,\lambda}$.
Hence,
\begin{equation}\label{equation:suffcond_tvm_A2}
\begin{aligned}
&\max_{i\in[n]}\int_{\calA_2}\left(1+\|\bt\|_2^\alpha\right){\Bigg|}\exp\left\{n\widetilde M_{in}(\widehat\bx_i+\frac{\bW\bt}{\sqrt{n}}) - n\widetilde M_{in}(\widehat\bx_i)\right\}\pi(\widehat\bx_i+\frac{\bW\bt}{\sqrt{n}})\indicator(\bt\in\widehat\Theta_{in})\\
&\quad\quad- e^{-\frac{1}{2}\bt\transpose\bG_{0in}\bt}\pi(\rho_n^{\frac{1}{2}}\bW\bx_{0i}){\Bigg|}\mathrm{d}\bt\overset{\prob_0}{\to}0.
\end{aligned}
\end{equation}
We next consider the integral of \eqref{equation:sufficient_condition_tvm} over $\calA_1$. Take $\eta_n=\min\{(n\rho_n/\log n)^{(1/8)},\sqrt{(\log n)/\rho_n}\}$. Recall that $\bt=\sqrt{n}\bW\transpose(\bx_i-\widehat\bx_i)$, and $\max_{i\in[n]}\|\bW\transpose\widehat\bx_i-\rho_n\halfpower\bx_{0i}\|_2\lesssim_{c,\delta,\lambda}\sqrt{\frac{\log n}{n\rho_n}}$ with probability at least $1-n^{-c}$ by Theorem \ref{theorem:asymptotic_properties_of_MSLE}. Then
\[
\max_{i\in[n]}\|\bW\transpose\bx_i-\rho_n\halfpower\bx_{0i}\|_2 \leq \max_{i\in[n]}\|\bW\transpose\widehat\bx_i-\rho_n\halfpower\bx_{0i}\|_2 + \max_{i\in[n]}\frac{\|\bt\|_2}{\sqrt{n}} \lesssim_{c,\delta,\lambda} \sqrt{\frac{\log n}{n\rho_n}}
\]
with probability at least $1-n^{-c}$ because $\eta_n/\sqrt{n}\leq \sqrt{(\log n)/(n\rho_n)}$, which also implies that there exists a constant $C_{c,\delta,\lambda} > 0$ (possibly depending on $c,\delta,\lambda$), such that 
\[
\{\bx_i:\|\bt\|_2\leq\eta_n\}\subset\left\{\bx_i:\|\bW\transpose\bx_i-\rho_n\halfpower\bx_{0i}\|_2\leq C_{c,\delta,\lambda} \sqrt{\frac{\log n}{n\rho_n}}\right\}
\]
with probability at least $1-n^{-c}$. Define the event
\begin{equation*}
\begin{aligned}
&\calE_{1n} = \left\{\bA:\max_{i\in[n]}\sup_{\bx_i:\|\bt\|_2\leq\eta_n}\left\|\bW\transpose\frac{\partial^2\widetilde M_{in}}{\partial\bx_i\partial\bx_i\transpose}(\bx_i)\bW+\bG_{0in}\right\|_2\leq K_{c,\delta,\lambda}\sqrt{\frac{\log n}{n\rho_n}}\right\}\\
&\quad\quad\quad\cap \left\{\bA:\max_{i\in[n]}\|\bW\transpose\bx_i-\rho_n\halfpower\bx_{0i}\|_2\leq K_{c,\delta,\lambda} \sqrt{\frac{\log n}{n\rho_n}}\right\}.
\end{aligned}
\end{equation*}
for an appropriate constant $K_{c,\delta,\lambda}$ depending on $c,\delta,\lambda$. 
By Lemma \ref{lemma:concentration_of_hessian_matrices}, one can select $K_{c,\delta,\lambda}$ such that $\prob_0(\calE_{1n})\geq1-n^{-c}$ for all $n\geq N_{c,\delta,\lambda}$. Then over the event $\calE_{1n}$, by Taylor's expansion \eqref{eqn:surrogate_likelihood_Taylor_expansion} and the mean-value theorem applied to the exponential function, we have
\begin{align*}
&\max_{i\in[n]}\int_{\calA_1}\left(1+\|\bt\|_2^\alpha\right)\Bigg|\exp\left\{n\widetilde M_{in}(\widehat\bx_i+\frac{\bW\bt}{\sqrt{n}}) - n\widetilde M_{in}(\widehat\bx_i)\right\}\pi(\widehat\bx_i+\frac{\bW\bt}{\sqrt{n}})\indicator(\bt\in\widehat\Theta_{in}) \\
&\qquad\qquad\qquad\qquad\qquad\qquad\qquad - e^{-\frac{1}{2}\bt\transpose\bG_{0in}\bt}\pi(\rho_n^{\frac{1}{2}}\bW\bx_{0i})\Bigg|\mathrm{d}\bt\\
&\quad=\max_{i\in[n]}\int_{\calA_1}\left(1+\|\bt\|_2^\alpha\right)\Bigg|\exp\left\{\frac{1}{2}\bt\transpose\bW\transpose\frac{\partial^2\widetilde M_{in}}{\partial\bx_i\partial\bx_i\transpose}(\bar\bx_i)\bW\bt\right\}\pi(\widehat\bx_i+\frac{\bW\bt}{\sqrt{n}}) \\
&\qquad\qquad\qquad\qquad\qquad\qquad\qquad - e^{-\frac{1}{2}\bt\transpose\bG_{0in}\bt}\pi(\rho_n^{\frac{1}{2}}\bW\bx_{0i})\Bigg|\mathrm{d}\bt\\
&\quad=\max_{i\in[n]}\int_{\calA_1}\left(1+\|\bt\|_2^\alpha\right)\left|\exp\left\{\frac{1}{2}\bt\transpose\left(\bW\transpose\frac{\partial^2\widetilde M_{in}}{\partial\bx_i\partial\bx_i\transpose}(\bar\bx_i)\bW+\bG_{0in}\right)\bt\right\} -\frac{\pi(\rho_n\halfpower\bW\bx_{0i})}{\pi(\widehat\bx_i+\frac{\bW\bt}{\sqrt{n}})} \right|\\
&\qquad\qquad\qquad\qquad\qquad\qquad\qquad\times e^{-\frac{1}{2}\bt\transpose\bG_{0in}\bt}\pi\left(\widehat\bx_i+\frac{\bW\bt}{\sqrt{n}}\right)\mathrm{d}\bt\\
&\quad\leq\max_{i\in[n]}\int_{\calA_1}\left(1+\|\bt\|_2^\alpha\right){\Bigg\{}\left|\exp\left\{\frac{1}{2}\bt\transpose\left(\bW\transpose\frac{\partial^2\widetilde M_{in}}{\partial\bx_i\partial\bx_i\transpose}(\bar\bx_i)\bW+\bG_{0in}\right)\bt\right\}-1\right| \\
&\qquad\qquad\qquad\qquad\qquad\qquad\qquad+\left|1-\frac{\pi(\rho_n\halfpower\bW\bx_{0i})}{\pi(\widehat\bx_i+\frac{\bW\bt}{\sqrt{n}})} \right|{\Bigg\}} e^{-\frac{1}{2}\bt\transpose\bG_{0in}\bt}\pi\left(\widehat\bx_i+\frac{\bW\bt}{\sqrt{n}}\right)\mathrm{d}\bt\\
&\quad\leq\Bigg(\exp\left\{\frac{1}{2}K_{c,\delta,\lambda}\sqrt{\frac{\log n}{n\rho_n}}\eta_n^2\right\}\frac{1}{2}K_{c,\delta,\lambda}\sqrt{\frac{\log n}{n\rho_n}}\eta_n^2 \\
&\qquad\qquad\qquad+ \max_{i\in[n]}\sup_{\bx_i:\|\bW\transpose\bx_i-\rho_n\halfpower\bx_{0i}\|_2\lesssim_{c,\delta,\lambda}\sqrt{\frac{\log n}{n\rho_n}}}\left|1-\frac{\pi(\rho_n\halfpower\bW\bx_{0i})}{\pi(\bx_i)}\right|\Bigg) \times C\int e^{-\lambda\|\bt\|_2^2/2}\mathrm{d}\bt.
\end{align*}
Denote the last form of the above display by $\epsilon_{1n}$. It is obvious that $\exp\left\{\frac{1}{2}K_{c,\delta,\lambda}\sqrt{\frac{\log n}{n\rho_n}}\eta_n^2\right\}\to1$ (since $\eta_n=(n\rho_n/\log n)^\frac{1}{8}$). By the assumptions on $\pi(\bx_i)$, 
\[
\max_{i\in[n]}\sup_{\bx_i:\|\bW\transpose\bx_i-\rho_n\halfpower\bx_{0i}\|_2\lesssim_{c,\delta,\lambda}\sqrt{\frac{\log n}{n\rho_n}}}\left|1-\frac{\pi(\rho_n\halfpower\bW\bx_{0i})}{\pi(\bx_i)}\right|\to0.
\] 
It follows that $\epsilon_{1n}\to0$ as $n\to\infty$, and
\begin{align*}
&\prob_0{\Bigg\{}\max_{i\in[n]}\int_{\calA_1}\left(1+\|\bt\|_2^\alpha\right){\Bigg|}\exp\left\{n\widetilde M_{in}(\widehat\bx_i+\frac{\bW\bt}{\sqrt{n}}) - n\widetilde M_{in}(\widehat\bx_i)\right\}\pi(\widehat\bx_i+\frac{\bW\bt}{\sqrt{n}})\indicator(\bt\in\widehat\Theta_{in})\\
&\quad\quad\quad\quad- e^{-\frac{1}{2}\bt\transpose\bG_{0in}\bt}\pi(\rho_n^{\frac{1}{2}}\bW\bx_{0i}){\Bigg|}\mathrm{d}\bt\geq\epsilon_{1n}{\Bigg\}}\leq n^{-c},
\end{align*}
for all $n\geq N_{c,\delta,\lambda}$.
Hence,
\begin{equation}\label{equation:suffcond_tvm_A1}
\begin{aligned}
&\max_{i\in[n]}\int_{\calA_1}\left(1+\|\bt\|_2^\alpha\right){\Bigg|}\exp\left\{n\widetilde M_{in}(\widehat\bx_i+\frac{\bW\bt}{\sqrt{n}}) - n\widetilde M_{in}(\widehat\bx_i)\right\}\pi(\widehat\bx_i+\frac{\bW\bt}{\sqrt{n}})\indicator(\bt\in\widehat\Theta_{in})\\
&\quad\quad\quad\quad- e^{-\frac{1}{2}\bt\transpose\bG_{0in}\bt}\pi(\rho_n^{\frac{1}{2}}\bW\bx_{0i}){\Bigg|}\mathrm{d}\bt\overset{\prob_0}{\to}0.
\end{aligned}
\end{equation}
The proof of \eqref{equation:sufficient_condition_tvm} is completed by combining \eqref{equation:suffcond_tvm_A3}, \eqref{equation:suffcond_tvm_A2}, and \eqref{equation:suffcond_tvm_A1}.
\end{proof}

\subsection{Proof of Corollary \ref{corollary:posterior_inference}}
\begin{proof}
We first show the convergence of the mean and covariance of $\widetilde\pi_{in}^*(\bt\mid\bA)$, which is a direct consequence of Theorem \ref{theorem:posterior_convergence_tvm}:
\begin{align*}
\max_{i\in[n]}\left\|\int\bt\widetilde\pi_{in}^*(\bt\mid\bA)\mathrm{d}\bt\right\|_2 &=\max_{i\in[n]}\left\|\int\bt\widetilde\pi_{in}^*(\bt\mid\bA)\mathrm{d}\bt - \int\bt\frac{e^{-\bt\transpose\bG_{0in}\bt/2}}{\det(2\pi\bG_{0in}\inverse)\halfpower}\mathrm{d}\bt\right\|_2\\
&\leq\max_{i\in[n]}\int\|\bt\|_2\left|\widetilde\pi_{in}^*(\bt\mid\bA) - \frac{e^{-\bt\transpose\bG_{0in}\bt/2}}{\det(2\pi\bG_{0in}\inverse)\halfpower}\right|\mathrm{d}\bt\overset{\prob_0}{\to}0,
\end{align*}
and
\begin{align*}
\max_{i\in[n]}\left\|\int\bt\bt\transpose\widetilde\pi_{in}^*(\bt\mid\bA)\mathrm{d}\bt - \bG_{0in}\inverse\right\|_2 &=\max_{i\in[n]}\left\|\int\bt\bt\transpose\widetilde\pi_{in}^*(\bt\mid\bA)\mathrm{d}\bt - \int\bt\bt\transpose\frac{e^{-\bt\transpose\bG_{0in}\bt/2}}{\det(2\pi\bG_{0in}\inverse)\halfpower}\mathrm{d}\bt\right\|_2\\
&\leq\max_{i\in[n]}\int\|\bt\|_2^2\left|\widetilde\pi_{in}^*(\bt\mid\bA) - \frac{e^{-\bt\transpose\bG_{0in}\bt/2}}{\det(2\pi\bG_{0in}\inverse)\halfpower}\right|\mathrm{d}\bt\overset{\prob_0}{\to}0.
\end{align*}
Now
\begin{align*}
\max_{i\in[n]}\|\sqrt{n}(\bx_i^*-\widehat\bx_i)\|_2 &= \max_{i\in[n]}\left\|\int\sqrt{n}(\bx_i-\widehat\bx_i)\widetilde\pi_{in}(\bx_i\mid\bA)\mathrm{d}\bx_i\right\|_2\\
&=\max_{i\in[n]}\left\|\int\bt\widetilde\pi_{in}^*(\bt\mid\bA)\mathrm{d}\bt\right\|_2\\
&=o_{\prob_0}(1),
\end{align*}
then by Theorem \ref{theorem:asymptotic_properties_of_MSLE} and Slutsky's Theorem, $\sqrt{n}\bG_{0in}\halfpower(\bW\transpose\bx_i^*-\rho_n\halfpower\bx_{0i})\overset{\calL}{\to}\mathrm{N}_d(\mathbf{0}_d,\bI_d)$.
Also,
\begin{align*}
&\max_{i\in[n]}\left\| n\bW\transpose\bSigma_{in}^*\bW - \bG_{0in}\inverse\right\|_2\\
&\quad= \max_{i\in[n]}\left\| \int n\bW\transpose(\bx_i-\bx_i^*)(\bx_i-\bx_i^*)\transpose\bW\widetilde\pi_{in}(\bx_i\mid\bA)\mathrm{d}\bx_i - \bG_{0in}\inverse\right\|_2\\
&\quad= \max_{i\in[n]}\left\| \int n\bW\transpose(\bx_i-\widehat\bx_i + \widehat\bx_i-\bx_i^*)(\bx_i-\widehat\bx_i + \widehat\bx_i-\bx_i^*)\transpose\bW\widetilde\pi_{in}(\bx_i\mid\bA)\mathrm{d}\bx_i - \bG_{0in}\inverse\right\|_2\\
&\quad\leq \max_{i\in[n]}\left\|\int\bt\bt\transpose\widetilde\pi_{in}^*(\bt\mid\bA)\mathrm{d}\bt - \bG_{0in}\inverse\right\|_2 + o_{\prob_0}(1)\\
&\quad=o_{\prob_0}(1).
\end{align*}
Note that $\bG_{0in}$ is finite and positive definite. By continuous mapping theorem,
\[
(\rho_n\halfpower\bW\bx_i-\bx_i^*)\transpose(\bSigma_{in}^*)\inverse(\rho_n\halfpower\bW\bx_i-\bx_i^*) \overset{\calL}{\to} \chi^2_d,
\]
so $\prob_0\{(\rho_n\halfpower\bW\bx_i-\bx_i^*)\transpose(\bSigma_{in}^*)\inverse(\rho_n\halfpower\bW\bx_i-\bx_i^*)\leq q_{1-\alpha}\} \to 1-\alpha$. 

We now focus on the last assertion. By the previous proof, we know that $\max_{i\in [n]}\|\bx^* - \widehat{\bx}_i\|_2^2 = o_{\prob_0}(1/n)$. It follows directly that 
\[
\|\bX^* - \widehat{\bX}\|_{\mathrm{F}}^2 = \sum_{i = 1}^n\|\bx^* - \widehat{\bx}_i\|_2^2\leq n\max_{i\in [n]}\|\bx^* - \widehat{\bx}_i\|_2^2 = o_{\prob_0}(1).
\]
Therefore, by Theorem \ref{theorem:asymptotic_properties_of_MSLE} and Cauchy--Schwarz inequality, we have
\begin{align*}
\|\bX^*\bW - \rho_n^{1/2}\bX_0\|_{\mathrm{F}}^2
& = \|\widehat{\bX}\bW - \rho_n^{1/2}\bX_0\|_{\mathrm{F}}^2 + \|\bX^*\bW - \widehat{\bX}\bW\|_{\mathrm{F}}^2\\
&\quad + 2\left\langle
\widehat{\bX}\bW - \rho_n^{1/2}\bX_0, \bX^*\bW - \widehat{\bX}\bW
\right\rangle_{\mathrm{F}}\\
& = \frac{1}{n}\sum_{i = 1}^n\mathrm{tr}(\bG_{0in}^{-1}) + o_{\prob_0}(1) + O\left(\|\widehat{\bX}\bW - \rho_n^{1/2}\bX_0\|_{\mathrm{F}}\|\bX^*\bW - \widehat{\bX}\bW\|_{\mathrm{F}}\right)\\
& = \frac{1}{n}\sum_{i = 1}^n\mathrm{tr}(\bG_{0in}^{-1}) + o_{\prob_0}(1),
\end{align*}
where $\langle\cdot,\cdot\rangle_{\mathrm{F}}$ denotes the Frobenius inner product between matrices. The proof is thus completed.
\end{proof}

\section{Proof of the Convergence of the Stochastic Gradient Descent}

\begin{lemma}[Lemma A.5 in \citealp{mairal-2013-sgd}]
\label{lemma:Lemma_A5_NeurIPS2013}
Let $(a_t)_{t\geq1},(b_t)_{t\geq1}$ be two non-negative real sequences. Assume that $\sum_{t = 1}^\infty a_tb_t$ converges and $\sum_{t = 1}^\infty a_t$ diverges, and $|b_{t + 1} - b_t| \leq K a_t$ for some constant $K \geq 0$. Then $b_t$ converges to 0. 
\end{lemma}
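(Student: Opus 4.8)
The statement is a deterministic real‑analysis lemma, and the plan is to argue by contradiction, following the usual structure for Robbins–Siegmund/Mairal‑type results while being careful that $(b_t)$ is not assumed monotone. First I would dispose of the degenerate case $K=0$: then $b_{t+1}=b_t$ for every $t$, so $b_t\equiv b$ is constant and $\sum_t a_tb_t=b\sum_t a_t$; since $\sum_t a_t=\infty$ while $\sum_t a_tb_t<\infty$, necessarily $b=0$. Henceforth assume $K>0$, and suppose toward a contradiction that $b_t\not\to0$, i.e.\ $L:=\limsup_{t\to\infty}b_t>0$; put $\epsilon:=L/4$. Then the set $U:=\{t:b_t>2\epsilon\}$ is infinite because $\limsup_t b_t=4\epsilon>2\epsilon$, and the set $D:=\{t:b_t<\epsilon\}$ is infinite as well, since otherwise $b_t\ge\epsilon$ for all $t\ge T$ and $\sum_{t\ge T}a_tb_t\ge\epsilon\sum_{t\ge T}a_t=\infty$, contradicting the convergence of $\sum_t a_tb_t$.

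Next I would extract a sequence of disjoint blocks on which $b$ descends across the band $[\epsilon,2\epsilon]$. For each $s\in U$ set $q(s):=\min\{t>s:t\in D\}$, which exists because $D$ is infinite; by construction $b_t\ge\epsilon$ for every $t$ with $s\le t<q(s)$, while $b_s>2\epsilon$ and $b_{q(s)}<\epsilon$, and $q(s)\ge s+1$. Starting from $s_1:=\min U$ and recursively defining $s_{j+1}:=\min\{s\in U:s\ge q(s_j)\}$ (well defined since $U$ is infinite and $q(s_j)<\infty$), I obtain strictly increasing indices $s_1<s_2<\cdots$ for which the integer intervals $I_j:=\{s_j,s_j+1,\dots,q(s_j)-1\}$ are nonempty, pairwise disjoint (since $\max I_j=q(s_j)-1<q(s_j)\le s_{j+1}=\min I_{j+1}$), and eventually contained in any tail $\{t\ge N\}$ because $s_j\to\infty$.

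Then I would estimate the contribution of each block to the convergent series. On the one hand $b_{s_j}-b_{q(s_j)}>2\epsilon-\epsilon=\epsilon$, and on the other $b_{s_j}-b_{q(s_j)}=\sum_{t\in I_j}(b_t-b_{t+1})\le\sum_{t\in I_j}|b_{t+1}-b_t|\le K\sum_{t\in I_j}a_t$, so $\sum_{t\in I_j}a_t>\epsilon/K$. Since $b_t\ge\epsilon$ throughout $I_j$, this yields $\sum_{t\in I_j}a_tb_t\ge\epsilon\sum_{t\in I_j}a_t>\epsilon^2/K$. Because the $I_j$ are disjoint and all terms $a_tb_t$ are nonnegative, summing over the infinitely many blocks gives $\sum_t a_tb_t\ge\sum_j\epsilon^2/K=\infty$, contradicting the hypothesis that $\sum_t a_tb_t$ converges. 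Hence $b_t\to0$.

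The one point that genuinely needs care — and what I would flag as the main obstacle — is the non‑monotonicity of $(b_t)$: if one naively tried to capture an \emph{ascent} of $b$ from below $\epsilon$ to above $2\epsilon$, the left endpoint of the block (where $b$ is small) would sit inside the summation window and the lower bound $b_t\ge\epsilon$ would fail there, breaking the estimate. Tracking instead a \emph{descent} from above $2\epsilon$ down to below $\epsilon$ keeps $b_t\ge\epsilon$ on the entire window $I_j$ and leaves the only small value, $b_{q(s_j)}$, just outside it; this is the key maneuver, and everything else is routine.
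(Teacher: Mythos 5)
The paper contains no proof of this lemma to compare against: it is imported verbatim as Lemma A.5 of \citealp{mairal-2013-sgd} and used as a black box in the proof of Theorem \ref{theorem:SGD_MSLE_convergence}. Your argument is a correct, self-contained proof, and it follows the standard route for this classical result: the $K=0$ case is disposed of correctly, and for $K>0$ the band-crossing contradiction is sound — your choice to track \emph{descents} from above $2\epsilon$ to below $\epsilon$ keeps $b_t\geq\epsilon$ on every index of each block $I_j$, the telescoping plus the increment bound $|b_{t+1}-b_t|\leq Ka_t$ forces $\sum_{t\in I_j}a_t>\epsilon/K$, and the disjointness of the blocks then makes $\sum_t a_tb_t$ diverge, contradicting the hypothesis. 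One cosmetic point you should patch: you set $\epsilon=L/4$ with $L=\limsup_t b_t$, but nothing in the hypotheses rules out $L=\infty$, in which case $\epsilon$ is not a number. The fix is immediate — if $L=\infty$ take $\epsilon$ to be any positive constant (say $\epsilon=1$); the sets $U=\{t:b_t>2\epsilon\}$ and $D=\{t:b_t<\epsilon\}$ are still infinite and the rest of the argument runs verbatim — so this is a one-line repair rather than a genuine gap.
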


\begin{lemma}[Lemma 2 in \citealp{li-orabona-2019}]
\label{lemma:Lemma_2_LiOrabona2019}
Let $a_0>0$, $a_i\geq0$, $i=1,\ldots,T$ and $\beta>1$. Then $\sum_{t=1}^T\frac{a_t}{(a_0+\sum_{i=1}^ta_i)^\beta} < \frac{1}{(\beta-1)a_0^{\beta-1}}$.
\end{lemma}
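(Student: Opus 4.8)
The plan is to prove Lemma \ref{lemma:Lemma_2_LiOrabona2019} by a telescoping integral-comparison argument: the left-hand side is a Riemann-type sum that is dominated by the integral of the decreasing function $x\mapsto x^{-\beta}$. (This lemma will later serve to control the aggregate of the adaptive step sizes in the proof of Theorem \ref{theorem:SGD_MSLE_convergence}.) First I would set $S_0 = a_0$ and $S_t = a_0 + \sum_{i=1}^t a_i$ for $t = 1,\ldots,T$, so that $0 < a_0 = S_0 \le S_1 \le \ldots \le S_T < \infty$ and $a_t = S_t - S_{t-1}$; the $t$th summand is then $a_t/S_t^\beta = (S_t - S_{t-1})/S_t^\beta$.

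Next, since $x\mapsto x^{-\beta}$ is positive and strictly decreasing on $(0,\infty)$, on the interval $[S_{t-1},S_t]$ it attains its minimum $S_t^{-\beta}$ at the right endpoint, so
\[
\frac{a_t}{S_t^\beta} \;=\; \frac{S_t - S_{t-1}}{S_t^\beta} \;\le\; \int_{S_{t-1}}^{S_t} x^{-\beta}\,\mathrm{d}x,
\]
an inequality that also holds trivially (both sides vanish) when $a_t = 0$. Summing over $t = 1,\ldots,T$ telescopes the integrals:
\[
\sum_{t=1}^T \frac{a_t}{S_t^\beta} \;\le\; \int_{S_0}^{S_T} x^{-\beta}\,\mathrm{d}x \;=\; \frac{1}{\beta-1}\bigl(S_0^{1-\beta} - S_T^{1-\beta}\bigr).
\]
Since $\beta > 1$ forces $S_T^{1-\beta} > 0$, the right-hand side is strictly less than $\tfrac{1}{\beta-1}S_0^{1-\beta} = \tfrac{1}{(\beta-1)a_0^{\beta-1}}$, which is the asserted bound.

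I do not expect any real obstacle here; the only points needing a moment's care are (i) getting the monotonicity direction right, so that $S_t^{-\beta}$ (rather than $S_{t-1}^{-\beta}$) is the quantity controlled by the integral bound — this is exactly why the denominator in the lemma is the running sum \emph{including} $a_t$; (ii) the degenerate cases where some or all $a_i$ vanish, handled by the trivial vanishing of the corresponding terms; and (iii) the strictness of the final inequality, which comes from discarding the strictly positive quantity $S_T^{1-\beta}/(\beta-1)$. An induction on $T$ would also work but is messier; since the statement is Lemma 2 of \citet{li-orabona-2019}, one may alternatively just cite it, but the short self-contained argument above is preferable for completeness.
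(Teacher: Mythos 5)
Your argument is correct: the telescoping comparison $a_t/S_t^\beta \le \int_{S_{t-1}}^{S_t} x^{-\beta}\,\mathrm{d}x$ followed by discarding the positive term $S_T^{1-\beta}/(\beta-1)$ gives exactly the claimed strict bound, and the degenerate cases are handled properly. The paper itself offers no proof, simply citing Lemma 2 of \citet{li-orabona-2019}, whose proof is this same integral-comparison argument, so your write-up matches the standard approach.
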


\begin{lemma}[Lemma 3 in \citealp{li-orabona-2019}]
\label{lemma:Lemma_3_LiOrabona2019}
Let $f:\calX\subset\mathbb{R}^d\to\mathbb{R}$ be twice continuously differentiable whose minimum is attained at $\bx = \bx^*$ and suppose there exists a constant $L > 0$, such that for all $\bx,\by\in\calX$,
\[
\left\|\frac{\partial f}{\partial\bx}(\bx) - \frac{\partial f}{\partial\bx}(\by)\right\|_2\leq L\|\bx - \by\|_2. 
\]
Suppose $\bg(\bx, \bz)$ is a function of a random vector $\bz$, such that $\expect_z \bg(\bx, \bz) = \partial f(\bx)/\partial\bx$. Let $(\bz_t)_{t\geq 1}$ be a sequence of independent and identically distributed (i.i.d.) copies of $\bz$. 
Consider a sequence of iterates $\bx^{(t)}$ generated by
\[
\bx^{(t + 1)} = \bx^{(t)} - \bH_t \bg(\bx^{(t)}, \bz_t),
\]
where $\bH_t\in\mathbb{R}^{d\times d}$ is a step-size matrix for the $t$th iteration. Then the sequence $(\bx^{(t)})_{t\geq 1}$ satisfies the following inequality:
\begin{align*}
&\expect_{\bz_1,\ldots,\bz_N}\left[\sum_{t = 1}^N\left\langle \frac{\partial f}{\partial\bx}(\bx^{(t)}), 
\bH_t\frac{\partial f}{\partial\bx}(\bx^{(t)})
\right\rangle\right]\\
&\quad\leq f(\bx^{(1)}) - f(\bx^*) + \frac{L}{2}
\expect_{\bz_1,\ldots,\bz_N}\left\{\sum_{t = 1}^N\|\bH_t\bg(\bx^{(t)}, \bz_t)\|^2\right\}.
\end{align*}
\end{lemma}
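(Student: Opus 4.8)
\emph{Proof proposal.} The plan is to run the textbook ``descent lemma'' argument for stochastic first-order methods, wrapped in a telescoping sum and the tower property of conditional expectation. Write $\nabla f = \partial f/\partial\bx$ for brevity, and let $\calF_{t-1}$ denote the $\sigma$-algebra generated by $\bz_1,\ldots,\bz_{t-1}$. The one structural observation the argument hinges on is the \emph{predictability} of the step-size matrix: since $\bx^{(t)}$ is a deterministic function of the (fixed) initializer $\bx^{(1)}$ and of $\bz_1,\ldots,\bz_{t-1}$, both $\bx^{(t)}$ and the matrix $\bH_t$ --- which in our setting is built only from the past gradient measurements, cf.\ the adaptive rule \eqref{eqn:SGD_adaptive_stepsize} --- are $\calF_{t-1}$-measurable, hence may be treated as constants under $\expect[\cdot\mid\calF_{t-1}]$.

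First I would invoke the descent lemma: because $\nabla f$ is $L$-Lipschitz on $\calX$, for every $t$,
\[
f(\bx^{(t+1)}) \le f(\bx^{(t)}) + \langle \nabla f(\bx^{(t)}),\, \bx^{(t+1)}-\bx^{(t)}\rangle + \frac{L}{2}\|\bx^{(t+1)}-\bx^{(t)}\|_2^2 .
\]
Substituting the update $\bx^{(t+1)}-\bx^{(t)} = -\bH_t\bg(\bx^{(t)},\bz_t)$ rewrites the right-hand side as $f(\bx^{(t)}) - \langle \nabla f(\bx^{(t)}),\, \bH_t\bg(\bx^{(t)},\bz_t)\rangle + (L/2)\|\bH_t\bg(\bx^{(t)},\bz_t)\|_2^2$. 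Taking $\expect[\cdot\mid\calF_{t-1}]$ and using $\expect[\bg(\bx^{(t)},\bz_t)\mid\calF_{t-1}] = \nabla f(\bx^{(t)})$ together with the measurability noted above, the cross term collapses to $\langle \nabla f(\bx^{(t)}),\, \bH_t\nabla f(\bx^{(t)})\rangle$, which gives
\[
\expect[f(\bx^{(t+1)})\mid\calF_{t-1}] \le f(\bx^{(t)}) - \langle \nabla f(\bx^{(t)}),\, \bH_t\nabla f(\bx^{(t)})\rangle + \frac{L}{2}\,\expect[\|\bH_t\bg(\bx^{(t)},\bz_t)\|_2^2\mid\calF_{t-1}] .
\]

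Next I would move the quadratic form to the left, take full expectations via the tower property, and sum over $t=1,\ldots,N$; the $f(\bx^{(t)})$ terms telescope to $f(\bx^{(1)}) - \expect[f(\bx^{(N+1)})]$, and because every iterate remains in $\calX$ and $\bx^*$ minimizes $f$ over $\calX$ we have $\expect[f(\bx^{(N+1)})]\ge f(\bx^*)$. This produces exactly the asserted inequality
\[
\expect\Big[\,\sum_{t=1}^N \langle \nabla f(\bx^{(t)}),\, \bH_t\nabla f(\bx^{(t)})\rangle\Big] \le f(\bx^{(1)}) - f(\bx^*) + \frac{L}{2}\,\expect\Big[\,\sum_{t=1}^N \|\bH_t\bg(\bx^{(t)},\bz_t)\|_2^2\Big] .
\]
I do not expect a genuine obstacle here --- this is a routine stochastic-approximation estimate, essentially Lemma~3 of the cited work reproduced for completeness. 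The only point deserving care is the conditional-expectation bookkeeping in the middle step: one must confirm that $\bH_t$ is chosen from information available strictly before $\bz_t$ is drawn, so that it can be pulled through $\expect[\cdot\mid\calF_{t-1}]$. This predictability is precisely what the adaptive step-size construction \eqref{eqn:SGD_adaptive_stepsize} guarantees, which is what makes the lemma directly applicable in the subsequent proof of Theorem~\ref{theorem:SGD_MSLE_convergence}.
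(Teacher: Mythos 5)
Your proposal is correct and coincides with the argument behind the cited result: the paper does not prove this lemma itself but imports it from \cite{li-orabona-2019}, and your descent-lemma plus predictability-of-$\bH_t$ plus telescoping argument is precisely the standard proof given there, with the key bookkeeping point (that $\bH_t$ is $\calF_{t-1}$-measurable, as guaranteed by the delayed adaptive rule \eqref{eqn:SGD_adaptive_stepsize}) correctly identified. The only implicit assumption worth flagging is that the iterates stay in $\calX$ so that $\expect\{f(\bx^{(N+1)})\}\geq f(\bx^*)$ in the telescoping step, which in the paper's application is ensured by the step-halving projection onto the unit ball.
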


\begin{proof}\textbf{of Theorem \ref{theorem:SGD_MSLE_convergence}.}
The proof is similar to Theorem 1 in \cite{li-orabona-2019}, with some slight modifications. In the setting here, the expectation is taken with respect to the randomness of the stochastic gradient descent conditioned on the adjacency matrix, that is, the data and the ASE are viewed as deterministic. Here, we suppress the subscript $i\in[n]$ and use $\bx^{(t)}$ to denote the $t$th iterate in the optimization, and $\widehat\bx$ the maximizer of the average surrogate log-likelihood function $\widetilde{M}_{in}(\bx):=(1/n)\widetilde{\ell}_{in}(\bx)$.

For the surrogate log-likelihood function, by the computation of the gradient and Hessian of $\widetilde{M}_{in}$ in the proof of Theorem \ref{theorem:asymptotic_properties_of_MSLE}, they are bounded over $\{\bx_i:\|\bx_i\|_2\leq 1\}$ when $\max_j\|\widetilde{\bx}_j\|_2<1$.
So both $\widetilde M_{in}(\bx)$ and its gradient are Lipschitz in $\{\bx\in\mathrm{R}^d:\|\bx\|_2\leq1\}$ by the mean value theorem. Let $C_1$ and $C_2$ be the Lipschitz constants for $\widetilde M_{in}(\bx)$ and its gradient, respectively. In the context of Section \ref{sub:MSLE_computation}, the random vector $\bz_t$ corresponds to the randomly generated indices $(j_1^{(t)},\ldots,j_s^{(t)})$ in a single iteration of the mini-batch SGD algorithm, and $\bg(\bx^{(t)}, \bz_t)$ takes the form
\[
\bg(\bx^{(t)}, \bz_t) = \frac{1}{s}\sum_{k = 1}^s\frac{\partial m_{i}}{\partial\bx}(\bx^{(t)}, j_k^{(t)}).
\]
It is clear that $\expect_{\bz_t}\bg(\bx^{(t)}, \bz_t)$ coincides with the gradient of $\widetilde{M}_{in}(\bx^{(t)})$. 
Also, for the stochastic gradient $\bg(\bx^{(t)},z_t)$, it is easy to see that $\|\bg(\bx^{(t)},\bz_t)-\nabla\widetilde M_{in}(\bx^{(t)})\|_2\leq C_3$ for all $\bx^{(t)}\in B(\mathbf{0}_d,1)$.

Observe that
\begin{align*}
\sum_{t=1}^\infty \|\alpha_t\bg(\bx^{(t)},z_t)\|_2^2 &=  \sum_{t=1}^\infty\alpha_{t+1}^2\|\bg(\bx^{(t)},z_t)\|_2^2 +  \sum_{t=1}^\infty(\alpha_t^2-\alpha_{t+1}^2)\|\bg(\bx^{(t)},z_t)\|_2^2 \\
&\leq \frac{a_0^2}{2\epsilon b_0^{2\epsilon}} + \max_{t\geq1}\|\bg(\bx^{(t)},z_t)\|_2^2  \sum_{t=1}^\infty(\alpha_t^2-\alpha_{t+1}^2)\\
&\leq \frac{a_0^2}{2\epsilon b_0^{2\epsilon}} + \max_{t\geq1}\|\bg(\bx^{(t)},z_t)\|_2^2  \alpha_1^2\\
&\leq \frac{a_0^2}{2\epsilon b_0^{2\epsilon}} + 2\alpha_1^2\max_{t\geq1}\left(\|\|\nabla\widetilde M_{in}(\bx^{(t)})\|_2^2 + \|\nabla\widetilde M_{in}(\bx^{(t)}) - \bg(\bx^{(t)},z_t)\|_2^2\right)\\
&\leq \frac{a_0^2}{2\epsilon b_0^{2\epsilon}} + \frac{2a_0^2}{ b_0^{1+2\epsilon}}(C_1^2+C_3^2) < \infty,
\end{align*}
where in the first inequality we have used Lemma \ref{lemma:Lemma_2_LiOrabona2019}, in the third one the elementary inequality $\|\bx+\by\|_2^2 \leq 2\|\bx\|_2^2+2\|\by\|_2^2$.
Therefore, for any $m\in\mathbb{N}_+$, by Cauchy--Schwarz inequality, we have
\begin{align*}
\left\|\bx^{(N+m)}-\bx^{(N)}\right\|_2^2 &= \left\|\sum_{t=N}^{N+m-1}\bx^{(t+1)}-\bx^{(t)}\right\|_2^2 \leq \sum_{t=N}^{N+m-1}\left\|\bx^{(t+1)}-\bx^{(t)}\right\|_2^2 \\
&\leq\sum_{t=N}^{N+m-1}\left\|\alpha_t\bg(\bx^{(t)},z_t)\right\|_2^2,
\end{align*}
and the previous infinite sum being finite implies that $\lim_{N\to\infty}\left\|\bx^{(N+m)}-\bx^{(N)}\right\|_2=0$ a.s., that is, $\{\bx^{(t)}\}_t$ forms a Cauchy sequence, and thus converges to some point $\bx^*\in B(0,1)$ a.s.\@. Note that $\bx^*$ is a random variable with respect to the randomness of $z_t$. Next we need to show that $\bx^*$ is indeed the maximizer of the surrogate log-likelihood function.

By Lemma \ref{lemma:Lemma_3_LiOrabona2019}, taking the limit $T\to\infty$ and exchanging the expectation and the limits due to non-negative terms, we have
\[
\expect\left[ \sum_{t=1}^\infty  \alpha_t\left\|\nabla\widetilde M_{in}(\bx^{(t)})\right\|_2^2 \right] \leq \widetilde M_{in}(\bx^*) - \widetilde M_{in}(\bx_1) + \frac{C_2}{2}\expect\left[ \sum_{t=1}^\infty \|\alpha_t\bg(\bx^{(t)},z_t)\|_2^2 \right].
\]
With the right hand side being finite, we have
\[
\sum_{t=1}^\infty \alpha_t\left\|\nabla\widetilde M_{in}(\bx^{(t)})\right\|_2^2<\infty.
\]
Observe that by definition,
\[
\sup_{z_t,\bx^{(t)}}\left\|\alpha_t\bg(\bx^{(t)},z_t)\right\|_2 \leq \frac{a_0}{(b_0)^{1/2+\epsilon}}\sup_{z_t,\bx^{(t)}}\left\|\bg(\bx^{(t)},z_t)\right\|_2 < \infty,
\]
that is, the updating of the iterate is bounded. By assumption, the MSLE $\widehat\bx$ is in the interior of the feasible region. So there exists an integer $m^*$ such that for all $t\in\mathbb{N}_+$, the number of times that step-halving in the algorithm is called is no greater than $m^*$.
This implies that
\[
\frac{1}{m^*}a_0\left[b_0+\sum_{i=1}^{t-1}\|\bg(\bx^{(t)},z_t)\|_2^2\right]^{-(1/2+\epsilon)} \leq \alpha_t \leq a_0\left[b_0+\sum_{i=1}^{t-1}\|\bg(\bx^{(t)},z_t)\|_2^2\right]^{-(1/2+\epsilon)}
\]
for all $t\in\mathbb{N}_+$, which further implies that
\begin{align*}
\sum_{t=1}^\infty\alpha_t &\geq \frac{1}{m^*}\sum_{t=1}^\infty a_0\left[b_0+\sum_{i=1}^{t-1}\|\bg(\bx^{(t)},z_t)\|_2^2\right]^{-(1/2+\epsilon)} \\
&\geq \frac{1}{m^*}\sum_{t=1}^\infty a_0\left[b_0+2(t-1)(C_1^2+C_3^2)\right]^{-(1/2+\epsilon)} = \infty.
\end{align*}
Using the fact that both $\widetilde M_{in}(\bx)$ and $\nabla\widetilde M_{in}(\bx)$ are Lipschitz, we also have
\begin{align*}
&\left| \|\nabla\widetilde M_{in}(\bx_{t+1})\|_2^2 - \|\nabla\widetilde M_{in}(\bx^{(t)})\|_2^2 \right|\\
&\quad= \left( \|\nabla\widetilde M_{in}(\bx_{t+1})\|_2+\|\nabla\widetilde M_{in}(\bx^{(t)})\|_2 \right) \cdot \left| \|\nabla\widetilde M_{in}(\bx_{t+1})\|_2-\|\nabla\widetilde M_{in}(\bx^{(t)})\|_2 \right| \\
&\quad\leq 2C_1C_2\|\bx_{t+1}-\bx_{t}\|_2 
= 2C_1C_2\|\alpha_{t}\bg(\bx^{(t)},z_t)\|_2 
\leq 2C_1C_2(C_1+C_3) \alpha_t.
\end{align*}
Hence, we can use Lemma \ref{lemma:Lemma_A5_NeurIPS2013} to obtain that $\lim_{t\to\infty}\|\nabla\widetilde M_{in}(\bx^{(t)})\|_2 = 0$ a.s.. The continuity of $\nabla M_{in}(\bx)$ implies that $\bx^{(t)}\to\widehat\bx$ a.s..
\end{proof}

\section{Additional implementation details}
\subsection{Additional details of the algorithms}
This subsection provides
the detailed Metropolis--Hastings sampler for computing the joint posterior distribution $\pi_{n}(\bX\mid\bA)$ using the surrogate likelihood function. 
For each $i\in[n]$, we use the normal random walk truncated in the unit ball as the proposal distribution, with the covariance matrix being the inverse of
\[
n\widetilde\bG_{in} = \sum\limits_{j=1}^n \frac{\widetilde{\bx}_j \widetilde{\bx}_j\transpose}{\widetilde{\bx}_i\transpose \widetilde{\bx}_j (1 - \widetilde{\bx}_i\transpose\widetilde{\bx}_j)}.
\]
The above covariance matrix is the plug-in estimator of the asymptotic covariance matrix of the Bernstein--von Mises limit distribution. Below, we provide the detailed Metropolis--Hastings sampler in 
the algorithm below. The computation of the posterior distribution of the entire latent position matrix $\bX$ can be done by a parallelization over $i\in [n]$. 

\begin{algorithm}[h]
\caption{Metropolis--Hastings sampler for computing the posterior distribution of $\bX$.}
\label{alg:MCMC}
\begin{algorithmic}[1]
\State \textbf{Input:} The adjacency matrix $\bA = [A_{ij}]_{n\times n}$;\\
   \quad\ \  The embedding dimension $d$;\\
   \quad\ \  The tuning parameter $\sigma$;\\
   \quad\ \  Number of burn-in iterations $B$;\\
   \quad\ \  Number of post-burn-in samples $n_{\mathrm{mc}}$;\\
   \quad\ \  Thinning size $b$.
\State Compute the spectral decomposition of the adjacency matrix
\[
\bA = \sum_{i=1}^n \widehat\lambda_i \widehat\bu_i \widehat\bu_j\transpose,
\]
where $|\widehat{\lambda}_1| \geq |\widehat{\lambda}_2| \geq \ldots \geq |\widehat{\lambda}_n|$, and $\widehat{\textbf{u}}_i\transpose \widehat{\textbf{u}}_j = \mathbbm{1}(i=j)$ for all $i, j \in[n]$.

\State Compute the adjacency spectral embedding:
\[
  \widetilde{\mathbf{X}} = \widehat{\mathbf{X}}^{\mathrm{ASE}} = [\widehat{\mathbf{u}}_1, \ldots, \widehat{\mathbf{u}}_d] \cdot \mathrm{diag}(|\widehat{\lambda}_1|^{1/2}, \ldots, |\widehat{\lambda}_d|^{1/2}),
\]
and write $\widetilde{\mathbf{X}} = [\widetilde{\mathbf{x}}_1, \ldots, \widetilde{\mathbf{x}}_n]\transpose \in \mathbb{R}^{n\times d}$. Let = $\widetilde{p}_{ij} = \widetilde{\mathbf{x}}_i\transpose\widetilde{\mathbf{x}}_j$ for all $i,j\in[n]$.

\State For $i = 1,2,\ldots,n$

\State \quad Initialize $\bx_i^{(1)} = \widetilde{\bx}_i$.

\State \quad For $t = 2$ to $B + n_{\mathrm{mc}}\times b$

\State \quad  \quad Generate $\bx_i'\sim N\left(\bx_i^{(t)}, \sigma^2\widetilde{\bG}_{in}^{-1}/n\right)\cdot\mathbbm{1}(||\bx_i||_2<1)$. 

\State \quad  \quad Generate $\alpha_t\sim\mathrm{Unif}(0,1)$.

\State \quad\quad If $\log\alpha_t < \widetilde{\ell}_{in}(\bx_i') - \widetilde{\ell}_{in}(\bx_i^{(t)}) + \log\pi(\bx_i') - \log\pi(\bx_i^{(t)})$

\State \quad\quad\quad Set $\bx_i^{(t+1)} \gets \bx_i'$;

\State \quad\quad Else

\State \quad\quad\quad Set $\bx_i^{(t+1)} \gets \bx_i^{(t)}$.

\State \quad\quad End If

\State \quad End For

\State End For

\State \textbf{Output: } $\bX^{(B + 1 + b\times N)}$ for $N = 1, 2, \ldots, \lceil (n_{\mathrm{mc}} - 1)/b\rceil$, where $\bX^{(t)} = [\bx_1^{(t)}, \ldots, \bx_n^{(t)}]\transpose$. 

\end{algorithmic}
\end{algorithm}

\subsection{Convergence diagnostics of the Metropolis--Hastings sampler}
In this subsection, we provide some convergence diagnostics of Metropolis--Hastings sampler.
Specifically, we choose one realization of the simulated data in the case of the stochastic block model with $d=2$ and $n=2000$ (Section \ref{sub:SBM_example} of the manuscript). The parameters of this random dot product graph are the entries of a $2000\times2$ matrix, so we get $2000\times2=4000$ Markov chains as the output of Metropolis--Hastings sampler. The total number of iterations in one Markov chain is $2000$, where we discard the first 2000 as burn-in and apply a thinning of 5 to the rest, resulting in a chain of length 200. To diagnose convergence, we use \texttt{coda::heidel.diag()} in R, which uses the Cramer--von Mises statistic to test the null hypothesis that the sampled values come from a stationary distribution.

Below, Fig.\ \ref{fig:simulation_mcmc_diag} presents the numerical diagnostics results. From the histogram of the $4000$ $p$-values from the output of \texttt{coda::heidel.diag()} applied to the $4000$ Markov chains, we see that there are very few $p$-values that are less than 0.05 (only 36 among the 4000 $p$-values in this trial). Furthermore, with different trials of Metropolis--Hastings sampler, the specific parameters which give the small $p$-values are different. So we can say that the occurrence of some small $p$-values is very likely due to the randomness in the data and in the Metropolis--Hastings sampler. A histogram of the accept rates from the Metropolis--Hastings algorithm of the 2000 vertices is provided as well.
To investigate more closely, the trace plot and auto-correlation function (ACF) plot of the second coordinate of the $808$th vertex which gives a $p$-value smaller than $0.05$ in this trial are provided. We can see that although it gives a small $p$-value, the trace plot and the ACF plot of the Metropolis--Hastings sample are not too abnormal.

\begin{figure}[htbp]
    \begin{subfigure}{0.5\linewidth}
    \includegraphics[width=\linewidth]{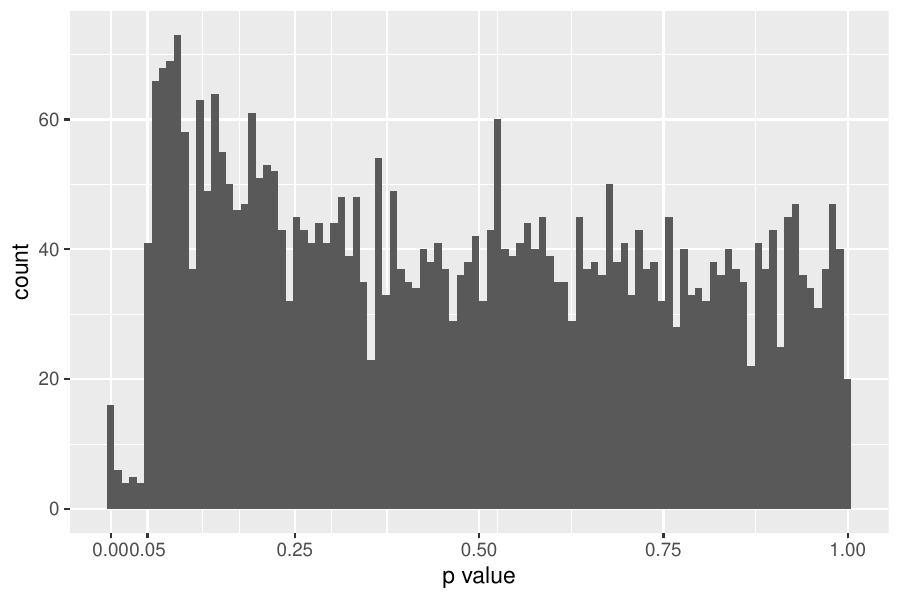}
    \end{subfigure}
    \begin{subfigure}{0.5\linewidth}
    \includegraphics[width=\linewidth]{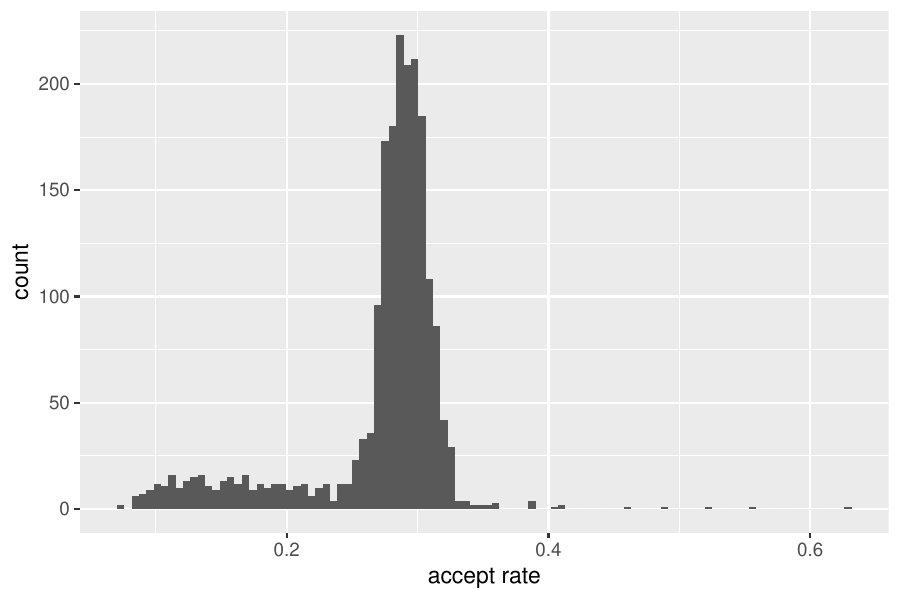}
    \end{subfigure}
    
    \begin{subfigure}{0.5\linewidth}
    \includegraphics[width=\linewidth]{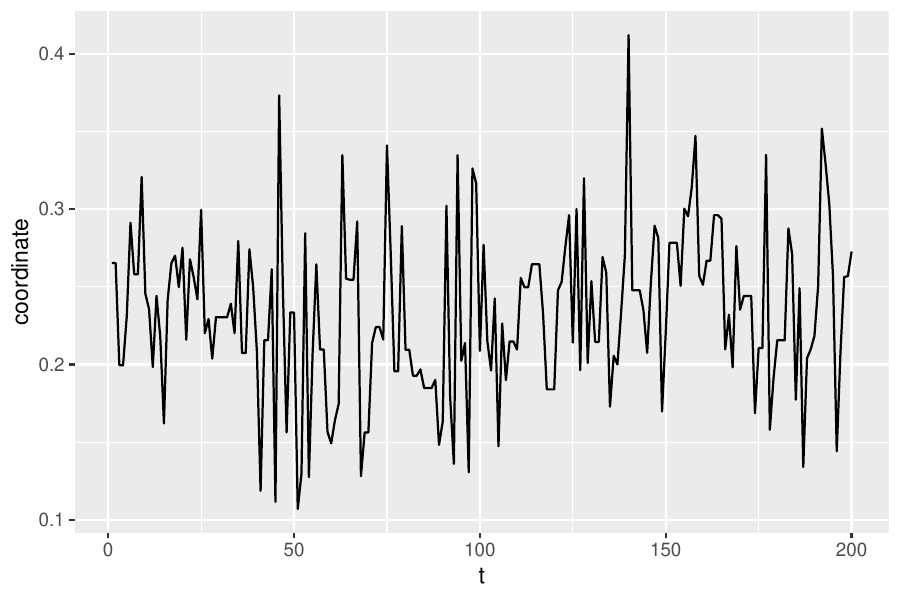}
    \end{subfigure}
    \begin{subfigure}{0.5\linewidth}
    \includegraphics[width=\linewidth]{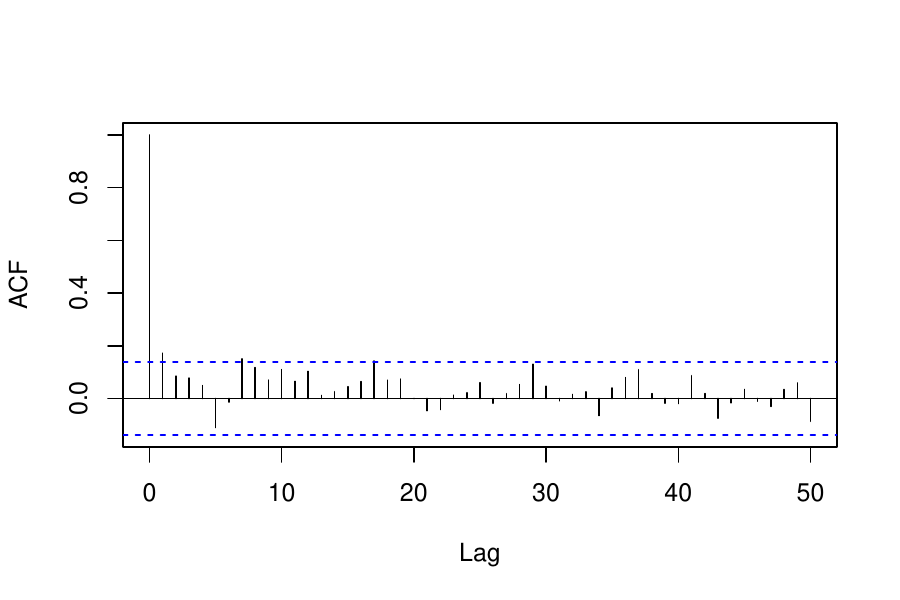}
    \end{subfigure}
    \caption{Convergence diagnostics for the simulation example in Section \ref{sub:SBM_example} of the manuscript. Top left panel: histogram of $4000$ p-values. Top right panel: histogram of $2000$ accept rates. Bottom left panel: Trace plot of a parameter whose Metropolis--Hastings sample gives a p-value less than 0.05. Bottom right panel: ACF plot of a parameter whose Metropolis--Hastings sample gives a p-value less than 0.05.}
    \label{fig:simulation_mcmc_diag}
\end{figure}

Next, we invectigate the convergence of the Metropolis--Hastings sampler in the Wikipedia graph dataset (Section \ref{subsection:wikipedia-graph} of the manuscript). For each $d$, there are $1382\times d$ parameters to estimate, so we get $1382\times d$ markov chains as the output of Metropolis--Hastings sampler. The total number of iterations in one Metropolis--Hastings sampler is $4000(2d+1)$, where we discard the first half as burn-in and apply a thinning of $4d$, resulting in a chain of length slightly more than 1000.

For $d = 1,\ldots,15$, the histograms of $1382$ accept rates and of $1382\times d$ p-values are provided in the upper and lower panel of Fig.\ \ref{fig:wikipedia_mcmc_diag_1}, respectively.
\begin{figure}[htbp]
    \centering{
    \includegraphics[width=\linewidth]{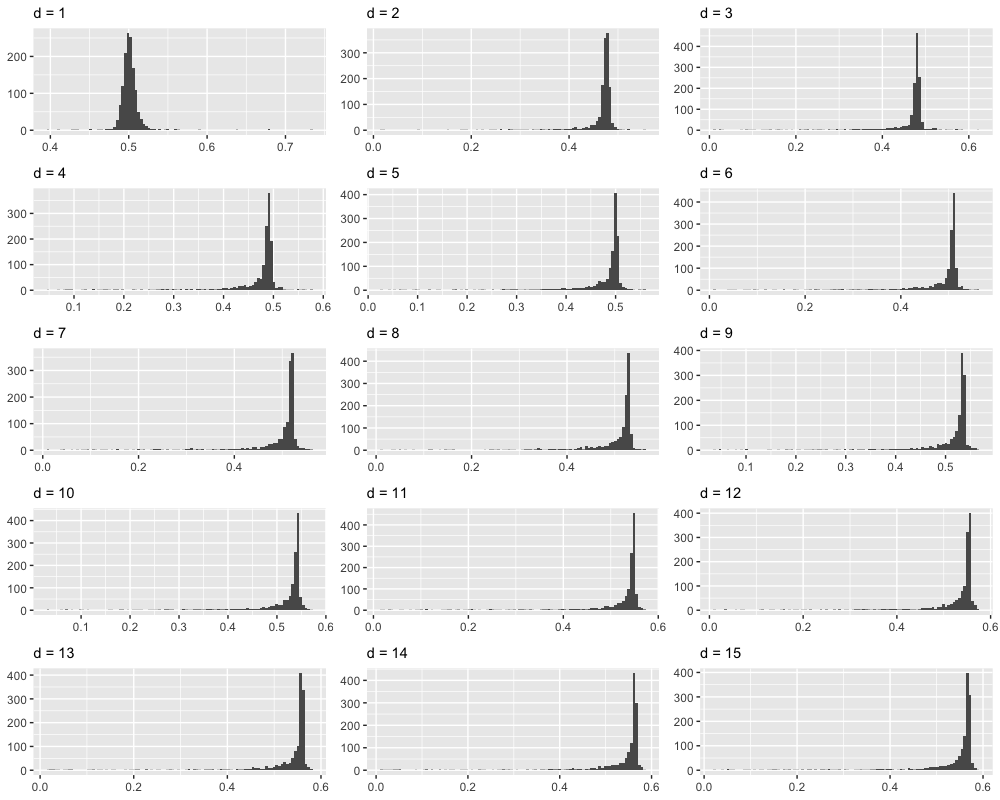}
    }
    \caption{Convergence diagnostics for the Wikipedia graph data example in Section \ref{subsection:wikipedia-graph} of the manuscript: Histograms of accept rates, where the horizontal axis represents accept rates and the vertical axis represents counts.}
\end{figure}
\begin{figure}[htbp]    
    \centering{
    \includegraphics[width=\linewidth]{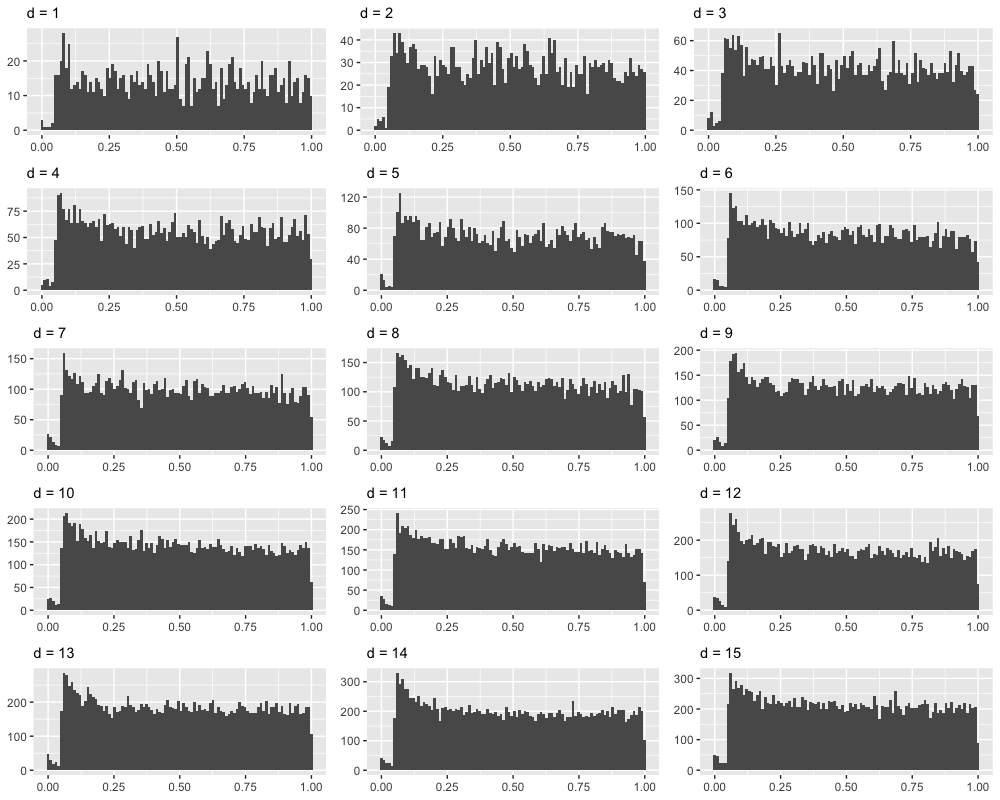}
    }
    \caption{Convergence diagnostics for the Wikipedia graph data example in Section \ref{subsection:wikipedia-graph} of the manuscript. Top panel: histograms of accept rates, where the horizontal axis represents accept rates and the vertical axis represents counts. Bottom panel: Histograms of p-values, where the horizontal axis represents p-values and the vertical axis represents counts.}
    \label{fig:wikipedia_mcmc_diag_1}
\end{figure}

To investigate more closely, the trace plots and autocorrelation function (ACF) plots of two chains which give p-values smaller than $0.05$ are provided, as in Fig.\ \ref{fig:wikipedia_mcmc_diag_2}.
\begin{figure}[htbp]
    \begin{subfigure}{0.5\linewidth}
    \includegraphics[width=\linewidth]{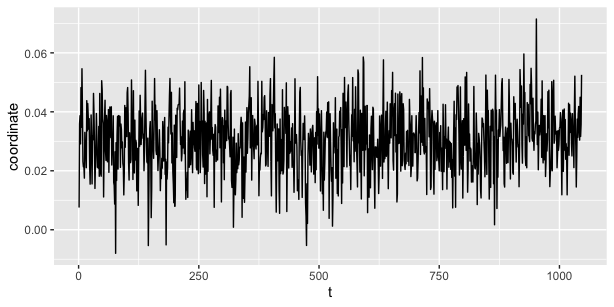}
    \end{subfigure}
    \begin{subfigure}{0.5\linewidth}
    \includegraphics[width=\linewidth]{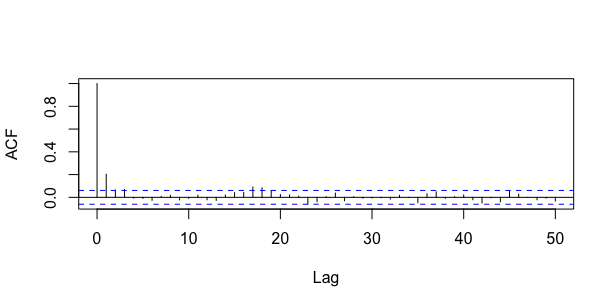}
    \end{subfigure}
    
    \begin{subfigure}{0.5\linewidth}
    \includegraphics[width=\linewidth]{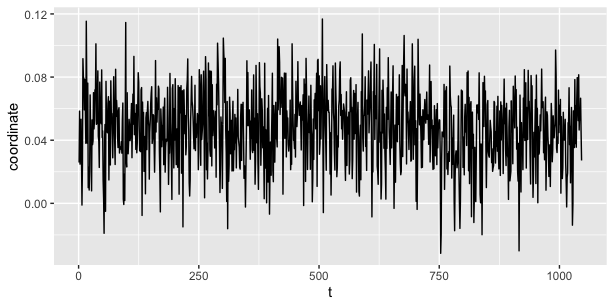}
    \end{subfigure}
    \begin{subfigure}{0.5\linewidth}
    \includegraphics[width=\linewidth]{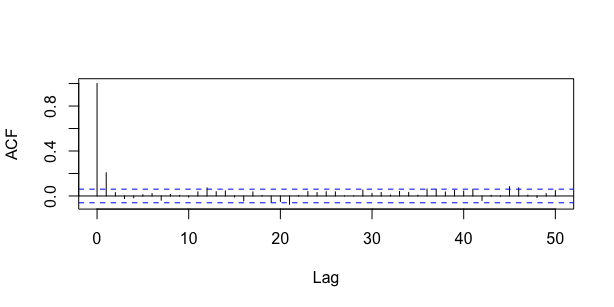}
    \end{subfigure}
    \caption{Convergence diagnostics for the Wikipedia graph data example in Section \ref{subsection:wikipedia-graph} of the manuscript. Top left panel: Trace plot of the Markov chain of the first coordinate of the $354th$ vertex with p-value = 0.0019, $d=11$. Top right panel: ACF plot of the Markov chain of the first coordinate of the $354th$ vertex, $d=11$. Bottom left panel: Trace plot of the Markov chain of the tenth coordinate of the $14th$ vertex with p-value = 0.0004, $d=11$. Bottom right panel: ACF plot of the Markov chain of the tenth coordinate of the $14th$ vertex, $d=11$.}
    \label{fig:wikipedia_mcmc_diag_2}
\end{figure}

\vskip 0.2in
\bibliography{reference}

\end{document}